\tikzset{
	>=stealth',
	-={stealth',ultra thick,scale=3} 
	node distance=1cm, 
	every state/.style={thick, fill=gray!10}, 
	initial text=$ $, 
}
\newtheorem{definition}{Definition}
\newtheorem{theorem}[definition]{Theorem}
\newtheorem{proposition}[definition]{Proposition}
\newtheorem{lemma}[definition]{Lemma}
\newtheorem{corollary}[definition]{Corollary}
\newtheorem{fact}{Fact}
\newtheorem{remark}[definition]{Remark}
\newtheorem{example}[definition]{Example}
\let\ab\allowbreak
\mathchardef\hyphen=45 
\definecolor{Green2}{HTML}{3EA514}
\definecolor{Red2}{HTML}{FF0400}
\definecolor{Orange2}{HTML}{E6670A}
\definecolor{Violet2}{HTML}{CE1ff9}
\DeclareMathAlphabet{\mathpzc}{OT1}{pzc}{m}{it}
\newcommand{\NN}{\mathbb{N} }
\newcommand{\WW}{\mathbb{W}}
\newcommand{\F}{\mathcal{F}}
\newcommand{\G}{\mathcal{G}}
\renewcommand{\L}{\mathcal{L}}
\newcommand{\M}{\mathcal{M}}
\newcommand{\A}{\mathcal{A}}
\newcommand{\B}{\mathcal{B}}
\newcommand{\Z}{\mathcal{Z}}
\renewcommand{\P}{\mathcal{P}}
\renewcommand{\O}{\mathcal{O}}
\renewcommand{\S}{\mathcal{S}}
\newcommand{\dd}{\delta}
\renewcommand{\ss}{\sigma}
\newcommand{\rr}{\varrho}
\renewcommand{\aa}{\alpha}
\renewcommand{\tt}{\tau}
\renewcommand{\SS}{\Sigma}
\newcommand{\GG}{\Gamma}
\newcommand{\oo}{\omega}
\renewcommand{\gg}{\gamma}
\newcommand{\ee}{\varepsilon}
\newcommand{\ZF}{"\mathcal{\Z_{\F}}"}
\newcommand{\ZFi}[1]{\Z_{\F_{#1}}}
\newcommand{\mF}{"\mathfrak{mem}_{\mathit{gen}}(\F)"}
\newcommand{\mFn}{"\mathfrak{mem}_{\mathit{gen}}(\F_n)"}
\newcommand{\cmF}{"\mathfrak{mem}_{\mathit{chrom}}(\F)"}
\newcommand{\mr}{\mathfrak{rabin}}
\newcommand{\rlg}{"\mathfrak{rabin}(L_G)"}
\newcommand{\indmF}{"\mathfrak{mem}_{\mathit{ind}}(\F)"}
\newcommand{\mFfree}{"\mathfrak{mem}_{\mathit{gen}}^{\ee\hyphen \mathrm{free}}(\F)"}
\newcommand{\cmFfree}{"\mathfrak{mem}_{\mathit{chrom}}^{\ee\hyphen \mathrm{free}}(\F)"}
\newcommand{\indmFfree}{"\mathfrak{mem}_{\mathit{ind}}^{\ee\hyphen \mathrm{free}}(\F)"}
\newcommand{\minf}{"\mathit{Inf}"}
\newcommand{\macc}{\mathit{Acc}}
\newcommand{\mletters}{"\mathit{Letters}"}
\newcommand{\maxpr}{"\mathit{Max\hyphen Priority}"}
\newcommand{\mstates}{"\mathit{States}"}
\newcommand{\mnodes}{\mathit{Nodes}}
\newcommand{\tinput}{\mathtt{Input}}
\newcommand{\NP}{\mathtt{NP}}
\newcommand{\NPc}{\mathtt{NP}\hyphen\text{complete}}
\newcommand{\CN}{\mathtt{Chromatic} \hyphen \mathtt{Number}}
\newcommand{\nextmove}{\mathtt{next} \hyphen \mathtt{move}}
	\title{On the Minimisation of Transition-Based Rabin Automata and the Chromatic Memory Requirements of Muller Conditions}
\author{Antonio Casares\thanks{This work was done while the author was participating in the program \emph{Theoretical Foundations of Computer Systems} at the Simons Institute for the Theory of Computing.}}
\affil[]{LaBRI, Universit{\'e} de Bordeaux, France\\ \texttt{antonio.casares-santos@labri.fr}}
\date{}
\begin{document}

	\maketitle

	\begin{abstract}
		In this paper, we relate the problem of determining the chromatic memory requirements of Muller conditions with the minimisation of transition-based Rabin automata. Our first contribution is a proof of the $\NP$-completeness of the minimisation of transition-based Rabin automata. Our second contribution concerns the memory requirements of games over graphs using  Muller conditions. A memory structure is a finite state machine that implements a strategy and is updated after reading the edges of the game; the special case of chromatic memories being those structures whose update function only consider the colours of the edges. We prove that the minimal amount of chromatic memory required in games using a given Muller condition is exactly the size of a minimal Rabin automaton recognising this condition. Combining these two results, we deduce that finding the chromatic memory requirements of a Muller condition is $\NPc$. This characterisation also allows us to prove that chromatic memories cannot be optimal in general, disproving a conjecture by Kopczyński.

	\end{abstract}
	
	\section{Introduction}\label{section : Intro}
	\subparagraph*{Games and memory.}Automata on infinite words and infinite duration games over graphs are well established areas of study in Computer Science, being central tools used to solve problems such as the synthesis of reactive systems (see for example the Handbook \cite{PitermanPnueli2018Handbook}). Games over graphs are used to model the interaction between a system and the environment, and winning strategies can be used to synthesize controllers ensuring that the system satisfies some given specification. 
The games we will consider are played between two players (Eve and Adam), that alternatively move a pebble through the edges of a graph forming an infinite path. In order to define which paths are winning for the first player, Eve, we suppose that each transition in the game produces a colour in a set $\GG$, and a winning condition is defined by a subset $\WW\subseteq \GG^\oo$. 
A fundamental parameter of the different winning conditions is the amount of memory that the players may require in order to define a winning strategy in games where they can force a victory. 
This parameter will influence the complexity of algorithms solving games that use a given winning condition, as well as the resources needed in a practical implementation of such a strategy as a controller for a reactive system.

A "memory structure" for Eve for a given game is a finite state machine that implements a strategy: for every position of the game, each state of the memory determines what move to perform next. After a transition of the game takes place, the memory state is updated according to an update function. We consider $3$ types of memory structures:
\begin{itemize}
	\item General memories.
	\item "Chromatic memories": if the update function only takes as input the colour produced by the transition of the game.
	\item "Arena-independent memories" for a condition $\WW$: if the memory structure can be used to implement winning strategies in any game using the condition $\WW$.
\end{itemize}

In this work, we study these three notions of memories for \kl(condition){Muller conditions}, an important class of winning conditions that can be used to represent any $\oo$-regular language via some deterministic "automaton". 
Muller conditions appear naturally, for example, in the synthesis of reactive systems specified in Linear Temporal Logic \cite{PR89Synthesis, MullerSickert17LTLtoDeterministic}.

In the seminal paper \cite{DJW1997memory}, the authors establish the exact general memory requirements of Muller conditions, giving matching upper and lower bounds for every Muller condition in terms of its "Zielonka tree". However, the memory structures giving the upper bounds are not "chromatic". In his PhD thesis \cite{Kopczynski2006Half,Kopczynski2008PhD}, Kopczyński raised the questions of whether minimal "memory structures" for a given game can be chosen to be "chromatic", and whether "arena-independent memories" can be optimal, that is, if for each condition $\WW$ there is a game won by Eve where the optimal amount of memory she can use is the size of a minimal "arena-independent memory" for $\WW$. Another question appearing in \cite{Kopczynski2006Half,Kopczynski2008PhD} concerns the influence in the memory requirements of allowing or not $\ee$-transitions in games (that is, transitions that do not produce any colour). In particular, Kopczyński asks whether all conditions that are "half-positionally determined" over transition-coloured games without $\varepsilon$-transitions are also half-positionally determined when allowing $\varepsilon$-transitions (it was already shown in~\cite{Zielonka1998infinite} that it is not the case in state-coloured games).

In this work, we characterise the minimal amount of "chromatic memory" required by Eve in games using a \kl(condition){Muller condition} as the size of a minimal deterministic transition-based "Rabin" automaton recognising the Muller condition, that can also be used as an "arena-independent" memory (Theorem~\ref{theorem : EquivRabin-ChromaticMemory}); further motivating the study of the minimisation of transition-based Rabin automata. We prove that, in general, this quantity is strictly greater than the "general memory requirements" of the Muller condition, answering negatively the question by Kopczyński (Proposition~\ref{prop : chromatic_strictly-greater}). Moreover, we show that the "general memory requirements" of a Muller condition are different over "$\ee$-free" games and over games with $\ee$-transitions (Proposition~\ref{prop : epsilon-more-memory}), but that this is no longer the case when considering the "chromatic memory requirements" (Theorem~\ref{theorem : EquivRabin-ChromaticMemory}). In particular, in order to obtain the lower bounds of \cite{DJW1997memory} we need to use $\varepsilon$-transitions. However, the question stated in \cite{Kopczynski2006Half,Kopczynski2008PhD} of whether allowing $\ee$-transitions could have an impact on the "half-positionality" of conditions remains open, since it cannot be the case for \kl(condition){Muller conditions} (Lemma~\ref{lemma : positionalRabin}).

\subparagraph*{Minimisation of transition-based automata.}

Minimisation is a well studied problem for many classes of automata. Automata over finite words can be minimised in polynomial time \cite{Hopcroft71nLognAlgorithm}, and for every regular language there is a canonical minimal automaton recognising it. For automata over infinite words, the status of the minimisation problem for different models of $\oo$-automata is less well understood. Traditionally, the acceptance conditions of $\oo$-automata have been defined over the set of states; however, the use of transition-based automata is becoming common in both practical and theoretical applications (see for instance \cite{GiannakopoulouLerda2002Transitions}), and there is evidence that decision problems relating to transition-based models might be easier than the corresponding problems for state-based ones.
The minimisation of state-based Büchi automata has been proven to be $\NPc$ by Schewe (therefore implying the $\NP$-hardness of the minimisation of state-based "parity", "Rabin" and "Streett" automata), both for deterministic \cite{Schewe10MinimisingNPComplete} and Good-For-Games (GFG) automata \cite{Schewe20MinimisingGFG}. However, these reductions strongly use the fact that the acceptance condition is defined over the states and not over the transitions. Abu Radi and Kupferman have proven that the minimisation of GFG-transition-based co-Büchi automata can be done in polynomial time and that a canonical minimal GFG-transition-based automaton can be defined for co-Büchi languages \cite{AbuRadiKupferman19Minimizing, AbuRadiKupferman20Canonicity}. 
This suggests that transition-based automata might be a more adequate model for $\oo$-automata, raising many questions about the minimisation of different kinds of transition-based automata (Büchi, "parity", "Rabin", GFG-parity, etc).
Moreover, Rabin automata are of great interest, since the determinization of Büchi automata via Safra's construction naturally provides deterministic transition-based Rabin automata \cite{Safra1988onthecomplexity, Schewe2009tighter}, and, as proven in Theorem~\ref{theorem : EquivRabin-ChromaticMemory}, these automata provide minimal "arena-independent memories" for Muller games.

In Section \ref{subsec: MinimisationRabin}, we prove that the minimisation of transition-based "Rabin" automata is $\NPc$ (Theorem~\ref{theorem: MinimisingRabinNP-comp}). The proof consists in a reduction from the "chromatic number" problem of graphs. This reduction uses a particularly simple family of $\oo$-regular languages: languages $L\subseteq \SS^\oo$ that correspond to \kl(condition){Muller conditions}, that is, whether a word $w\in \SS^\oo$ belongs to $L$ or not only depends in the set of letters appearing infinitely often in $w$ (we called them \kl(language){Muller languages}). 
A natural question is whether we can extend this reduction to prove the $\NP$-hardness of the minimisation of other kinds of transition-based automata, like "parity" or "generalised Büchi" ones. However, we prove in Section~\ref{subsec: MinimisationParity} that the minimisation of "parity" and "generalised Büchi" automata recognising \kl(language){Muller languages} can be done in polynomial time. This is based in the fact that the minimal parity automaton recognising a \kl(condition){Muller condition} is given by the "Zielonka tree" of the condition \cite{CCF21Optimal, MeyerSickert21OptimalPractical}.\\

These results allow us to conclude that determining the "chromatic memory requirements" of a \kl(condition){Muller condition} is $\NPc$ even if the condition is represented by its "Zielonka tree" (Theorem~\ref{theorem : ChromaticMemoryNP-hard}). This is a surprising result, since the "Zielonka tree" of a \kl(condition){Muller condition} allows to compute in linear time the "non-chromatic memory requirements" of it \cite{DJW1997memory}.

\subparagraph*{Related work.}

As already mentioned, the works \cite{DJW1997memory, Kopczynski2006Half, Zielonka1998infinite} extensively study the "memory requirements" of \kl(condition){Muller conditions}. In the paper \cite{ColcombetN2006PositionalEdge}, the authors characterise "parity" conditions as the only prefix-independent conditions that admit positional strategies over transition-co\-lou\-red infinite graphs. This characterisation does not apply to state-coloured games, which supports the idea that transition-based systems might present more canonical properties. Conditions that admit "arena-independent" memories are characterised in \cite{BRORV20FiniteMemory}, extending the work of \cite{GimbertZielonka2005Memory} characterising conditions that accept positional strategies over finite games. The "memory requirements" of generalised safety conditions have been established in \cite{ColcombetFH14PlayingSafe}.
The use of Rabin automata as memories for games with $\oo$-regular conditions have been fruitfully used in \cite{Colcombetz2009tight} in order to obtain theoretical lower bounds on the size of deterministic Rabin automata obtained by the process of determinisation of Büchi automata.

Concerning the minimisation of automata over infinite words, beside the aforementioned results of \cite{Schewe10MinimisingNPComplete, Schewe20MinimisingGFG, AbuRadiKupferman19Minimizing}, it is also known that weak automata can be minimised in $\O(n\log n)$ \cite{Loding01WeakMinimisation}. The algorithm minimising a "parity" "automaton" recognising a \kl(language){Muller language} presented in the proof of Proposition~\ref{prop : MinimisationParityPoly} can be seen as a generalisation of the algorithm appearing in \cite{CartonMaceiras99RabinIndex} computing the Rabin index of a parity automaton. Both of them have their roots in the work of Wagner \cite{Wagner1979omega}.

\paragraph*{Organisation of this paper.}

In Section~\ref{section : MinimisingRabin} we discuss the minimisation of transition-based Rabin and parity automata. We give the necessary definitions in Section~\ref{subsection : AutomataDefinitions}, in Section~\ref{subsec: MinimisationRabin} we show the $\NP$-completeness of the minimisation of Rabin automata and in Section~\ref{subsec: MinimisationParity} we prove that we can minimise transition-based parity and generalised Büchi automata recognising Muller languages in polynomial time.

In Section~\ref{section : MemoryDefinitions} we introduce the definitions of games and memory structures, and we discuss the impact on the memory requirements of allowing or not $\ee$-transitions in the games.

In Section~\ref{section : ChromaticMemory}, the main contributions concerning the chromatic memory requirements of Muller conditions are presented.

%

	\section{Minimising transition-based automata}\label{section : MinimisingRabin}
	In this section, we present our main contributions concerning the minimisation of transition-based automata. We start in Section~\ref{subsection : AutomataDefinitions} by giving some basic definitions and results related to automata used throughout the paper. In Section~\ref{subsec: MinimisationRabin} we show a reduction from the 
problem of determining the "chromatic number" of a graph to the minimisation of "Rabin" automata, proving the $\NP$-completeness of the latter. Moreover, the languages used in this proof are \kl(language){Muller languages}. In Section~\ref{subsec: MinimisationParity}  we prove that, on the contrary, we can minimise "parity" and "generalised Büchi" automata recognising Muller languages in polynomial time.

\subsection{Automata over infinite words}\label{subsection : AutomataDefinitions}

\subsubsection*{General notations}
The greek letter $\oo$ stands for the set $\{0,1,2,\dots\}$. Given a set $A$, we write $\P(A)$ to denote its power set and $|A|$ to denote its cardinality. A word over an alphabet $\SS$ is a sequence of letters from $\SS$. We let $\SS^*$ and $\SS^\oo$ be the set of finite and infinite words over $\SS$, respectively. For an infinite word $w\in \SS^\oo$, we write $\AP""\mathit{Inf}(w)""$ to denote the set of letters that appear infinitely often in $w$. We will extend functions $\gg: A \rightarrow \GG$ to $A^*$, $A^\oo$ and $\P(A)$ in the natural way, without explicitly stating it.

A (directed)  \AP""graph"" $G=(V,E)$ is given by a set of vertices $V$ and a set of edges $E\subseteq V\times V$. A graph $G=(V,E)$ is \AP""undirected"" if every pair of vertices $(v,u)$ verifies $(v,u)\in E \; \Leftrightarrow \; (u,v)\in E$. A graph $G=(V,E)$ is \AP""simple"" if $(v,v)\notin E$ for any $v\in V$. 
A subgraph $\S$ of $G$ is \AP""strongly connected"" if for every pair of vertices $v_1, v_2$ in $\S$, there is a path in $\S$ from $v_1$ to $v_2$. A \AP""strongly connected component"" of $G$ is a maximal strongly connected subgraph of $G$. We say that a "strongly connected component" $\S$ is \AP""ergodic"" if no vertex of $\S$ has outgoing edges leading to vertices not in $\S$.

A \AP""coloured graph"" $G=(V,E)$ is given by a set of vertices $V$ and a set of edges $E\subseteq V\times C_1\times \dots\times C_k \times V$, where $C_1,\dots , C_k$ are sets of colours.

\subsubsection*{Automata}
An \AP""automaton"" is a tuple $\A=(Q,\Sigma, q_0, \dd, \GG, \macc)$, where $Q$ is a finite set of states, $\SS$ is a finite input alphabet, $q_0\in Q$ is an initial state, $\dd: Q \times \SS \rightarrow Q \times \GG$ is a transition function, $\GG$ is an output alphabet and $\macc$ is an accepting condition defining a subset $\WW \subseteq \GG^\oo$ (the conditions will be defined more precisely in the next paragraph). In this paper, all automata will be deterministic and complete (that is, $\dd$ is a function) and transition-based (that is, the output letter that is produced depends on the transition, and not only on the arrival state). The \intro(automata){size} of an automaton is its number of states, $|Q|$.

Given an input word $w=w_0w_1w_2\dots \in \SS^\oo$, the \AP""run over"" $w$ in $\A$ is the only sequence of pairs  $(q_0,c_0),(q_1,c_1),\dots \in Q\times \GG$ verifying that $q_0$ is the initial state and $\dd(q_i,w_i)=(q_{i+1},c_i)$. The \AP\intro(automata){output} produced by $w$ is the word $c_0c_1c_2\dots \in \GG^\oo$. A word $w\in \SS^\oo$ is \emph{accepted by} the automaton $\A$ if its output belongs to the set $\WW\subseteq \GG^\oo$ defined by the accepting condition. The \AP""language accepted by"" an automaton $\A$, written $\L(\A)$, is the set of words accepted by $\A$. Given two automata $\A$ and $\B$ over the same input alphabet $\SS$, we say that they are \AP""equivalent"" if $\L(\A)=\L(\B)$.

Given an automaton $\A$, the \AP""graph associated to"" $\A$, denoted $G(\A)$, is the "coloured graph" $G(\A)= (Q, E_\A)$, whose set of vertices is $Q$, and the set of edges $E_\A\subset Q\times \SS \times \GG \times Q$ is given by $(q,a,c,q')\in E_\A$ if $\dd(q,a)=(q',c)$. We denote by $\AP""\iota""\colon E_\A \to \SS$ the projection over the second component and by $\gg\colon E_\A \to \GG$ the projection over the third one.

A \AP""cycle"" of an automaton $\A$ is a subset of edges $\ell\subseteq E_\A$ such that there is a state $q\in Q$ and a path in $"G(\A)"$ starting and ending in $q$ passing through exactly the edges in $\ell$. We write $\AP""\gg(\ell)""=\bigcup_{e\in \ell}\gg(e)$ to denote the set of colours appearing in the cycle $\ell$. A state $q\in Q$ is contained in a cycle  $\ell\subseteq E_\A$ if there is some edge in $\ell$ whose first component is $q$. We write $\AP""\mathit{States}""(\ell)$ to denote the set of states contained in $\ell$.

\subsubsection*{Acceptance conditions}

Let $\GG$ be a set of colours. We define next some of the acceptance conditions used to define subsets $\WW \subseteq \GG^\oo$. All the subsequent conditions verify that the acceptance of a word $u\in \GG^\oo$ only depends on the set $\minf(u)$.

\begin{description}
	
	\item[Muller.] A  \AP\intro(condition){Muller condition} is given by a family of subsets  $\F=\{S_1,\dots,S_k\}$, $S_i\subseteq \Gamma$. A word $u\in \Gamma^\oo$ is accepting if $\minf(u)\in \F$.
	
	\item[Rabin.] A  \AP""Rabin condition"" is represented by a family of \emph{Rabin pairs}, $R=\ab\{(E_1,F_1),\dots,\ab(E_r,F_r)\}$, where $E_i,F_i\subseteq \Gamma$. A word $u\in \Gamma^\oo$ is accepting if $\minf(u) \cap E_i \neq \emptyset $ and $ \minf(u) \cap F_i = \emptyset $ for some index $i\in \{1,\dots,r\}$.
	
	\item[Streett.] A  \AP""Streett condition"" is represented by a family of pairs $S=\{(E_1,F_1),\dots,\ab (E_r,F_r)\}$, $E_i,F_i\subseteq \Gamma$. A word $u\in \Gamma^\oo$ is accepting if $\minf(u) \cap E_i \neq \emptyset \; \rightarrow \; \minf(u) \cap F_i \neq \emptyset$ for every $i\in \{1,\dots,r\}$.
	
	
	\item[Parity.] To define a  \AP""parity condition"" we suppose that $\Gamma$ is a finite subset of $\NN$. A word $u\in \Gamma^\oo$ is accepting if $\max  \minf(u) \text{ is even}$.
	The elements of $\Gamma$ are called  \emph{priorities} in this case.

	\item[Generalised Büchi.] A  \AP""generalised Büchi condition"" is represented by a family of subsets $\{B_1, \dots, B_r\}$, $B_i \subseteq \Gamma$. A word $u\in \Gamma^\oo$ is accepted if $\minf(u)\cap B_i \neq \emptyset$  for all $i\in \{1,\dots,r\}$.
	
	\item[Generalised co-Büchi.] A  \AP""generalised co-Büchi condition"" is represented by a family of subsets $\{B_1, \dots, B_r\}$, $B_i \subseteq \Gamma$. A word $u\in \Gamma^\oo$ is accepted if $\minf(u)\cap B_i = \emptyset$  for some $i\in \{1,\dots,r\}$.
\end{description}

An "automaton" $\A$ using a condition of type $X$ will be called an $X$-automaton.

We remark that all the previous conditions define a family of subsets $\F \subseteq \P(\Gamma)$ and can therefore be represented as Muller conditions (in particular, all automata referred to in this paper can be regarded as Muller automata). Also, parity conditions can be represented as Rabin or Streett ones. We say that a language $L\subseteq \GG^\oo$ is a \AP\intro(language){Muller language} if $u_1\in L$ and $u_2\notin L$ implies that $\minf(u_1)\neq \minf(u_2)$. We associate to each Muller condition $\F$ the language $\AP""L_\F""=\{w\in \GG^\oo \: : \: \minf(w)\in \F\}$.

The \AP""parity index"" (also called \emph{Rabin index}) of an $\oo$-regular language $L\subseteq \SS^\oo$ is the minimal $p\in \NN$ such that there exists a deterministic parity automaton recognising $L$ using $p$ priorities in its parity condition.

Given an $\oo$-regular language $L\subseteq \SS^\oo$, we write $\AP""\mr(L)""$ to denote the \kl(automata){size} of a minimal Rabin automaton recognising $L$.

\begin{remark}\label{remark:RabinAndStreett}
	Let $\A$ be a "Rabin"-automaton recognising a language $L\subseteq \SS^\oo$ using Rabin pairs $R=\ab\{(E_1,F_1),\dots,\ab(E_r,F_r)\}$. If we consider the "Streett" automaton obtained by setting the pairs of $R$ as defining a Streett condition over the structure of $\A$, we obtain a Streett automaton $\A'$ recognising the language $\SS^\oo\setminus L$ (and vice versa). Therefore, the size of a minimal Rabin automaton recognising $L$ coincides with that of a minimal Streett automaton recognising $\SS^\oo\setminus L$, and the minimisation problem for both classes of automata is equivalent. Similarly for "generalised Büchi" and "generalised co-Büchi" automata.
\end{remark}

Let $\A$ be an "automaton" using some of the acceptance conditions above defining a family $\F \subseteq \P(\Gamma)$. We remark that since the acceptance of a "run" only depends on the set of colours produced infinitely often, we can associate to each "cycle" $\ell$ of $\A$ an accepting or rejecting status. We say that a "cycle" $\ell$ of $\A$ is \intro(cycle){accepting} if $"\gg(\ell)"\in \F$ and that it is \emph{rejecting} otherwise.

We are going to be interested in simplifying the acceptance conditions of automata, while preserving their structure. We say that we can \emph{define a condition of type $X$ on top of a Muller automaton $\A$} if we can recolour the transitions of $\A$ with colours in a set $\GG'$ and define a condition of type $X$ over $\GG'$ such that the resulting automaton is equivalent to $\A$. Definition~\ref{Def: ConditionXOnTopOf} formalises this notion.

\begin{definition}\label{Def: ConditionXOnTopOf}
	Let $X$ be some of the conditions defined previously and let $\A=(Q,\Sigma, q_0, \dd, \GG, \F)$ be a Muller automaton. We say that we can define a \AP""condition of type $X$ on top of $\A$"" if there is an $X$-condition over a set of colours $\GG'$ and an automaton $\A'=(Q,\Sigma, q_0, \dd', \GG', X)$ verifying:
		\begin{itemize}
		\item $\A$ and $\A'$ have the same set of states and the same initial state.
		\item $\dd(q,a)=(p,c) \; \Rightarrow \dd'(q,a)=(p,c')$, for some $c'\in \GG'$, for every $q\in Q$ and  $a\in \SS$ (that is, $\A$ and $\A'$ have the same transitions, except for the colours produced).
		\item $"\L(\A)"= "\L(\A')"$.
	\end{itemize}
\end{definition}

The next proposition, proven in \cite{CCF21Optimal}, characterises automata that admit "Rabin" conditions "on top of" them. It will be a key property used throughout the paper.

\begin{proposition}[\cite{CCF21Optimal}]\label{prop: Rabin iff cycles}
	Let $\A=(Q,\Sigma, q_0, \dd, \GG, \F)$ be a Muller automaton. The following properties are equivalent:
	\begin{enumerate}
		\item We can define a "Rabin" condition "on top of" $\A$.
		\item Any pair of "cycles" $\ell_1$ and $\ell_2$ in $\A$ verifying $\mstates(\ell_1)\cap \mstates(\ell_2)\neq \emptyset$ satisfies that if both $\ell_1$ and $\ell_2$ are \kl(cycle){rejecting}, then $\ell_1 \cup \ell_2$ is also a rejecting cycle.
	\end{enumerate}
\end{proposition}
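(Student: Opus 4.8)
The plan is to prove the two implications separately, with the difficult direction being $(2)\Rightarrow(1)$, i.e.\ constructing an explicit Rabin condition from the cycle-closure hypothesis.

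For $(1)\Rightarrow(2)$: suppose we have a Rabin condition on top of $\A$ given by pairs $R=\{(E_1,F_1),\dots,(E_r,F_r)\}$ over a recolouring $\gg'\colon E_\A\to\GG'$, so that a cycle $\ell$ is accepting in $\A$ iff $\gg'(\ell)\cap E_i\neq\emptyset$ and $\gg'(\ell)\cap F_i=\emptyset$ for some $i$. (Here I would first check, as a routine observation, that "rejecting in $\A$" = "rejecting for this Rabin condition", since both are determined by the set of colours on the cycle and $\L(\A)=\L(\A')$; one uses that a cycle can be pumped into a lasso run whose infinitely-occurring colours are exactly $\gg(\ell)$, respectively $\gg'(\ell)$.) Now take two rejecting cycles $\ell_1,\ell_2$ sharing a state $q$. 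Rejecting means: for every $i$, either $\gg'(\ell_j)\cap E_i=\emptyset$ or $\gg'(\ell_j)\cap F_i\neq\emptyset$. Since $\gg'(\ell_1\cup\ell_2)=\gg'(\ell_1)\cup\gg'(\ell_2)$, for each index $i$ I would argue $\ell_1\cup\ell_2$ still fails pair $i$: if $\gg'(\ell_1\cup\ell_2)\cap E_i\neq\emptyset$ then some $\ell_j$ meets $E_i$, hence (being rejecting) meets $F_i$, hence $\gg'(\ell_1\cup\ell_2)\cap F_i\neq\emptyset$. So $\ell_1\cup\ell_2$ is rejecting, and it is a cycle because $\ell_1,\ell_2$ share the state $q$ (concatenate the two closed paths through $q$). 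This direction is essentially a bookkeeping argument.

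For $(2)\Rightarrow(1)$, the idea is to build the Rabin pairs directly on the states of $\A$. The key object is, for each state $q$, the \emph{maximal rejecting cycle through $q$}: by hypothesis the union of two rejecting cycles through $q$ is again a rejecting cycle through $q$, so the union $\mathrm{Rej}(q)$ of all rejecting cycles containing $q$ is itself a rejecting cycle (it is a finite set of edges, so the union stabilises). I would then design one Rabin pair per state: set $E_q$ to "fire" on transitions that witness leaving the safe zone of $q$ and $F_q$ to forbid the transitions inside $\mathrm{Rej}(q)$. Concretely, recolour each edge $e\in E_\A$ by the set of states $q$ such that $e\notin\mathrm{Rej}(q)$ (equivalently one can give $e$ several colours, or work with the vector of indicator bits), put $F_q=\{$colours marking $e\in\mathrm{Rej}(q)\}$ and $E_q=\{$colours marking some edge not in $\mathrm{Rej}(q)$ that is incident to $q\}$ — here I will need to be slightly careful about which edges are charged to $E_q$ so that the witness exists exactly on accepting cycles. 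The correctness claim to verify is: a cycle $\ell$ is accepting (i.e.\ $\gg(\ell)\in\F$) iff there exists a state $q$ with $\ell\not\subseteq\mathrm{Rej}(q)$ yet $\ell$ visits $q$ and all edges of $\ell$ at "$q$'s level" avoid $F_q$. For the forward direction, if $\ell$ is accepting it is not contained in any $\mathrm{Rej}(q)$; picking $q$ on $\ell$ appropriately, $\ell$ is disjoint from $\mathrm{Rej}(q)$ entirely — this is the crucial point and uses maximality, since otherwise $\ell\cup\mathrm{Rej}(q)$ would be a rejecting cycle through $q$ strictly containing $\mathrm{Rej}(q)$, contradicting maximality (provided $q\in\mstates(\ell)$, which forces $q\in\mstates(\mathrm{Rej}(q))$ so the union is a legitimate cycle) — wait, I must instead take $q\in\mstates(\ell)$ and note $\ell\cup\mathrm{Rej}(q)$ is a cycle through $q$; if it were rejecting it would be $\subseteq\mathrm{Rej}(q)$, forcing $\ell\subseteq\mathrm{Rej}(q)$, a contradiction; so $\ell\cup\mathrm{Rej}(q)$ is accepting. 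That does not immediately give $\ell$ disjoint from $\mathrm{Rej}(q)$, so the actual pair design must be based on "the maximal rejecting cycle through $q$ that contains $\ell$'s visit" — this is the subtle part. For the backward direction, if $\ell\subseteq\mathrm{Rej}(q)$ for every $q\in\mstates(\ell)$, then $\ell$ itself is rejecting (take any $q\in\mstates(\ell)$; $\ell$ is a cycle through $q$ contained in the rejecting cycle $\mathrm{Rej}(q)$, hence $\gg(\ell)\subseteq\gg(\mathrm{Rej}(q))\notin\F$, and since $\gg(\ell)$ also equals... no — $\gg(\ell)\subseteq\gg(\mathrm{Rej}(q))$ does not give $\gg(\ell)\notin\F$ directly). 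The honest fix, which I expect is what the reference does, is to observe that $\ell$ being a cycle contained in $\mathrm{Rej}(q)$ means $\mathrm{Rej}(q)$ contains a cycle with colour set $\gg(\ell)$, and by the maximal-rejecting-cycle machinery one shows $\ell$ is rejecting; I would track this via the invariant "every sub-cycle of a maximal rejecting cycle is rejecting", which itself follows because $\gg$ of that sub-cycle is a subset and $\F$ need not be downward closed — so the real argument must route through: $\ell\subseteq\mathrm{Rej}(q)$ and $\mathrm{Rej}(q)$ rejecting $\Rightarrow$ $\ell$ rejecting, using that $\mathrm{Rej}(q)$ is by construction the union of rejecting cycles through $q$ and $\ell\subseteq\mathrm{Rej}(q)$; decomposing $\ell$ and re-pumping shows its colour set lies in some rejecting cycle through $q$, namely $\mathrm{Rej}(q)$ itself, which is rejecting.

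The main obstacle, then, is exactly this last point: translating the "union-closure of rejecting cycles through a common state" hypothesis into a genuinely \emph{local} (per-state) Rabin acceptance so that acceptance of a cycle is detected at a single well-chosen state. The right abstraction — and what I would actually carry out — is: colour each edge $e$ by the set $\{\,q : e\in\mathrm{Rej}(q)\,\}\subseteq Q$, use colour set $\GG'=\P(Q)$, and define for each $q\in Q$ the Rabin pair $E_q=\{S\in\P(Q): q\notin S\}$ and $F_q=\{S\in\P(Q): q\in S\}$; then $\gg'(\ell)$ satisfies pair $q$ iff some edge of $\ell$ lies outside $\mathrm{Rej}(q)$ and \emph{no} edge of $\ell$ lies outside... no: pair $q$ is satisfied iff $\gg'(\ell)\cap E_q\neq\emptyset$ (some edge of $\ell$ not in $\mathrm{Rej}(q)$) and $\gg'(\ell)\cap F_q=\emptyset$ (every edge of $\ell$ not in $\mathrm{Rej}(q)$, i.e.\ $\ell\cap\mathrm{Rej}(q)=\emptyset$) — these are contradictory unless I reindex. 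So the correct pair is: for each $q$, $F_q=\{S : q\in S\}=\{$edges in $\mathrm{Rej}(q)\}$ and $E_q=\{S : q\notin S$ and $q$ is "active"$\}$, which I will implement by adding to each edge incident to $q$ an extra tag; then "pair $q$ is satisfied by $\ell$" $\iff$ "$q\in\mstates(\ell)$ and $\ell\cap\mathrm{Rej}(q)=\emptyset$", and the two claims to finish are exactly (forward) an accepting $\ell$ has some $q\in\mstates(\ell)$ with $\ell\cap\mathrm{Rej}(q)=\emptyset$ — proved by maximality: pick $q\in\mstates(\ell)$; $\ell\cup\mathrm{Rej}(q)$ is a cycle through $q$, if it were rejecting then by maximality $\ell\cup\mathrm{Rej}(q)=\mathrm{Rej}(q)$ so $\ell\subseteq\mathrm{Rej}(q)$ so $\gg(\ell)\subseteq\gg(\mathrm{Rej}(q))\notin\F$ — still not a contradiction; the genuine resolution is to instead choose $\ell$ to be a \emph{simple} cycle or to argue at the level of the whole accepting cycle $\ell$ being disjoint from $\mathrm{Rej}(q)$ for a \emph{carefully chosen} $q$ not merely any $q$, and (backward) if every $q\in\mstates(\ell)$ has $\ell\cap\mathrm{Rej}(q)\neq\emptyset$ then $\ell$ is rejecting. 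I expect the cited proof in \cite{CCF21Optimal} handles the forward direction by picking $q$ to be a state such that $\mathrm{Rej}(q)$ is \emph{minimal} among $\{\mathrm{Rej}(p): p\in\mstates(\ell)\}$ and then deriving $\ell\cap\mathrm{Rej}(q)=\emptyset$ from union-closure; getting that argument exactly right is the crux, and it is where I would spend the bulk of the effort.
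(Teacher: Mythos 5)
You should first note that the paper does not prove this proposition at all: it is imported from \cite{CCF21Optimal}, so there is no in-paper argument to compare yours with, and the plan has to be judged on its own merits. Your direction $(1)\Rightarrow(2)$ is essentially fine: once you observe that a (reachable) cycle $\ell$ is rejecting for $\F$ if and only if it is rejecting for the relabelled Rabin condition (pump $\ell$ into a lasso whose set of infinitely repeated colours is exactly $\gg(\ell)$, respectively $\gg'(\ell)$, and use $\L(\A)=\L(\A')$), the closure of rejecting colour sets of a Rabin condition under union is routine bookkeeping, and sharing a state makes $\ell_1\cup\ell_2$ a cycle.

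The genuine gap is in $(2)\Rightarrow(1)$, and the construction you centre it on cannot be repaired by the tweaks you suggest. You set $\mathrm{Rej}(q)$ to be the union of all rejecting cycles through $q$ (correctly a rejecting cycle, by the hypothesis) and aim at one Rabin pair per state so that a cycle $\ell$ satisfies the pair of $q$ exactly when $q\in\mathit{States}(\ell)$ and $\ell\cap\mathrm{Rej}(q)=\emptyset$. The claim this forces --- every accepting cycle visits some $q$ with $\ell\cap\mathrm{Rej}(q)=\emptyset$ --- is false. Take a single state $q$ with two self-loops coloured $a$ and $b$ and $\F=\{\{a\},\{a,b\}\}$ (accept iff $a$ is produced infinitely often; this is already Rabin with the single pair $(\{a\},\emptyset)$, so property~(2) holds). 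Here $\mathrm{Rej}(q)$ is the $b$-loop, and the accepting cycle consisting of both loops meets it; under your pairs it satisfies none of them, so the relabelled automaton would wrongly reject words seeing both colours infinitely often. Neither of your proposed rescues (taking $\ell$ simple, or choosing $q$ with $\mathrm{Rej}(q)$ minimal) helps: there is only one state and the offending cycle has two edges. The structural reason is that an accepting cycle may strictly contain rejecting cycles through exactly the same states, so no flat family of pairs with $F_q=\mathrm{Rej}(q)$ can separate acceptance from rejection; the forbidden sets must be built hierarchically, recursing on the alternation between accepting and rejecting cycles (property~(2) plus maximality makes distinct maximal rejecting cycles state-disjoint, which is exactly the fact such a level-by-level construction exploits, in the style of the alternating-cycle analysis of \cite{CCF21Optimal}). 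As written, the hard implication remains unproven --- which you essentially acknowledge yourself.
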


\subsubsection*{The Zielonka tree of a Muller condition}
In order to study the memory requirements of Muller conditions, Zielonka introduced in \cite{Zielonka1998infinite} the notion of split trees (later called Zielonka trees) of Muller conditions. The Zielonka tree of a Muller condition naturally provides a minimal parity automaton recognising this condition \cite{CCF21Optimal, MeyerSickert21OptimalPractical}. We will use this property to show that parity automata recognising Muller languages can be minimised in polynomial time in Proposition~\ref{prop : MinimisationParityPoly}. We will come back to Zielonka trees in Section \ref{section : ChromaticMemory} to discuss the characterisation of the memory requirements of Muller conditions.

\begin{definition}
	Let $\GG$ be a set of labels. We give the definition of a \AP""$\GG$-labelled-tree"" by induction:
	\begin{itemize}
		\item $T=\langle A, \langle \emptyset \rangle \rangle$ is a $\GG$-labelled-tree for any $A\subseteq \GG$. In this case, we say that $T$ is a \AP""leaf"" and $A$ is its label.
		\item If $T_1, \dots, T_n$ are $\GG$-labelled-trees, then $T=\langle A, \langle T_1, \dots, T_n\rangle\rangle$ is a $\GG$-labelled-tree for any $A\subseteq \GG$.	In that case, we say that $A$ is the label of $T$ and $T_1, \dots, T_n$ are their children.		
	\end{itemize}
	The set of \AP""nodes"" of a tree $T$ is defined recursively as:
	\[ \mnodes(T)=\{T\} \cup \bigcup\limits_{T' \text{ child of } T} \mnodes(T'). \]
\end{definition} 

\begin{definition}[\cite{Zielonka1998infinite}]\label{def: ZielonkaTree}
	Let $\F\subseteq \P(\GG)$ be a \kl(condition){Muller condition}. The \AP""Zielonka tree"" of $\F$, denoted $\ZF$, is the $\GG$-labelled-tree defined recursively as follows: let $A_1, \dots A_k$ be the maximal subsets of $\GG$ such that $A_i\in \F \; \Leftrightarrow \; \GG \notin \F$.
	\begin{itemize}
		\item If no such subset $A_i\subseteq \GG$ exists, then $\ZF=\langle \GG, \langle \emptyset \rangle \rangle$.
		\item Otherwise, $\ZF=\langle \GG, \langle  \ZFi{1}, \dots , \ZFi{k}\rangle \rangle$, where $\ZFi{i}$ is the Zielonka tree for the condition $\F_i= \F \cap \P(A_i)$ over the colours $A_i$.
	\end{itemize}	
\end{definition}
An example of a Zielonka tree can be found in Figure~\ref{Fig : ZielonkaTree} (page 28).

\begin{proposition}[\cite{CCF21Optimal}, \cite{MeyerSickert21OptimalPractical}]\label{prop: minimalParityZielonka}
	Let $\F$ be a \kl(condition){Muller condition} and $\ZF$ its Zielonka tree. We can build in linear time in the representation of $\ZF$
	a "parity" "automaton" recognising $"L_\F"$ that has as set of states the leaves of $\ZF$. This automaton is minimal, that is, any other parity automaton recognising $"L_\F"$ has at least as many states as the number of leaves of $\ZF$. 
\end{proposition}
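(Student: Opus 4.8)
The plan is to treat the two assertions separately: build the announced parity automaton (the easy part) and then prove its minimality (where the real work lies). For the construction I would take the leaves of $\ZF$ as states, declare any one of them initial, and define the transition out of a leaf $\ell$ on a colour $a\in\GG$ as follows. Let $\nu$ be the deepest node on the branch joining the root of $\ZF$ to $\ell$ whose label contains $a$ --- well defined since the root is labelled $\GG$. If $\nu=\ell$ the transition loops on $\ell$; otherwise it leads to the leftmost leaf of the child of $\nu$ that cyclically follows the child of $\nu$ lying on the branch towards $\ell$. The priority emitted is a fixed function of the depth $d$ of $\nu$ --- namely $h-d$, up to an additive shift in $\{0,1\}$, where $h$ is the height of $\ZF$ --- and since the membership of the labels in $\F$ flips at each step of any branch, the shift can be chosen so that the emitted priority is even precisely when the label of $\nu$ belongs to $\F$. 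Each transition requires only one walk up a branch, so after precomputing for each node which colours its label contains and the leftmost leaf below it, the whole automaton is produced by a single traversal of $\ZF$, in linear time.

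For correctness I would prove $\L(\A)=L_\F$. Fix an input $w$, set $S=\minf(w)$, and let $\nu^*$ be a deepest node of $\ZF$ whose label contains $S$. A standard property of Zielonka trees is that $\mathrm{label}(\nu^*)\in\F$ iff $S\in\F$, and that this holds for every such deepest node; hence it suffices to show that the largest priority emitted infinitely often along the run on $w$ is the one attached to $\nu^*$. On one hand, no child of $\nu^*$ has a label containing $S$, so each child misses some colour of $S$; reading such a colour while the run is inside that child's subtree forces the support node to be exactly $\nu^*$ and moves the run to the next child, so --- every colour of $S$ recurring --- the priority of $\nu^*$ is emitted infinitely often. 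On the other hand, whenever the support node is a strict ancestor of $\nu^*$, the colour just read lies outside $\mathrm{label}(\nu^*)\supseteq S$ and is therefore seen only finitely often; so no priority attached to a node strictly shallower than $\nu^*$ --- equivalently, by the normalisation, no priority strictly larger than that of $\nu^*$ --- recurs. The remaining bookkeeping (notably, that the run ultimately stays within the subtree of a single such $\nu^*$) is routine and is carried out in \cite{CCF21Optimal,MeyerSickert21OptimalPractical}.

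Minimality is the crux, and I would prove it by induction on $\ZF$ (equivalently, on $|\GG|$): every parity automaton $\B$ recognising $L_\F$ has at least $n$ states, where $n$ is the number of leaves of $\ZF$. The base case $\ZF=\langle\GG,\langle\emptyset\rangle\rangle$ is immediate, $L_\F$ then being $\GG^\oo$ or $\emptyset$. For the inductive step, assume $\GG\in\F$ (the case $\GG\notin\F$ is symmetric), let $A_1,\dots,A_k\notin\F$ be the labels of the children of the root of $\ZF$, put $\F_i=\F\cap\P(A_i)$, and let $n_i$ be the number of leaves of $\ZFi{i}$, so that $n=\sum_i n_i$. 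Restricting $\B$ to the input alphabet $A_i$ produces a deterministic complete parity automaton recognising $L_{\F_i}$, since for $w\in A_i^\oo$ one has $w\in\L(\B)\iff\minf(w)\in\F\iff\minf(w)\in\F_i$. This $A_i$-restriction contains an ergodic strongly connected component $\S_i$ reachable from the initial state by a word over $A_i$; being ergodic, $\S_i$ is closed under all $A_i$-transitions, hence carries a deterministic complete sub-automaton over $A_i$ that (as $L_{\F_i}$ depends only on $\minf$) recognises $L_{\F_i}$ from each of its states, so $|\S_i|\geq n_i$ by the induction hypothesis.

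It remains --- and this is the \emph{main obstacle} --- to show that $\S_1,\dots,\S_k$ are pairwise disjoint, which then gives that $\B$ has at least $\sum_i|\S_i|\geq\sum_i n_i=n$ states. Suppose a state $q$ lay in $\S_i\cap\S_j$ with $i\neq j$. Since $\S_i$ is an ergodic component of the $A_i$-restriction of $\B$, for each $a\in A_i$ there is a cycle through $q$ using the $a$-transition out of $q$ and returning to $q$ inside $\S_i$; the union $C_i$ of these is a cycle through $q$ reading exactly the colours of $A_i$, and similarly there is a cycle $C_j$ through $q$ reading exactly $A_j$. Going forever around $C_i$ (resp. $C_j$) after a path from the initial state to $q$ yields an input whose $\minf$ is $A_i$ (resp. $A_j$), which is not in $\F$; so $C_i$ and $C_j$ are rejecting cycles of $\B$. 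But $C_i\cup C_j$ is again a cycle through $q$, and in a parity automaton the greatest priority occurring along a union of cycles is the greatest of those occurring along each, so $C_i\cup C_j$ is rejecting as well; going around it exhibits an input with $\minf=A_i\cup A_j$ that is rejected, whence $A_i\cup A_j\notin\F$. This contradicts the maximality of $A_i$ among the subsets of $\GG$ not in $\F$: the $A_1,\dots,A_k$ form an antichain, so $A_i\cup A_j$ is a proper superset of $A_i$ and hence belongs to $\F$. Thus the $\S_i$ are disjoint, the induction goes through, and $\A$ is minimal. (In place of the direct priority computation for $C_i\cup C_j$, one could appeal to Proposition~\ref{prop: Rabin iff cycles}, whose condition every parity automaton satisfies.)
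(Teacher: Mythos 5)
The paper never proves this proposition internally --- it is imported from \cite{CCF21Optimal,MeyerSickert21OptimalPractical} --- so there is no in-paper proof to compare against; judged on its own, your reconstruction follows the standard route of those references and is sound where it matters. The construction is the usual Zielonka-tree parity automaton (leaves as states, round-robin over the children of the support node, priority determined by the depth of the support), and your minimality argument --- the crux --- is correct: restrict a candidate automaton $\B$ to each child alphabet $A_i$, extract a reachable ergodic strongly connected component $\S_i$, use prefix-independence of Muller languages to see that $\S_i$ recognises $L_{\F_i}$ from any of its states, apply the induction hypothesis, and prove disjointness of the $\S_i$ via the union-of-cycles argument (two rejecting cycles through a common state reading exactly $A_i$ and exactly $A_j$ would force $A_i\cup A_j\notin\F$, contradicting maximality of $A_i$). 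This is precisely the cycle-analysis toolkit the paper itself uses elsewhere (Proposition~\ref{prop: Rabin iff cycles}, and the same style of argument as in Lemma~\ref{lemma : RabinForLG = ChromaticNumberG}), and the symmetric case $\GG\notin\F$ indeed goes through because parity conditions are closed under unions of accepting as well as of rejecting cycles sharing a state.

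One local inaccuracy in your correctness sketch: it is not true in general that the largest priority emitted infinitely often is the one attached to a deepest node $\nu^*$ whose label contains $S=\mathit{Inf}(w)$, nor that the run must eventually stay in the subtree of such a globally deepest node. If $S$ is contained in the labels of two nodes on different branches, each having no child whose label contains $S$ but sitting at different depths, the run may stabilise under the shallower one and never emit the priority of the deeper one. The statement you actually need (and what the cited proofs establish) is that the shallowest support occurring infinitely often is some node $\nu$ with $S\subseteq\mathrm{label}(\nu)$ and $S\not\subseteq\mathrm{label}(c)$ for every child $c$ of $\nu$; any such node has the same accepting status as $S$ (statuses alternate uniformly with depth in $\Z_\F$), so the dominant priority has the correct parity even though its value may differ from that of $\nu^*$. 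With that rephrasing your correctness argument goes through, and the rest of the proposal stands.
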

	\subsection{Minimising transition-based Rabin and Streett automata is NP-complete}\label{subsec: MinimisationRabin}
	
	This section is devoted to proving the $\NP$-completeness of the minimisation of transition-based "Rabin" automata, stated in Theorem \ref{theorem: MinimisingRabinNP-comp}.\\

	For the containment in $\NP$, we use the fact that we can test language equivalence of Rabin automata in polynomial time. 
	
	\begin{proposition}[\cite{ClarkeDK93Unified}]\label{Prop: equivalence Rabin in P}
	Let $\A_1$ and $\A_2$ be two "Rabin" "automata" over $\SS$. We can decide in polynomial time on the representation of the automata if $"\L(\A_1)"="\L(\A_2)"$. (We recall that all considered automata are deterministic).
	\end{proposition}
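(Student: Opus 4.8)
The plan is to reduce equivalence to two emptiness questions and solve each by a cycle search in a product graph.

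Since $\A_1$ and $\A_2$ are deterministic and complete, $\L(\A_1)=\L(\A_2)$ holds if and only if $\L(\A_1)\setminus\L(\A_2)=\emptyset$ and $\L(\A_2)\setminus\L(\A_1)=\emptyset$, so by symmetry it is enough to decide emptiness of $\L(\A_1)\setminus\L(\A_2)$ in polynomial time. I would form the product automaton $\A_1\times\A_2$, with states $Q_1\times Q_2$ (where $Q_i$ is the state set of $\A_i$, and $R_i$ its Rabin condition) and the obvious deterministic transition function; it is complete, of polynomial size, and the unique run of a word $w$ in it projects to the runs of $w$ in $\A_1$ and in $\A_2$. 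Writing $\gg_1,\gg_2$ for the projections of a product-edge to the colour of $\A_1$ and of $\A_2$, the language $\L(\A_1)\setminus\L(\A_2)$ is exactly the language recognised by the product under the acceptance condition ``the projected $\A_1$-run satisfies $R_1$ and the projected $\A_2$-run is rejecting''. By Remark~\ref{remark:RabinAndStreett}, ``the $\A_2$-run is rejecting'' is precisely the Streett condition $S_2$ built from the Rabin pairs of $\A_2$; hence the target condition is the conjunction of a Rabin condition (on the $\gg_1$-colours) and a Streett condition (on the $\gg_2$-colours), and it remains to decide, in polynomial time, emptiness of a deterministic automaton carrying such a condition.

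For that I would use the standard correspondence between runs and cycles: in a deterministic automaton the set of edges a run visits infinitely often is a cycle in the sense defined above (it is strongly connected and every state it meets has an outgoing edge inside it), it is reachable from the initial state, and conversely every reachable cycle $\ell$ is the infinitely-visited edge set of some run, which therefore sees exactly the colours $\gg(\ell)$ infinitely often. As all conditions involved depend only on the set of colours seen infinitely often, the product is non-empty iff there is a reachable cycle $\ell$ with $\gg_1(\ell)$ satisfying $R_1$ and $\gg_2(\ell)$ satisfying $S_2$. After discarding the part of the product not reachable from the initial state (a polynomial graph search), I would enumerate the Rabin pairs $(E_j,F_j)$ of $R_1$, and for each of them search for a cycle $\ell$ with $\gg_1(\ell)\cap E_j\neq\emptyset$, $\gg_1(\ell)\cap F_j=\emptyset$ and $\gg_2(\ell)$ satisfying $S_2$. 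The constraint $\gg_1(\ell)\cap F_j=\emptyset$ is imposed once and for all by deleting every edge whose $\gg_1$-colour lies in $F_j$. The constraint that $\gg_2(\ell)$ satisfy $S_2$ is handled by the classical recursive strongly-connected-component decomposition for deterministic Streett emptiness: in a strongly connected subgraph $H$, if $\gg_2(H)$ already satisfies $S_2$ then the whole of $H$ is a suitable cycle; otherwise some Streett pair $(E_k,F_k)$ is violated in $H$ (that is, $\gg_2(H)\cap E_k\neq\emptyset$ while $\gg_2(H)\cap F_k=\emptyset$), and since any cycle inside $H$ avoids $F_k$, a good one must avoid $E_k$ entirely, so one deletes the $E_k$-edges and recurses on the strongly connected components of the result. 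The remaining constraint $\gg_1(\ell)\cap E_j\neq\emptyset$ is checked at the leaves of this recursion: a strongly connected subgraph whose $\gg_2$-colours satisfy $S_2$ provides a cycle we want exactly when it still contains an edge whose $\gg_1$-colour is in $E_j$. Each recursive step strictly removes at least one edge and then recurses on vertex-disjoint subgraphs, so the search runs in polynomial time; summing over the Rabin pairs of $R_1$ and repeating the symmetric test for $\L(\A_2)\setminus\L(\A_1)$ keeps the whole procedure polynomial.

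The step I expect to be the main obstacle is keeping this search polynomial while meeting simultaneously the \emph{positive} obligations ($\gg_1(\ell)$ must touch $E_j$, and $\gg_2(\ell)$ must touch $F_k$ whenever it touches $E_k$) and the \emph{negative} ones (avoid $F_j$, and avoid $E_k$ for whichever Streett pairs turn out violated): since there are exponentially many strongly connected subgraphs, one cannot enumerate them and must instead drive the recursion purely by a currently-violated pair, proving that this loses no good cycle. A smaller but necessary point is to pin down the run/cycle correspondence precisely — that the infinity set of a run is strongly connected with every visited state keeping an outgoing edge inside it, and that every such reachable set is realised by a run — since the passage from emptiness to cycle search rests on it.
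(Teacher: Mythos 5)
Your argument is correct. Note, however, that the paper does not prove this statement at all: it is quoted as a known result of Clarke, Draghicescu and Kurshan, so what you have written is essentially a self-contained reproof of the cited fact rather than an alternative to an argument in the paper. Your route is the standard one and it holds up: equivalence of deterministic complete automata reduces to emptiness of the two differences; the difference $\L(\A_1)\setminus\L(\A_2)$ is the language of the deterministic product under ``$R_1$ holds and $S_2$ holds'', where $S_2$ is the Streett condition on $\A_2$'s pairs (Remark~\ref{remark:RabinAndStreett}); emptiness of such a Rabin-and-Streett condition is decided by looping over the Rabin pairs $(E_j,F_j)$ of $R_1$, deleting the $F_j$-edges, and running the classical recursive SCC-based Streett emptiness test, checking for an $E_j$-edge whenever a strongly connected subgraph satisfying $S_2$ is reached. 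The two points you flag as delicate are exactly the ones that need to be said, and your treatment of them is sound: the infinity set of a run is a reachable cycle and every reachable cycle is realised, so emptiness is a cycle search; and in the recursion, a cycle $\ell$ witnessing non-emptiness avoids $F_k$ whenever the ambient strongly connected subgraph does, hence (since $\gg_2(\ell)$ satisfies $S_2$) avoids $E_k$ for any violated pair, so it survives every deletion step and ends up inside a subgraph at which your leaf test succeeds; conversely any accepted subgraph is itself a witness cycle. Since each recursive step deletes at least one edge and recurses on vertex-disjoint components, the whole procedure is polynomial. Incidentally, this recursive SCC decomposition is close in spirit to the one the paper itself uses in Algorithm~\ref{algo: AlternatingComp} for Proposition~\ref{prop : MinimisationParityPoly}, so your proof fits naturally with the paper's toolkit even though the paper chose to outsource this particular statement to the literature.
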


	\begin{corollary}
	Given a "Rabin" "automaton" $\A$ and a positive integer $k$, we can decide in non-deterministic polynomial time whether there is an "equivalent" Rabin automaton of size $k$.
	\end{corollary}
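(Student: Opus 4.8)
The plan is the routine guess-and-check argument witnessing membership in $\NP$. On input $(\A,k)$, the non-deterministic procedure first guesses a candidate automaton $\B$ with a state set $Q'$ of size $k$: an initial state, a transition function $\dd'\colon Q'\times\SS\to Q'\times\GG'$ over some output alphabet $\GG'$, and a Rabin condition over $\GG'$. It then verifies that the guessed data describes a deterministic and complete Rabin automaton (a purely syntactic check) and that $\L(\A)=\L(\B)$, which by Proposition~\ref{Prop: equivalence Rabin in P} can be done in polynomial time. The procedure accepts if and only if both checks succeed; by construction it has an accepting computation exactly when some Rabin automaton with $k$ states is "equivalent" to $\A$, which is what we want to decide.

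The delicate point is that a successful certificate can always be chosen of size polynomial in $|\A|$ and $k$, so that both the guess and the verification stay within polynomial time. I would argue that if some Rabin automaton $\B$ with $k$ states is equivalent to $\A$, it can be normalised, without changing its state set or its language, into one with a polynomial-size description. First, one may assume $\B$ colours each transition by itself, so that $\GG'=E_\B$ with $|\GG'|\le k\cdot|\SS|$: replacing an arbitrary output colouring $\gg\colon E_\B\to\GG'$, together with its Rabin pairs $(E_i,F_i)$, by the identity colouring $E_\B\to E_\B$ with pairs $(\gg^{-1}(E_i),\gg^{-1}(F_i))$ gives an equivalent automaton, since a "cycle" $\ell$ of $\B$ satisfies the pair $(E_i,F_i)$ precisely when $\gg(\ell)\cap E_i\neq\emptyset$ and $\gg(\ell)\cap F_i=\emptyset$. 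Second, and this is the real work, one has to bound the number of Rabin pairs needed: the acceptance status of a "cycle" of $\B$ depends only on its set of edges, and by Proposition~\ref{prop: Rabin iff cycles} the rejecting cycles within each strongly connected component of $\B$ are closed under union; this forces the family of rejecting cycles to be organised in a tree refining the strongly connected components of $\B$, which can be recorded with a number of Rabin pairs polynomial in $|E_\B|\le k\cdot|\SS|$. Putting these together, $\B$ admits a polynomial-size description, and the procedure runs in non-deterministic polynomial time.

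I expect the bound on the number of Rabin pairs to be the main obstacle, and it genuinely needs the structural constraint of Proposition~\ref{prop: Rabin iff cycles}: an arbitrary Rabin condition over a set of $c$ colours may require $2^{\Omega(c)}$ pairs --- for instance the condition accepting exactly those words $w$ with $1\le|\minf(w)|\le c/2$, whose family of maximal accepting sets consists of all $\binom{c}{\lfloor c/2\rfloor}$ subsets of size $\lfloor c/2\rfloor$, and a Rabin condition needs a distinct pair for each of its maximal accepting sets. Hence the polynomial bound cannot come from the size of the colour alphabet alone, but must exploit that $\B$ is an automaton with few states whose cycle structure is Rabin; making the tree-refinement argument above precise (it is essentially the analysis underlying Proposition~\ref{prop: Rabin iff cycles} in \cite{CCF21Optimal}) is the step that requires the most care.
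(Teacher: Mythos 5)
Your skeleton (guess a $k$-state automaton, check equivalence using Proposition~\ref{Prop: equivalence Rabin in P}) is exactly the paper's proof, which stops there. The genuine problem lies in the extra step you add and explicitly defer as ``the step that requires the most care'': the claim that a $k$-state witness can be normalised so that its Rabin condition uses a number of pairs polynomial in $|E_\B|\le k\cdot|\SS|$, via a tree of rejecting cycles refining the strongly connected components. This claim is false, and your own example already witnesses it. Let $|\SS|=c$ and $\F=\{A\subseteq\SS \,:\, 1\le|A|\le\lfloor c/2\rfloor\}$, and consider the one-state transition-based automaton whose $c$ self-loops read the letters of $\SS$. Every nonempty set of self-loops is a cycle through the unique state, and the rejecting cycles (the sets of size greater than $\lfloor c/2\rfloor$) are closed under union, so by Proposition~\ref{prop: Rabin iff cycles} a Rabin condition can indeed be defined on top of this structure. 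Yet, by precisely the computation at the end of your proposal (applied to the self-loops, pulled back along whatever recolouring is chosen, instead of to abstract colours), any such Rabin condition needs a separate pair for each of the $\binom{c}{\lfloor c/2\rfloor}$ maximal accepting cycles: a pair firing on a maximal accepting set $A$ must have $F_i$ disjoint from $A$ while meeting every rejecting superset $A\cup\{x\}$, which forces $F_i$ to be exactly the complement of $A$, so that pair fires on no other maximal accepting set. Here $k=1$ and $|E_\B|=c$, but exponentially many pairs are unavoidable; the union-closure constraint of Proposition~\ref{prop: Rabin iff cycles} is satisfied and buys nothing, so no bound polynomial in $k\cdot|\SS|$ can exist and the deferred tree-refinement lemma cannot be proved.

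This does not by itself refute the corollary --- an $\NP$ certificate only has to be polynomial in the \emph{whole} input, which includes the Rabin pairs of $\A$, not merely $k$ and $\SS$ --- but it does invalidate the route you chose: if you insist on justifying the certificate size, the bound has to be measured against the full description of $\A$, or the certificate has to be changed (for instance, guessing only the transition structure and arguing that a suitable acceptance condition can be recovered and verified in polynomial time), and your sketch does neither. The paper itself is content with the plain guess-and-check argument and never attempts the bound you are after; your first normalisation (recolouring each transition by itself) is correct but harmless, while the second, load-bearing step is not merely delicate --- it is impossible in the generality you state it.
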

	\begin{proof}
		A non-deterministic Turing machine just has to guess an equivalent automaton $\A_k$ of size $k$, and by Proposition~\ref{Prop: equivalence Rabin in P} it can check in polynomial time whether $\L(\A)=\L(\A_k)$.
	\end{proof}
	
	In order to prove the $\NP$-hardness, we will describe a reduction from the $\CN$ problem (one of 21 Karp's $\NP$-complete problems) to the minimisation of transition-based Rabin automata. Moreover, this reduction will only use languages that are \kl(language){Muller languages} of Rabin index 3.\\
	
	\begin{definition}
		Let $G=(V,E)$ be a "simple" "undirected" "graph". A \AP""colouring"" of size $k$ of $G$ is a function $c: V \rightarrow [1,k]$ such that for any pair of vertices $v, v' \in V$, if $(v,v')\in E$ then $c(v)\neq c(v')$.
		
		The \AP""chromatic number"" of a simple undirected $G$, written $\chi(G)$, is the smallest number $k$ such that there exists a colouring of size $k$ of $G$.
	\end{definition}

	\begin{lemma}[\cite{Karp72Reducibility}]
		Deciding whether a "simple" "undirected" "graph" has a "colouring" of size $k$ is $\NPc$.
	\end{lemma}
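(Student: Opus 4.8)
The plan is to prove the two directions of the $\NPc$ claim separately. For membership in $\NP$ I would simply observe that a colouring $c\colon V\to[1,k]$ is a certificate of polynomial size, and that checking $c(v)\neq c(v')$ over all edges $(v,v')\in E$ is a polynomial-time verification. For $\NP$-hardness it suffices to treat the restriction to $k=3$, i.e.\ deciding $3$-colourability, since any instance of that restriction is already an instance of the general problem; I would then reduce $3$-SAT (itself $\NP$-complete by Cook--Levin together with the standard splitting of clauses with more than three literals) to $3$-colourability.

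The reduction I have in mind sends a $3$-CNF formula $\varphi$ over variables $x_1,\dots,x_n$ with clauses $C_1,\dots,C_m$ to a graph $G_\varphi$ built in three layers. First, a \emph{palette} triangle on vertices $\{T,F,N\}$, which in any $3$-colouring receives three distinct colours, henceforth called \emph{true}, \emph{false} and \emph{neutral}. Second, for each variable $x_i$, a pair of vertices $v_i,\bar v_i$ adjacent to each other and both adjacent to $N$; thus $\{v_i,\bar v_i\}$ is coloured by $\{\text{true},\text{false}\}$ in one of the two orders, which I read as a truth value for $x_i$. Third, for each clause an \emph{OR-gadget}: a constant-size subgraph with two designated inputs $p,q$, one output $r$ adjacent to $N$, and a couple of internal vertices, enjoying the property that in every valid $3$-colouring with $p,q$ coloured in $\{\text{true},\text{false}\}$ the output $r$ is also coloured in $\{\text{true},\text{false}\}$, and $r$ may be coloured \emph{true} if and only if $p$ or $q$ is. Chaining two such gadgets handles a ternary clause $\ell_1\vee\ell_2\vee\ell_3$ (each $\ell$ a vertex $v_i$ or $\bar v_i$), producing an output vertex whose true-colourability is equivalent to at least one literal being true; adding an edge from that output to $F$ then forces it to be \emph{true}. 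The whole construction is plainly polynomial.

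Having built $G_\varphi$ I would verify the two implications. If $\varphi$ has a satisfying assignment, colour the $v_i$'s accordingly; in each clause some literal is true, so the corresponding gadget's internal vertices can be completed with output coloured \emph{true}, yielding a valid $3$-colouring of $G_\varphi$. Conversely, a valid $3$-colouring restricts on the $v_i$'s to a well-defined truth assignment, and for each clause the edge forcing the gadget's output to be \emph{true} propagates back, through the OR-gadget property, to the conclusion that some input literal vertex is coloured \emph{true}; hence the assignment satisfies that clause.

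The main obstacle — really the only nontrivial part — is pinning down an explicit OR-gadget and proving its defining property. I would exhibit a small fixed graph (for instance, inputs $p,q$, internal vertices $s,t$, output $r$, with edges $ps,\ qt,\ st,\ sr,\ tr$ and $rN$) and run a short case analysis over the three possible palette colours of $p$ and $q$: when both are \emph{false} the colour constraints force $r$ to be \emph{false}; when at least one is \emph{true} the partial colouring extends with $r$ coloured \emph{true}; and in all cases $r$ stays within $\{\text{true},\text{false}\}$ because of the edge $rN$. Once this gadget lemma is in hand, the chaining and the two correctness implications above are routine.
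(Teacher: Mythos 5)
Your proof is correct, but it is doing something the paper never attempts: the paper states this lemma purely as a black-box citation to Karp's 1972 paper (it is one of Karp's 21 problems) and gives no argument, since only the statement is needed as the source problem for the reduction in Lemma~\ref{lemma : RabinForLG = ChromaticNumberG}. Your self-contained argument is the classic route via $3$-colourability: membership in $\NP$ by the obvious certificate, then a reduction from $3$-SAT using a palette triangle $\{T,F,N\}$, variable gadgets attached to $N$, and chained OR-gadgets whose outputs are forced to the colour \emph{true}. This is essentially the Stockmeyer/Garey--Johnson--Stockmeyer proof rather than Karp's original reduction (Karp reduces satisfiability to \textsc{Chromatic Number} directly, with a construction using a number of colours depending on the instance), but both establish the lemma; note also that hardness of the $k=3$ restriction does imply hardness of the stated problem, since $k$ is part of the input. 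Your gadget (edges $ps$, $qt$, $st$, $sr$, $tr$, $rN$) does have the claimed property: if $p=q=F$ then $\{s,t\}=\{T,N\}$ and $r$ is forced to $F$, while if some input is $T$ the colouring extends with $r=T$, and $rN$ keeps $r$ in $\{T,F\}$; chaining two gadgets and adding the edge to $F$ then yields both directions of correctness. The only detail left implicit is the handling of clauses with fewer than three literals (duplicate a literal), which is routine.
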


	Let $G=(V,E)$ be a simple undirected "graph", $n$ be its number of vertices and $m$ its number of edges. 
	 We consider the language $\AP""L_G""$ over the alphabet $V$ given by:
	\[ L_G = \bigcup\limits_{(v,u)\in E} V^*(v^+u^+)^\oo.\]
	
	
	That is, a sequence $w\in V^\oo$ is in $L_G$ if eventually it alternates between exactly two vertices connected by an edge in $G$.
	
	\begin{remark}
		For any "simple" "undirected" "graph" $G$, $"L_G"$ is a \kl(language){Muller language} over $V$, that is, whether a word $w\in V^\oo$ belongs to $"L_G"$ or not only depends on $\minf(w)$. Moreover, the parity index of this condition is at most $[1 \hyphen 3]$.
	\end{remark}
	However, we cannot extend this reduction to show $\NP$-hardness of the minimisation of transition-based "parity" automata, as we will show in Section \ref{subsec: MinimisationParity} that we can minimise "parity" automata recognising \kl(language){Muller languages} in polynomial time.
	
	\begin{lemma}
		We can build a "Rabin" automaton $\A_G$ of \kl(automata){size} $n$ over the alphabet $V$ recognising $"L_G"$ in $\O(mn^2)$.
	\end{lemma}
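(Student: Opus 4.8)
The plan is to build $\A_G$ with set of states $Q = V$, so that it has \kl(automata){size} $n$, with an arbitrary initial state $v_0 \in V$ (the language $"L_G"$ is prefix-independent, so this choice is irrelevant), and with a transition function that always moves to the letter just read:
\[
\dd(q,a) = (a,\, c_{q,a}), \qquad
c_{q,a} =
\begin{cases}
\top & \text{if } q = a,\\
e & \text{if } \{q,a\} = e \in E,\\
\bot & \text{otherwise,}
\end{cases}
\]
with output alphabet $\GG = E \cup \{\top, \bot\}$. As "Rabin" condition I take one pair per edge, $R = \{(E_e, F_e) : e \in E\}$, with $E_e = \{e\}$ and $F_e = (E \setminus \{e\}) \cup \{\bot\}$. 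Producing this object is routine: the $n^2$ transitions are obtained by a single pass over $V \times V$ after precomputing the adjacency relation of $G$ in $\O(n^2 + m)$, and recording for each transition which of the $m$ Rabin pairs it belongs to costs $\O(m)$ per transition; the total stays within the claimed $\O(mn^2)$ (and if $m = 0$ then $"L_G" = \emptyset$ and the bound is trivial).

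The real content is the correctness, $\L(\A_G) = "L_G"$. Since $\dd(q,a)$ leads to state $a$, on an input $w$ the run sits in state $q$ precisely after reading an occurrence of the letter $q$; hence the states visited infinitely often are exactly $\minf(w)$, and, by the standard fact on infinite walks in finite graphs, the edges of $G(\A_G)$ used infinitely often, together with the vertex set $\minf(w)$, form a strongly connected subgraph. Moreover, unfolding the "Rabin" condition, $w$ is accepted iff there is an edge $e$ such that the output $u$ of $w$ satisfies $e \in \minf(u)$, $\bot \notin \minf(u)$ and $\minf(u) \cap (E \setminus \{e\}) = \emptyset$ --- that is, $e$ is the \emph{only} edge colour produced infinitely often and $\bot$ is not produced infinitely often.

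I then check the two inclusions. For $"L_G" \subseteq \L(\A_G)$: if $\minf(w) = \{v,u\}$ with $e := \{v,u\} \in E$, then from some point on every transition taken stays inside $\{v,u\}$, hence carries colour $\top$ or $e$; and since $v$ and $u$ both recur there are infinitely many switches between them, so $e$ is produced infinitely often, which is exactly the accepting situation above. For $\L(\A_G) \subseteq "L_G"$: assume $w$ is accepted via the pair of some edge $e = \{v,u\}$; then from some point on only the colours $\top$ and $e$ occur, so every transition used past that point is either a self-loop or of the form $q \to a$ with $\{q,a\} = \{v,u\}$. The edge colour $e$ is produced infinitely often, so both $v$ and $u$ are visited infinitely often; and if some state $s$ visited infinitely often were distinct from $v$ and $u$, then the only recurrent transition leaving $s$ would be the self-loop at $s$, so $s$ could not reach $v$ inside the set of infinitely-used edges, contradicting strong connectivity of that subgraph. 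Hence $\minf(w) = \{v,u\}$ with $\{v,u\} = e \in E$, i.e. $w \in "L_G"$.

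The point that requires care is this last inclusion: one has to exclude the Rabin condition being satisfied ``by accident'' when $|\minf(w)| \ge 3$, which would happen, for instance, inside a triangle of $G$ if the automaton only recorded which pairs of letters are adjacent. This is precisely why the edge colour is emitted on the \emph{switches} and why the condition forbids both a second edge colour and $\bot$: a strongly connected subgraph on three or more vertices must leave each of its vertices through a non-self-loop edge, which forces at least two distinct edge colours (or $\bot$) to be produced infinitely often, destroying every Rabin pair. Everything else in the argument is bookkeeping.
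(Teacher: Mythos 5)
Your construction is essentially the paper's: the states are the vertices of $G$, each transition moves to the letter just read, and there is one Rabin pair per edge of $G$; the only real difference is that you replace the paper's colour alphabet $V\times V$ (where the pair $(v,u)$ records source state and letter) by the coarser alphabet $E\cup\{\top,\bot\}$, which is why you need the small additional argument excluding a third recurrent vertex sustained by its self-loop (in the paper's finer colouring such a self-loop produces $(s,s)\in F_{(v,u)}$ and kills the pair immediately), and you supply that argument correctly. The proof is correct and stays within the claimed $\O(mn^2)$ construction cost.
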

	\begin{proof}
		We define the automaton $\A_G=(Q,V, q_0, \dd, V\times V, R)$ as follows:
		\begin{itemize}
			\item $Q=V$.
			\item $q_0$ an arbitrary vertex in $Q$.
			\item $\dd(v,x)=(x,(v,x))$, for $v, x \in V$.
			\item $R=\{ (E_{(v,u)}, F_{(v,u)}) \: : \: {(v,u)\in E} \}$, where we define for each $(v,u)\in E$ the sets $E_{(v,u)}, F_{(v,u)}$ as:
			\begin{itemize}
				\item $E_{(v,u)}=\{(v,u)\}$
				\item $F_{(v,u)} = V\times V \setminus \{(v,u), (u,v), (v,v), (u,u)\}$.
			\end{itemize}
%
		\end{itemize}
	
	That is, the states of the automaton are the vertices of the graph $G$, and when we read a letter $u\in V$ we jump to the state $u$. The colours defining the "Rabin condition" are all pairs of vertices, and we define one Rabin pair for each edge of the graph. This Rabin pair will enable to accept words that eventually alternate between the endpoints of the edge.

	We prove that $"L_G"= \L(\A_G)$. If $w\in "L_G"$, the word $w$ is eventually of the form $(v^+u^+)^\oo$ for some $v,u\in V$ such that $(v,u)\in E$, so eventually we will only visit the states $v$ and $u$ of the automaton. If we are in $v$ and read the letter $v$, we produce the colour $(v,v)$ that is not contained in the set $F_{(v,u)}$. If we read the letter $u$, we will produce $(v,u)$, contained in $E_{(v,u)}$ and not appearing in $F_{(v,u)}$ nor in $F_{(u,v)}$. 
	The behaviour is symmetric from the state $u$. Let us denote $\aa\in E^\oo$ the word \kl(automata){output} by the automaton reading $w$. We obtain that $\minf(\aa)\cap E_{(v,u)} \neq \emptyset$ and $\minf(\aa)\cap F_{(v,u)} = \emptyset$, so the word $w$ is accepted by the automaton.
	
	Conversely, if a word $w$ is accepted by $\A_G$, then the \kl(automata){output} $\aa\in (V\times V)^\oo$ must verify $\minf(\aa)\cap F_{(v,u)} = \emptyset$ for some $(v,u)\in E$, so eventually $\aa$ only contains the pairs $(v,u), (u,v), (v,v)$ and $(u,u)$. If $w$ was eventually of the form $v^\oo$ or $u^\oo$, we would have that $\minf(\aa)\cap E_{(v,u)} = \emptyset$  for all $(v,u)\in E$. We conclude that the word $w$ eventually alternates between two vertices connected by an edge, so $w\in "L_G"$. 
	
	The automaton has $n$ states, the transition function $\dd$ has size $\O(n^2)$ and the Rabin condition $R$ has size $\O(mn^2)$.
	\end{proof}

	\begin{lemma}\label{lemma : RabinForLG = ChromaticNumberG}
		Let $G=(V,E)$ be a "simple" "undirected" "graph". Then, the \kl(automata){size} of a minimal "Rabin" automaton recognising $"L_G"$ coincides with the "chromatic number" of $G$.
	\end{lemma}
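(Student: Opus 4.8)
The plan is to establish the two inequalities $\mr(L_G) \le \chi(G)$ and $\mr(L_G) \ge \chi(G)$ separately.

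For the upper bound, I would convert an optimal proper colouring $c \colon V \to [1,\chi(G)]$ into a Rabin automaton $\B$ of size $\chi(G)$. Take the states of $\B$ to be the colour classes $[1,\chi(G)]$ and set $\dd(i,v) = (c(v),(i,v))$, so that reading the letter $v$ always moves to state $c(v)$ and the produced colour $(i,v)$ identifies the transition used. For each edge $(v,u) \in E$ add one Rabin pair: its first component consists of the two ``crossing'' transitions (reading $u$ from state $c(v)$, and reading $v$ from state $c(u)$), and its second component is the set of all transitions except the four transitions between the two states $\{c(v),c(u)\}$ labelled by letters in $\{v,u\}$. Since $c$ is proper we have $c(v) \ne c(u)$, so these four transitions are pairwise distinct and the pair is well defined. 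A short verification then shows that a run eventually avoids the second component of the pair for $(v,u)$ exactly when the input word eventually uses only the letters $v$ and $u$, and in that case it meets the first component infinitely often exactly when both $v$ and $u$ occur infinitely often; this is precisely the condition for the input to lie in $V^*(v^+u^+)^\oo$. Hence $\L(\B) = L_G$ and $\B$ has $\chi(G)$ states.

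For the lower bound, I would start from an arbitrary Rabin automaton $\A$ recognising $L_G$ and build a proper colouring of $G$ using its state set $Q$ as the set of colours. For a vertex $v$, the run of $\A$ on $v^\oo$ is ultimately periodic and hence loops around a simple cycle of transitions $C_v$; since $\minf(v^\oo) = \{v\}$ is not an edge of $G$, the word $v^\oo$ is rejected, so $C_v$ is a rejecting cycle. Define $c(v)$ to be an arbitrary state on $C_v$. To see that $c$ is proper, suppose $(v,u) \in E$ but $c(v) = c(u) = p$. Reading $v^{|C_v|}$ from $p$ returns to $p$ traversing exactly the transitions of $C_v$, and likewise $u^{|C_u|}$ from $p$ returns to $p$ traversing $C_u$; moreover $p$ is reachable from the initial state since it lies on the run on $v^\oo$. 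Hence there is an input word $y\,(v^{|C_v|}u^{|C_u|})^\oo$ whose run eventually loops through $p$ alternating the two loops, and the set of transitions it takes infinitely often is exactly $C_v \cup C_u$. This word has $\minf$ equal to the edge $\{v,u\}$, hence belongs to $L_G$, so it is accepted, meaning $C_v \cup C_u$ is an accepting cycle. But $C_v$ and $C_u$ are rejecting cycles with $p \in \mstates(C_v) \cap \mstates(C_u)$, so since $\A$ is a Rabin automaton, Proposition~\ref{prop: Rabin iff cycles} forces $C_v \cup C_u$ to be rejecting --- a contradiction. Therefore $c(v) \ne c(u)$ whenever $(v,u) \in E$, so $c$ is a proper colouring with at most $|Q|$ colours and $\chi(G) \le |Q|$; taking $\A$ of minimal size yields $\chi(G) \le \mr(L_G)$.

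The crux is the lower bound, and specifically the invocation of Proposition~\ref{prop: Rabin iff cycles}: the essential feature of Rabin conditions being exploited is that they cannot accept the union of two rejecting cycles sharing a state, which is exactly what two adjacent vertices receiving the same colour would demand. The one point requiring care is the construction of the witness word --- checking that $p$ is reachable, that the single-letter loops traverse exactly $C_v$ and $C_u$, and that the set of transitions visited infinitely often along $y\,(v^{|C_v|}u^{|C_u|})^\oo$ is precisely $C_v \cup C_u$ --- but this is routine once $C_v$ has been correctly identified as the loop of the run on $v^\oo$.
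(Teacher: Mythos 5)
Your proof is correct and follows essentially the same two-inequality strategy as the paper: for the upper bound, quotient the letter-driven automaton by the proper colouring (states = colour classes), and for the lower bound, assign to each vertex $v$ a state on the (rejecting) cycle reached by the run on $v^\oo$ and derive a contradiction from the fact that in a Rabin automaton the union of two rejecting cycles sharing a state is rejecting (Proposition~\ref{prop: Rabin iff cycles}), since adjacency of $v$ and $u$ would force $C_v \cup C_u$ to be accepting. The only substantive difference is in the upper bound: the paper defines a Muller automaton on the colour classes and invokes the nontrivial direction of Proposition~\ref{prop: Rabin iff cycles} to conclude that a Rabin condition can be placed on top of it, whereas you write the Rabin pairs explicitly (mirroring the size-$n$ automaton $\A_G$ from the preceding lemma), which makes that direction self-contained at the cost of a slightly longer verification. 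Your lower bound is, if anything, a bit more careful than the paper's, since you explicitly exhibit the reachable witness word $y\,(v^{|C_v|}u^{|C_u|})^\oo$ rather than arguing directly on the accepting/rejecting status of abstract cycles.
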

	\begin{proof}
		We denote $"\mr(L_G)"$ the size of a minimal Rabin automaton recognising $"L_G"$ and $"\chi(G)"$ the "chromatic number" of $G$.
		\begin{description}
			\item[$\rlg \leq "\chi(G)"$:] Let $c: V \rightarrow [1,k]$ be a colouring of size $k$ of $G$. We will define a Muller automaton of size $k$ recognising $"L_G"$ and then use Proposition~\ref{prop: Rabin iff cycles} to show that we can put a "Rabin condition" "on top of" it. Let $\A_c=(Q,V, q_0,\dd, V, \F)$ be the Muller automaton defined by:
			\begin{itemize}
				\item $Q=\{1,2, \dots, k  \}$.
				\item $q_0 = 1$.
				\item $\dd(q,x)=(c(x), x)$ for $q\in Q$ and $x\in V$.
				\item A set $C\subseteq V$ belongs to $\F$ if and only if $C=\{v,u\}$ for two vertices $v,u\in V$ such that $(v,u)\in E$.
			\end{itemize} 
			
			The language recognised by $\A_c$ is clearly $"L_G"$, since the \kl(automata){output} produced by a word $w\in V^\oo$ is $w$ itself, and the acceptance condition $\F$ is exactly the \kl(condition){Muller condition} defining the language $"L_G"$. 
			
			Let $"G(\A_c)"=(Q, E_{\A_c})$ be the "graph associated to" $\A_c$.
			We will prove that the union of any pair of \kl(cycle){rejecting} "cycles" of $\A_c$ that have some "state in common" must be \kl(cycle){rejecting}. By Proposition \ref{prop: Rabin iff cycles} this implies that we can define a "Rabin condition" "on top of" $\A_c$.
			
			Let $\ell_1, \ell_2\subseteq E_{\A_c}$ be two "cycles" such that $"\gg(\ell_i)"\notin \F$ for $i\in \{1,2\}$ and such that $\mstates(\ell_1)\cap\mstates(\ell_2)\neq \emptyset$. We distinguish $3$ cases:
			\begin{itemize}
				\item $|\gg(\ell_i)|\geq 3$ for some $i\in \{1,2\}$. In this case, their union also has more than $3$ colours, so it must be rejecting.
				\item $\gg(\ell_i)=\{u,v\}$, $(u,v)\notin E$ for some $i\in \{1,2\}$. In that case, $\gg(\ell_1\cup \ell_2)$ also contains two vertices that are not connected by an edge, so it must be rejecting.
				\item $\gg(\ell_1)=\{v_1\}$ and $\gg(\ell_2)=\{v_2\}$. In this case, since from every state $q$ of $\A_c$ and every $v\in V$ $\dd(q,v)=(c(v),v)$,  the only state in each "cycle" is, respectively, $c(v_1)$ and $c(v_2)$. As $\ell_1$ and $\ell_2$ share some state, we deduce that $c(v_1)=c(v_2)$. If $v_1=v_2$, $\ell_1\cup \ell_2$ is \kl(cycle){rejecting} because $|\gg(\ell_1\cup \ell_2)|=1$. If $v_1\neq v_2$, it is also \kl(cycle){rejecting} because $c(v_1)=c(v_2)$ and therefore $(v_1, v_2)\notin E$. 
			\end{itemize}
			
			Since $\gg(\ell_i)$ is \kl(cycle){rejecting}, it does not consist on two vertices connected by some edge and we are always in some of the cases above. We conclude that we can put a "Rabin" condition "on top of" $\A_c$, obtaining a Rabin automaton recognising $"L_G"$ of \kl(automata){size} $k$.
			
			\item[$"\chi(G)" \leq \rlg$:] Let $\A=(Q,V, q_0, \dd, \GG, R)$ be a "Rabin" automaton of \kl(automata){size} $k$ recognising $"L_G"$ and $"G(\A)"=(Q, E_\A)$ "its graph". We will define a "colouring" of size $k$ of $G$, $c: V \rightarrow Q$.
			
			For each $v\in V$ we define a subset $Q_v\subseteq Q$ as:
			\[ Q_v = \{ q \in Q \; : \; \text{there is a "cycle" } \ell \text{ "containing" } q \text{ and } \gg(\ell)=\{v\} \}.\]
			
			For every $v\in V$, the set $Q_v$ is non-empty, as it must exist a (non-accepting) "run over" $v^\oo$ in $\A$. For each $v\in V$ we pick some $q_v \in Q_v$, and we define the "colouring" $c: V \rightarrow Q$ given by $c(v)=q_v$. 
			
			In order to prove that it is indeed a "colouring", we we will show that any two vertices $v,u\in V$ such that  $(v,u)\in E$ verify that $Q_v \cap Q_u = \emptyset$, and therefore they also verify $c(v) \neq c(u)$. Suppose by contradiction that there was some $q\in Q_v\cap Q_u$. Let us write $\ell_x$ for a "cycle" containing $q$ labelled with $x$, for $x\in \{v,u\}$. By the definition of $"L_G"$, both "cycles" $\ell_v$ and $\ell_u$ have to be \kl(cycle){rejecting} as $x^\oo \notin "L_G"$ for any $x\in V$. However, since $(u,v)\in E$, their union would be \kl(cycle){accepting}, contradicting Proposition~\ref{prop: Rabin iff cycles}. \qedhere		 
		\end{description}
	\end{proof}
	
	These results allow us to deduce the $\NP$-completeness of the minimisation of "Rabin" automata.
	\begin{theorem}\label{theorem: MinimisingRabinNP-comp}
		Given a "Rabin" automaton $\A$ and a positive integer $k$, deciding whether there is an equivalent "Rabin" automaton of \kl(automata){size} $k$ is $\NPc$. 
	\end{theorem}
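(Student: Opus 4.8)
The plan is to assemble the statement from the ingredients already in place. Membership in $\NP$ is exactly the Corollary above: a non-deterministic machine guesses a Rabin automaton with $k$ states and checks, in polynomial time via Proposition~\ref{Prop: equivalence Rabin in P}, that it recognises $\L(\A)$.

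For $\NP$-hardness I would reduce from the $\CN$ problem, which is $\NPc$ by Karp. Given an instance $(G,k)$ with $G=(V,E)$ a simple undirected graph, set $n=|V|$ and $m=|E|$ and map $(G,k)$ to $(\A_G,k)$, where $\A_G$ is the $n$-state Rabin automaton recognising $L_G$ constructed in time $\O(mn^2)$ in the Lemma above; in particular the reduction is polynomial. By Lemma~\ref{lemma : RabinForLG = ChromaticNumberG}, a minimal Rabin automaton for $L_G$ has exactly $\chi(G)$ states, so the inequality $\chi(G)\leq k$ — equivalently, the existence of a colouring of $G$ of size $k$ — holds if and only if there is a Rabin automaton with at most $k$ states recognising $L_G$.

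The last point to settle is the gap between ``at most $k$ states'' and the ``exactly $k$ states'' phrasing of the statement. If some equivalent automaton has at most $k$ states, add to it as many fresh unreachable states as needed, each equipped with a self-loop on every letter of $V$; the result is deterministic, complete, equivalent to $\A_G$, and has exactly $k$ states. Conversely, an equivalent automaton with exactly $k$ states is in particular one with at most $k$ states, hence $\mr(L_G)\leq k$. Therefore $(G,k)$ is a yes-instance of $\CN$ if and only if $(\A_G,k)$ is a yes-instance of the Rabin minimisation problem, which finishes the proof. I do not expect a genuine obstacle: the two inequalities $\mr(L_G)\leq \chi(G)$ and $\chi(G)\leq \mr(L_G)$ — the substantive content — are already established in Lemma~\ref{lemma : RabinForLG = ChromaticNumberG}, and the only care needed is the routine padding bookkeeping just described.
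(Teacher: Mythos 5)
Your proposal is correct and follows exactly the paper's route: membership in $\NP$ via the guess-and-check corollary based on Proposition~\ref{Prop: equivalence Rabin in P}, and $\NP$-hardness by reducing $\CN$ through the automaton $\A_G$ and Lemma~\ref{lemma : RabinForLG = ChromaticNumberG}. The extra remark about padding with unreachable states to pass from ``at most $k$'' to ``exactly $k$'' is a harmless bookkeeping detail the paper leaves implicit.
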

	
	\begin{corollary}
		Given a "Streett" automaton $\A$ and a positive integer $k$, deciding whether there is an equivalent "Streett" automaton of \kl(automata){size} $k$ is $\NPc$.
	\end{corollary}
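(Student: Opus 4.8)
The plan is to derive this immediately from Theorem~\ref{theorem: MinimisingRabinNP-comp} together with Remark~\ref{remark:RabinAndStreett}, which establishes that the minimisation problems for Rabin and Streett automata are polynomial-time equivalent via complementation: reinterpreting the pairs of a Rabin automaton as defining a Streett condition over the same transition structure yields a Streett automaton recognising the complement language, with the same number of states, and conversely.

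\textbf{Containment in $\NP$.} Given a Streett automaton $\A$ over $\SS$ and an integer $k$, a non-deterministic Turing machine guesses a Streett automaton $\A_k$ with $k$ states and must verify $\L(\A)=\L(\A_k)$ in polynomial time. To do this, I would pass to the Rabin complements: interpreting the Streett pairs of $\A$ and of $\A_k$ as Rabin pairs over their respective structures produces, by Remark~\ref{remark:RabinAndStreett}, Rabin automata $\widehat{\A}$ and $\widehat{\A_k}$ recognising $\SS^\oo\setminus\L(\A)$ and $\SS^\oo\setminus\L(\A_k)$ respectively. These are computed in linear time, and $\L(\widehat{\A})=\L(\widehat{\A_k})$ holds if and only if $\L(\A)=\L(\A_k)$; since equivalence of Rabin automata is decidable in polynomial time by Proposition~\ref{Prop: equivalence Rabin in P}, the verification runs in polynomial time.

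\textbf{$\NP$-hardness.} I would reduce from the Rabin minimisation problem, shown $\NP$-hard in Theorem~\ref{theorem: MinimisingRabinNP-comp}. Given a Rabin automaton $\A$ over $\SS$ and an integer $k$, reinterpret its Rabin pairs as a Streett condition over the same transition structure; by Remark~\ref{remark:RabinAndStreett} this yields, in linear time, a Streett automaton $\A'$ with $\L(\A')=\SS^\oo\setminus\L(\A)$. Again by Remark~\ref{remark:RabinAndStreett}, the size of a minimal Rabin automaton recognising $\L(\A)$ equals the size of a minimal Streett automaton recognising $\SS^\oo\setminus\L(\A)=\L(\A')$. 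Hence $\A$ admits an equivalent Rabin automaton of size $k$ if and only if $\A'$ admits an equivalent Streett automaton of size $k$, which is exactly the required reduction.

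There is no real obstacle here: all the content lies in Remark~\ref{remark:RabinAndStreett} (the structure- and size-preserving Rabin/Streett complementation duality) and in Theorem~\ref{theorem: MinimisingRabinNP-comp}. The only minor points to keep in mind are that asking for an equivalent automaton of size exactly $k$ versus of size at most $k$ makes no difference, since one may always pad an automaton with unreachable states, and that the transformation $\A\mapsto\A'$ is purely syntactic and clearly polynomial.
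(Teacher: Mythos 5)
Your proposal is correct and follows exactly the route the paper intends: the corollary is derived from Theorem~\ref{theorem: MinimisingRabinNP-comp} via the size-preserving Rabin/Streett complementation duality of Remark~\ref{remark:RabinAndStreett}, with equivalence checking handled through Proposition~\ref{Prop: equivalence Rabin in P}. Nothing is missing.
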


	\subsection{Parity and generalised Büchi automata recognising Muller languages can be minimised in polynomial time}\label{subsec: MinimisationParity}

	In Section \ref{subsec: MinimisationRabin} we have proven the $\NP$-hardness of the minimisation of "Rabin" automata showing a reduction that uses a language that is a \kl(language){Muller language}, that is, whether an infinite word $w$ belongs to the language only depends on $\minf(w)$.
	We may wonder whether \kl(language){Muller languages} could be used to prove $\NP$-hardness of minimising "parity" or "generalised Büchi" automata. We shall see now that this is not the case, as we prove in Propositions~\ref{prop : MinimisationParityPoly} and~\ref{prop : MinimisationGeneralisedBuchiPoly} that we can minimise parity and generalised Büchi automata recognising \kl(language){Muller languages} in polynomial time. Before proving these propositions we fix some notations.\\
	
	For the rest of the section, let $\SS$ stand for an input alphabet, $\F$ for a \kl(condition){Muller condition} over $\SS$ and let $\A=(Q,\SS, q_0, \dd, \GG, P)$ be a "parity" "automaton" recognising $L_\F$, with $\GG$ a finite subset of $\NN$.
	 Let $"G(\A)"=(V_\A, E_\A)$ be the "graph associated to" $\A$, and $"\iota"$ and $\gamma$ the projections over the input and output alphabet, respectively. We can suppose without loss of generality that $"G(\A)"$ is "strongly connected", since any "ergodic" "strongly connected component" of it must recognise the Muller language $"L_\F"$. We call a subgraph $\S=(V_S, E_S)$ of $G(\A)$ an \AP""$\A$-subgraph"" and we write $\AP""\mathit{Letters}""(\S)= \bigcup_{e\in E_S} "\iota"(e)$ and $\AP""\mathit{Max\hyphen Priority}""(\S)=\max\limits_{e\in E_S} \gg(e)$. 
	 
	 We say that $\S$ is \AP""complete"" if for all vertices $v\in V_S$ and letters $a\in \mletters(\S)$, $v$ has an outgoing edge $e$ in $\S$ such that $"\iota"(e)=a$. Let $C\subseteq \SS$ be a set of letters; we say that a subgraph $\S$ of $G(\A)$ is a \AP""$C$-Strongly Connected Component"" of $\A$ ($C$-SCC) if it is a "strongly connected" subgraph, "complete" and $\mletters(\S)=C$. 
	 	
	 \begin{fact}
	 	A "$C$-SCC" of $\A$ is naturally a "parity" "automaton" recognising the \kl(condition){Muller condition} $\F|_C = \F\cap \P(C)$ over $C$.
	 \end{fact}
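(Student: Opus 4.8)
The plan is to regard the $C$-SCC $\S=(V_S,E_S)$ as a parity automaton over the alphabet $C$ without altering anything except designating an initial state: we keep the transitions of $\S$, keep the priority map $\gg$ restricted to $E_S$, keep the parity acceptance rule, and pick an arbitrary vertex $q\in V_S$ as initial state. Two things then remain to be checked: that this is a legitimate deterministic complete automaton over $C$, and that it recognises $L_{\F|_C}$ with $\F|_C=\F\cap\P(C)$.

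First, legitimacy and the fact that runs of $\S$ are runs of $\A$. Since $E_S\subseteq E_\A$ and $\dd$ is a function, each vertex $v\in V_S$ has at most one outgoing edge of $\S$ labelled by a given $a\in C$; since $\S$ is complete and $\mletters(\S)=C$, it has at least one. Hence $\dd$ restricts to a complete deterministic transition function $V_S\times C\to V_S\times\GG$, so $\S$ is a bona fide parity automaton over $C$. Moreover, because $\S$ is complete, the unique $\A$-edge leaving a vertex of $V_S$ with a letter of $C$ already belongs to $E_S$; thus for $w\in C^\oo$ an immediate induction shows that the run of $w$ in $\A$ started at $q$ never leaves $V_S$ and uses only edges of $E_S$, so it coincides with the run of $w$ in $\S$ from $q$ and has the same priority sequence. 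Consequently $w$ is accepted by $\S$ (initial state $q$) if and only if it is accepted by $\A_q$, that is, $\A$ with its initial state moved to $q$.

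Second, independence from the choice of $q$ and identification of the language. Here I use that $G(\A)$ is strongly connected (assumed without loss of generality): there is $u\in\SS^*$ with $\dd(q_0,u)=q$. The run of $uw$ from $q_0$ passes through $q$ and has the same infinite suffix of priorities as the run of $w$ from $q$, so by prefix-independence of the parity condition $uw\in\L(\A)$ if and only if $w$ is accepted by $\A_q$. Chaining this with the previous paragraph and $\L(\A)=L_\F$, and using that $u$ is finite so $\minf(uw)=\minf(w)$, we obtain: $w$ is accepted by $\S$ iff $uw\in L_\F$ iff $\minf(uw)\in\F$ iff $\minf(w)\in\F$. Finally $\minf(w)\subseteq\mletters(\S)=C$, so this is equivalent to $\minf(w)\in\F\cap\P(C)=\F|_C$. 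Hence $\L(\S)=L_{\F|_C}$, which is exactly the claim.

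I do not anticipate a genuine obstacle: the statement is essentially a matter of unravelling the definitions of $C$-SCC, of the restricted condition $\F|_C$, and of parity acceptance. The only point deserving a line of justification is the independence of the recognised language from the choice of initial vertex of $\S$, which is precisely where prefix-independence of Muller languages and the strong-connectedness assumption on $G(\A)$ are used.
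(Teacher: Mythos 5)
Your proof is correct, and it is exactly the argument the paper leaves implicit: the Fact is stated without proof, the intended justification being precisely this unravelling of the definitions (determinism of $\dd$ plus completeness of the $C$-SCC make it a deterministic complete parity automaton whose runs are runs of $\A$, and strong connectedness of $G(\A)$ together with prefix-independence of the Muller language identify the recognised language as $L_{\F\cap\P(C)}$, independently of the chosen initial vertex). Nothing is missing.
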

 	\begin{lemma}\label{lemma: exists SCC}
 		For every subset of letters $C\subseteq \SS$ there is some "$C$-SCC" in $\A$.
 	\end{lemma}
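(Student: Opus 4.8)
The plan is to carry out the construction inside the $\A$-subgraph obtained from $G(\A)$ by keeping all vertices but only the edges whose label lies in $C$. Write $G_C=(V_\A,E_C)$ for this subgraph, where $E_C=\{e\in E_\A : \iota(e)\in C\}$. The point of this restriction is that, since $\A$ is a complete deterministic automaton, every vertex of $G_C$ has exactly one outgoing edge labelled $a$ for each $a\in C$; so if $C\neq\emptyset$ every vertex has at least one outgoing edge in $G_C$. (If $C=\emptyset$ the statement is trivial: any single vertex of $G(\A)$, carrying no edges, is a vacuously strongly connected and complete $\A$-subgraph with $\mletters=\emptyset$. From now on I assume $C\neq\emptyset$.)

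Next I would pass to a terminal strongly connected component of $G_C$: since $V_\A=Q$ is finite and nonempty, the condensation of $G_C$ (the DAG of its strongly connected components) is finite and acyclic, hence has a sink, i.e.\ there is a strongly connected component $\S$ of $G_C$ from which no edge of $G_C$ leaves. I take $\S$ to be the $\A$-subgraph consisting of the vertices of this component together with all edges of $E_C$ between them; by construction it is strongly connected and nonempty.

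It then remains to verify that $\S$ is a $C$-SCC, namely that it is complete and that $\mletters(\S)=C$. Since $E_\S\subseteq E_C$, every edge of $\S$ carries a label in $C$, so $\mletters(\S)\subseteq C$. Conversely, given any $v\in V_\S$ and any $a\in C$, the unique $a$-labelled outgoing edge of $v$ in $G(\A)$ lies in $E_C$ (because $a\in C$) and cannot leave $\S$ (because $\S$ is terminal in $G_C$), so it belongs to $E_\S$; this simultaneously shows $C\subseteq\mletters(\S)$ and the completeness of $\S$. Hence $\mletters(\S)=C$, and $\S$ is a $C$-SCC of $\A$.

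I do not expect a genuine obstacle here: the only delicate point is realising that one must first discard the edges outside $C$ and then take a \emph{terminal} strongly connected component — this is exactly what guarantees that the required $a$-successors for $a\in C$ stay inside $\S$ — while completeness of $\A$ as an automaton is what ensures those successors exist at all.
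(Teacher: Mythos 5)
Your proof is correct and follows essentially the same route as the paper: restrict $G(\A)$ to the edges labelled by letters of $C$ and take an ergodic (terminal) strongly connected component, with completeness and determinism of $\A$ guaranteeing that the required $a$-successors for $a\in C$ exist and stay inside that component. You simply spell out the verification that the paper leaves implicit.
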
 
   \begin{proof}
  	It suffices to take an "ergodic" component of the graph obtained by restricting the transitions of $\A$ to those labelled with letters in $C$.
  \end{proof} 
	
	We denote $\AP""\mathtt{SCC\hyphen Decomposition}""(\S)$ a procedure that returns a list of the "strongly connected components" of an "$\A$-subgraph" $\S$ (what can be done in linear time in the number of vertices and edges \cite{Tarjan72DepthFirst},\cite{Sharir81StrongConnectivity}).
	
	We denote $\AP""\mathtt{Complete\hyphen SCC}""(C, \S)$ a procedure that takes as input a set of letters $C\subseteq \SS$ and an "$\A$-subgraph" that contains some $C$-SCC  and it returns in linear time a "$C$-SCC" of $\S$. We can do it simply by taking the restriction of $\S$ to edges labelled with letters in $C$, computing a "strongly connected component" decomposition of it  and choosing some "ergodic" component of it.
	
	The procedure $\AP""\mathtt{MaxInclusion}""(F)$ takes as input a family of subsets $F\subseteq \P(\SS)$ and returns a list containing the elements of $F$ that are maximal with respect to inclusion. This can be done in $\O(|\SS|^2 |F|^2)$.

	\begin{proposition}\label{prop : MinimisationParityPoly}
		Let $\F\subseteq \P(\SS)$ be a \kl(condition){Muller condition}. Given a "parity" automaton recognising $"L_\F"$, we can build in polynomial time a minimal parity automaton recognising $"L_\F"$.\footnote{In a previous version of this paper, we asked if this Proposition held in the state-based scenario. The answer is yes, since we can use the Zielonka tree of the Muller condition to build a minimal state-based parity automaton recognising the Muller condition. This result was independently proven by Philipp Meyer.}	
	\end{proposition}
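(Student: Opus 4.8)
The plan is to reduce the problem to computing the Zielonka tree $\ZF$ of $\F$: once $\ZF$ is known, Proposition~\ref{prop: minimalParityZielonka} produces a minimal parity automaton for $L_{\F}$ in time linear in the representation of $\ZF$. What makes this efficient in the present setting is a size bound on $\ZF$. Since $\A$ is itself a parity automaton recognising $L_{\F}$, the minimality part of Proposition~\ref{prop: minimalParityZielonka} forces $\ZF$ to have at most $|Q|$ leaves; and along every branch of $\ZF$ the labels form a strictly decreasing chain of subsets of $\SS$, so the depth of $\ZF$ is at most $|\SS|$. Since a rooted tree with $\ell$ leaves and depth $d$ has $\O(\ell d)$ nodes, $\ZF$ has $\O(|Q|\,|\SS|)$ nodes, so its representation has polynomial size. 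It therefore suffices to construct $\ZF$ from $\A$ in polynomial time.

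To do so, I would unfold the recursive Definition~\ref{def: ZielonkaTree}, maintaining at the node labelled $C$ a $C$-SCC $\S$ of $\A$ as a witness. We may assume $G(\A)$ is strongly connected — otherwise restrict to an ergodic strongly connected component, which still recognises $L_{\F}$ and, by completeness of $\A$, is an $\SS$-SCC — so the root node is witnessed by $G(\A)$, and a witness for a child $A_i$ can be taken to be an $A_i$-SCC contained in $\S$ (restricting $\S$ to its $A_i$-labelled edges and taking an ergodic component produces one, which is complete because $\S$ is; cf.\ Lemma~\ref{lemma: exists SCC}). Two computations are needed at a node with label $C$ and witness $\S$. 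First, to know whether $C \in \F$: since $\S$ is strongly connected it admits a closed walk traversing all of its edges, reading a finite word whose $\oo$-power $w$ satisfies $\minf(w) = \mletters(\S) = C$ and whose run in $\A$ has maximal infinitely-repeated priority $\maxpr(\S)$; as $\L(\A) = L_{\F}$, this yields $C \in \F$ if and only if $\maxpr(\S)$ is even. Second, to compute the maximal proper subsets $A_1, \dots, A_k \subsetneq C$ with $A_i \in \F \Leftrightarrow C \notin \F$: writing $p = \maxpr(\S)$, delete from $\S$ the edges of priority $p$, take an SCC decomposition of what remains, keep the set of letters of each resulting component whose maximal priority has parity opposite to $p$, recurse into the other components (deleting their own top priorities, and so on), and apply $\mathtt{MaxInclusion}$ to the list of letter sets thus collected; then recurse at each child $A_i$. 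The inner recursion terminates because top priorities strictly decrease along it, and all operations involved — SCC decompositions, extracting a complete strongly connected component for a given letter set, $\mathtt{MaxInclusion}$ — are polynomial; combined with the polynomial bound on $|\ZF|$ and the linear-time construction of Proposition~\ref{prop: minimalParityZielonka}, this gives a polynomial-time algorithm.

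The step I expect to be the crux is the correctness of the second computation: that the SCC-peeling returns exactly the children of $C$ prescribed by Definition~\ref{def: ZielonkaTree}. This is where the hypothesis that $L_{\F}$ is a Muller language is essential. On one hand, the closed-walk argument above applies to any strongly connected subgraph $\S'$ of $G(\A)$ and shows that $\mletters(\S') \in \F$ if and only if $\maxpr(\S')$ is even. On the other hand — and this uses that $\F$ depends only on $\minf$ — every sub-alphabet $X \subseteq C$ is realised inside the fixed $C$-SCC $\S$: restricting $\S$ to its $X$-labelled edges and taking an ergodic component gives a strongly connected (and complete) subgraph with $\mletters$ equal to $X$. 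Hence every sub-alphabet relevant to the Zielonka recursion at $C$, together with its $\F$-status, is witnessed inside $\S$ through maximal priorities; in particular a maximal proper $A \subsetneq C$ with $A \in \F \Leftrightarrow C \notin \F$ is witnessed by a subgraph whose maximal priority has parity opposite to $p = \maxpr(\S)$, which therefore contains no $p$-edge and so lies in one of the components produced after deleting the $p$-edges. An induction on the number of distinct priorities then shows that the peeling procedure, followed by $\mathtt{MaxInclusion}$, returns precisely these maximal subsets, which completes the argument.
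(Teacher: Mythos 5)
Your proposal is correct and follows essentially the same route as the paper: it reconstructs the Zielonka tree from the given parity automaton by the same peeling procedure (assume the automaton strongly connected, remove the maximal priority, take an SCC decomposition, collect the letter sets of components whose top priority has opposite parity, recurse into the others, then keep maximal sets), maintaining complete $C$-SCC witnesses as in the paper's $\mathtt{Zielonka\hyphen Tree}$/$\mathtt{AlternatingSets}$ algorithms, and then applies Proposition~\ref{prop: minimalParityZielonka}. The only difference is in the complexity accounting: you bound the size of $\Z_\F$ a priori (at most $|Q|$ leaves by the minimality part of Proposition~\ref{prop: minimalParityZielonka}, depth at most $|\SS|$), whereas the paper obtains an explicit $\O(d^2c^2n^4)$ bound by solving recurrences for the algorithm itself.
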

	
	\begin{proof}
		By Proposition~\ref{prop: minimalParityZielonka}, we know that a minimal "parity" automaton recognising the \kl(language){Muller language} associated to the condition $\F\subseteq \P(\SS)$ can be constructed in linear time from its "Zielonka tree", $\ZF$. 
		 Let $\A$ be a parity automaton recognising $"L_\F"$. We give an algorithm (Algorithm \ref{algo: ZielonkaTree}) building $\ZF$ that works in $\O(d^2c^2n^4)$, where $d$ is the number of priorities in $\A$, $c=|\SS|$ and $n=|\A|$. This algorithm builds $\ZF$ recursively: first it calls the sub-procedure $\mathtt{AlternatingSets}$ (Algorithm \ref{algo: AlternatingComp}) that returns the labels of the children of the root of $\ZF$ and then it uses $"\mathtt{Complete \hyphen SCC}"$ to compute a "$C$-SCC" for each label $C$ of the children.
		

		\begin{algorithm}[ht]
			\caption{ $\mathtt{Zielonka \hyphen Tree(\S)}$}
			\label{algo: ZielonkaTree}
			\begin{algorithmic}[1]
				\State $\mathbf{\tinput:}$ An "$\A$-subgraph" $\S$
				\State $C_\S= \mletters(\S)$
				\State $\langle C_1, \dots, C_k\rangle = \mathtt{AlternatingSets}(\S)$
				\If{$k=0$ (Empty list)}  \label{lineZT : IfEmptyList}
				\State \Return $\ZF = \langle C_\S, \langle \emptyset \rangle \rangle$
				\Else 
				\For{$i=1 \dots k$}
				\State $\S_i = "\mathtt{Complete\hyphen SCC}"(C_i, \S)$ \label{lineZT : C-SCC}
				\State $T_i = \mathtt{Zielonka \hyphen Tree}(S_i)$
				\EndFor
				\State \Return $\ZF = \langle C_\S, \langle T_1,\dots ,T_k \rangle \rangle$
				\EndIf
			\end{algorithmic}
		\end{algorithm}
	
		\begin{algorithm}[ht]
		\caption{ The sub-procedure $\mathtt{AlternatingSets}$}
		\label{algo: AlternatingComp}
		\begin{algorithmic}[1]
			\State $\mathbf{\tinput:}$ An "$\A$-subgraph" $\S$ 
			\State $p=\maxpr(\S)$
			\State Compute $\S_p = (V_p, E_p)$, where $E_p=\{ e\in E \; : \; \gg(e)<p\}$ and $V_p$ are the vertices that have some outgoing edge labelled with a priority smaller than $p$.
			\State $L = \langle \S_1, \dots, \S_l \rangle = "\mathtt{SCC\hyphen Decomposition}"(\S_p)$ \label{line AS : SCC} 
			\State $\mathrm{AltSets}=\{ \}$ \Comment{Initialise an empty set}
			\For{$i=1,... l$} \label{line AS : For}
			\If{$\maxpr(\S_i)$ is odd if and only if $p$ is even}
			\State $\mathrm{AltSets}.\mathtt{add}(\mletters(\S_i))$
			\Else 
			\State $\mathrm{AltSets} =  \mathrm{AltSets} \, \cup \, \mathtt{AlternatingSets}(\S_i)$ \label{line AS : RecursiveAltenation}
			\EndIf\EndFor
		\State $\mathrm{MaxAltSets}="\mathtt{MaxInclusion}"(\mathrm{AltSets})$ \label{line AS : MaxLetters}
		\State \Return $\mathrm{MaxAltSets}$
		\end{algorithmic}
	\end{algorithm}

	\paragraph{Correctness of the algorithm.}
	 Let $T$ be the labelled tree returned by the algorithm $\mathtt{Zielonka \hyphen Tree}(G(\A))$. We prove that $T$ is the "Zielonka tree" of $\F$ by induction. We suppose without loss of generality that $\SS \in \F$, which implies that the maximal priority $p$ in $\A$ is even (as we have supposed that $\A$ is strongly connected).

			If $\ZF=\langle \SS, \langle \emptyset \rangle \rangle$ is a "leaf", we are going to prove that the procedure $\mathtt{AlternatingSets}(G(\A))$ returns an empty set, and therefore Algorithm \ref{algo: ZielonkaTree} enters in the conditional of line \ref{lineZT : IfEmptyList} and returns $\ZF$. Indeed, if $\ZF$ is a "leaf" it means that there is no subset of $\SS$ not belonging to $\F$. Therefore, if we remove recursively the transitions labelled with the maximal priority from $\A$ we must obtain a graph in which the maximal priority of any "strongly connected component" is even (if not, we would reject a word producing a run visiting all the transitions of this SCC). We conclude that the algorithm  $\mathtt{AlternatingSets}$ does not add any element to the set $\mathrm{AltSets}$ in the cycle of line \ref{line AS : For}.

			If $\ZF=\langle \SS, \langle \ZFi{1}, \dots, \ZFi{k} \rangle \rangle$, we prove that $\mathtt{AlternatingSets}(G(\A))$ returns a list of sets $C_1,\dots , C_k$ corresponding to the labels of the children of the root of $\ZF$. Then, we know that each "$C_i$-SCC" computed in line \ref{lineZT : C-SCC} corresponds to a parity automaton recognising the Muller condition associated to the subtree under the $i$-th child and thus, by induction hypothesis, $T = \ZF$.
			
			First, we remark that each set $C$ added to $\mathrm{AltSets}$ verifies that $C\notin \F$, since we only add it if it is the set of letters labelling one "strongly connected component" whose maximal priority is odd. As in line \ref{line AS : MaxLetters} we only keep the maximal subsets of $\mathrm{AltSets}$, it suffices to show that any maximal rejecting subset $C$ will be included in some set added to $\mathrm{AltSets}$. Let $C$ be such a set. By Lemma \ref{lemma: exists SCC} we know that there is some "$C$-SCC" $\S_C$ in $\A$, and it must verify that the maximal priority on it is odd. Let $\S'$ be a "strongly connected" subgraph of $"G(\A)"$ containing $\S_C$. Since $C$ is maximal amongst the rejecting subsets, either $\mletters(\S')=C$ or the maximal priority of $\S'$ is even. Therefore, in line \ref{line AS : RecursiveAltenation} we will disregard any SCC $\S'$ containing $\S_C$ that does not verify $\mletters(\S')=C$ and we will recursively inspect it until finding one labelled with $C$. We note, that all subgraphs considered in this process contain $\S_C$ since the maximal priority of it is odd.

		\paragraph{Complexity analysis.} 
		Let $n=|\A|$, $d$ be the number of priorities appearing in $\A$, $c=|\SS|$, $f(n,d,c)$ be the complexity of $\mathtt{Zielonka \hyphen Tree}(G(\A))$ and $g(n,d,c)$ the complexity of $\mathtt{AlternatingSets}(G(\A))$. We start by obtaining an upper bound for $g(n,d,c)$. It is verified:
		\[ g(n,d,c) \leq g(n_1,d-2,c) + \dots + g(n_l,d-2,c) + \O(c^2n^2) \]
		for some $1 \leq n_1,\dots, n_l\leq n$ such that $\sum_{i=1}^{l}n_i \leq n$.
		
		We obtain a weighted-tree such that the sum of the weights of the children of some node $\tt$ is at most the weight of $\tt$. The height of this tree is at most $\lceil d/2 \rceil$ and the weight of the root is $n$. One such tree has less than $\O(dn)$ nodes, so $g(n,d) = \O(dc^2n^3)$.
		
		For the complexity of $f(n,d,c)$ we have a similar recurrence:
		\[ f(n,d,c) \leq f(n_1, d-1,c) + \dots + f(n_k,d-1,c) + k\O(cn)+ g(n,d,c) \]
		for some $1 \leq n_1,\dots, n_k\leq n-1$ such that $\sum_{i=1}^{k}n_i \leq n$.

		Replacing $g(n,d,c)$ by $\O(dc^2n^3)$ and using an argument similar to the above one, we obtain that $f(n,d,c)=\O(d^2c^2n^4)$.
		\qedhere		 
	\end{proof}


\begin{proposition}\label{prop : MinimisationGeneralisedBuchiPoly}	
	Let $\F\subseteq \P(\SS)$ be a \kl(condition){Muller condition}. If $"L_\F"$ can be recognised by a "generalised Büchi" (resp. "generalised co-Büchi") automaton, then, it can be recognised by one such automaton with just one state. Moreover, this minimal automaton can be built in polynomial time from any "generalised Büchi" (resp. generalised co-Büchi) automaton recognising $"L_\F"$.
\end{proposition}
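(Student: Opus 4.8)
The plan is to fix a generalised Büchi automaton $\A=(Q,\SS,q_0,\dd,\GG,\{B_1,\dots,B_r\})$ recognising $"L_\F"$ and, after assuming as usual that $"G(\A)"$ is strongly connected (legitimate since $"L_\F"$ is prefix-independent and any ergodic strongly connected component of $\A$ already recognises it; recall also that $\A$ is complete), to read off from it a one-state automaton. Writing $\gg$ also for the colouring extended to sets of edges by union, the argument rests on two facts. First, since $"L_\F"$ is a Muller language, for every nonempty $C\subseteq\SS$ there is a cycle $\ell$ of $\A$ with $"\iota"(\ell)=C$ --- for instance the edge set of a $C$-SCC, which exists by Lemma~\ref{lemma: exists SCC} and forms a cycle because a strongly connected graph admits a closed walk through all of its edges --- and every such cycle is accepting if $C\in\F$ and rejecting if $C\notin\F$: indeed $\ell$ is reachable from $q_0$ and is the set of edges visited infinitely often along the run of some word $w$ with $\minf(w)=C$, so its status is determined by whether $w\in"L_\F"$. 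Second, generalised Büchi acceptance of cycles is monotone: if $\ell\subseteq\ell'$ are cycles of $\A$ and $\ell$ is accepting, then $\gg(\ell)\subseteq\gg(\ell')$ meets every $B_i$, so $\ell'$ is accepting too.

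I would first deduce that $\F$ is upward closed, equivalently that the nonempty subsets of $\SS$ outside $\F$ form a downward-closed family. Let $S\subseteq S'$ be nonempty with $S\in\F$. Take an $S'$-SCC of $\A$; being complete, restricting it to the edges labelled in $S$ and taking an ergodic component yields (as in the proof of Lemma~\ref{lemma: exists SCC}) an $S$-SCC contained in it. Let $\ell_S\subseteq\ell_{S'}$ be the respective edge sets; these are cycles of $\A$ with $"\iota"(\ell_S)=S$ and $"\iota"(\ell_{S'})=S'$. By the first fact $\ell_S$ is accepting, by monotonicity $\ell_{S'}$ is accepting, and by the first fact again $S'="\iota"(\ell_{S'})\in\F$. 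Consequently the nonempty sets outside $\F$ are exactly the nonempty subsets of the inclusion-maximal such sets $M_1,\dots,M_p$, and the one-state automaton $\B$ over $\SS$ with output alphabet $\SS$, transition function $\dd_\B(q_0,a)=(q_0,a)$ and generalised Büchi condition $\{\SS\setminus M_1,\dots,\SS\setminus M_p\}$ recognises $"L_\F"$: a word $w$ is accepted by $\B$ iff $\minf(w)\cap(\SS\setminus M_j)\neq\emptyset$ for all $j$, iff $\minf(w)$ is contained in no $M_j$, iff $\minf(w)\in\F$. (When $p=0$ this is the empty condition, which accepts all words, and $"L_\F"=\SS^\oo$.)

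The ``moreover'' part then amounts to computing the $M_j$ without enumerating the subsets of $\SS$. For each $i\in\{1,\dots,r\}$ put $E^{(i)}=E_\A\setminus\gg^{-1}(B_i)$ and compute the strongly connected components of $(Q,E^{(i)})$. I claim that a nonempty $C\subseteq\SS$ lies outside $\F$ iff $C\subseteq\mletters(T)$ for some strongly connected component $T$ of some $(Q,E^{(i)})$. For one direction, the first fact gives a rejecting cycle $\ell$ with $"\iota"(\ell)=C$; ``rejecting'' for generalised Büchi means $\gg(\ell)\cap B_i=\emptyset$ for some $i$, i.e. $\ell\subseteq E^{(i)}$, and as $\ell$ is strongly connected all of its states, and hence all of its edges, lie in a single component $T$ of $(Q,E^{(i)})$, so $C="\iota"(\ell)\subseteq\mletters(T)$. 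Conversely, the edge set of such a $T$ (when nonempty) is a cycle with no colour in $B_i$, hence rejecting, so by the first fact $\mletters(T)\notin\F$, and then any nonempty $C\subseteq\mletters(T)$ lies outside $\F$ by downward closure. Therefore the $M_j$ are the inclusion-maximal elements among the at most $r\,|Q|$ nonempty sets $\mletters(T)$, and extracting the maximal ones among them takes polynomial time; the whole construction runs in polynomial time in $|\A|$.

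For the generalised co-Büchi case I would invoke duality: by Remark~\ref{remark:RabinAndStreett}, a generalised co-Büchi automaton recognising $"L_\F"$ is, on the same transition structure, a generalised Büchi automaton recognising $\SS^\oo\setminus "L_\F"=L_{\F'}$ with $\F'=\P(\SS)\setminus\F$, which is again a Muller language; the construction above produces a one-state generalised Büchi automaton for $L_{\F'}$, and rereading its condition as a generalised co-Büchi condition gives a one-state generalised co-Büchi automaton for $"L_\F"$, still in polynomial time. The step I expect to require the most care is the ``polynomial time'' claim: once $\F$ is known to be upward closed the correctness of $\B$ is immediate, but a priori nothing bounds the number of maximal sets outside $\F$, and the crux is the observation that each of them occurs as $\mletters(T)$ for a strongly connected component $T$ of one of the $r$ graphs $(Q,E^{(i)})$. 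Proving upward-closedness itself in a clean way --- through nested complete SCCs and the monotonicity of generalised Büchi acceptance, rather than by explicitly splicing cycles together --- is the other slightly delicate point.
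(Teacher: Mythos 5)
Your proposal is correct, and its algorithmic half coincides with the paper's proof: you remove, for each $i$, the edges coloured in $B_i$, take the strongly connected components of the resulting graphs, project them to the input alphabet, and keep the inclusion-maximal sets; these are exactly the maximal rejecting subsets $M_1,\dots,M_p$, from which the one-state automaton with generalised Büchi sets $\SS\setminus M_j$ is built, and the co-Büchi case is dispatched by the same duality (Remark~\ref{remark:RabinAndStreett}) in both proofs. Where you genuinely diverge is in justifying that one state suffices. The paper invokes the characterisation from \cite{CCF21Optimal} that $L_\F$ is recognisable by a (generalised) Büchi automaton if and only if the Zielonka tree $\Z_\F$ has height at most $2$ with accepting root, and then reads the sets $A_i$ off the leaves of $\Z_\F$; you instead prove directly from the given automaton that $\F$ is upward closed (equivalently, that the rejecting sets are downward closed), using the existence of nested $C$-SCCs in a complete automaton (as in Lemma~\ref{lemma: exists SCC}), the fact that in a strongly connected automaton for a Muller language a cycle labelled exactly by $C$ is accepting precisely when $C\in\F$, and the monotonicity of generalised Büchi acceptance under inclusion of cycles. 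Your route is self-contained — in effect it reproves the direction of the cited characterisation that is needed here — at the cost of some cycle bookkeeping, while the paper's route is shorter and keeps the statement anchored to the Zielonka tree used throughout that section. Your correctness argument for the algorithm is, if anything, slightly more explicit than the paper's (you state the two-way equivalence between rejecting sets and subsets of $\mathit{Letters}(T)$, and treat the degenerate case $p=0$), and both arguments rely on the same tacit normalisation that the automaton is restricted to its accessible part (you phrase it as passing to a reachable ergodic strongly connected component, the paper leaves it implicit), so no gap there beyond the paper's own level of detail.
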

\begin{proof}
	We present the proof for "generalised Büchi" automata, the generalised co-Büchi case being dual. The language $"L_\F"$ can be recognised by a deterministic "generalised Büchi" automaton if and only if it can be recognised by a Büchi automaton, and this is possible if and only if the Zielonka tree $\ZF$ has at most height $2$ and in the case it has exactly height $2$, the set $\SS$ is accepting \cite{CCF21Optimal}.
	
	Suppose that  $\F$ verifies this condition, and let its "Zielonka tree" be $\ZF=\ab \langle \SS, \langle \langle A_1, \langle \emptyset \rangle \rangle ,\langle A_2, \langle \emptyset \rangle \rangle, \dots, \langle A_k, \langle \emptyset \rangle \rangle \rangle \rangle$, verifying $\SS\in \F$ and $A_i\notin \F$ for $i\in \{1,\dots, k\}$. Then, it is easy to check that the following "generalised Büchi" automaton $\A_{\mathit{min}}=(\{q_0\},\SS, q_0, \dd, \SS, \{B_1,\dots, B_k\})$ of size $1$ recognises $"L_\F"$:
	\begin{itemize}
		\item $Q=\{q_0\}$.
		\item $\dd(q,c)=(q,c)$, for $c \in \SS$.
		\item $B_i = \SS \setminus A_i$ are the sets defining the "generalised Büchi" condition.
	\end{itemize}
	
	Finally, we prove that we can build this automaton from a given "generalised Büchi" automaton recognising $"L_\F"$ in polynomial time. We remark that in order to build $\A_{\mathit{min}}$ it suffices to identify the maximal rejecting subsets $A_1, A_2, \dots, A_k\subseteq \SS$.
	Let  $\A'=(Q,\SS, q_0, \dd, \GG, \{B_1',\dots, B_r'\})$ be a "generalised Büchi" automaton recognising $"L_\F"$, let $"G(\A')"=(V_{\A'}, E_{\A'})$ be the "graph associated to" $\A'$, and let $"\iota"$ and $\gamma$ be the projections over the input and output alphabet, respectively.  For each subset $B_i'$, $i\in \{1,\dots, r\}$ we consider the subgraph $G_i=(V_i, E_i)$, obtained as the restriction of $"G(\A')"$ to the transitions that do not produce a letter in $B_i'$, that is, $V_i=V_{\A'}$, $E_i=\{e\in E_{\A'} \: : \: \gg(e)\notin B_i'\}$. We compute the "strongly connected component" of $G_i$, $\S_{i,1},\dots, \S_{i,l_i}$, and for each of them we consider the projection over the input alphabet, $C_{i,j}="\iota"(\S_{i,j})\subseteq \SS$. In this way, we obtain a colection of subsets of $\SS$, $\{C_{i,j}\}_{\substack{1\leq i\leq k; 1 \leq j \leq l_i}}$, and all of them have to be rejecting. We claim that the maximal subsets amongst them are the maximal rejecting subsets of $\SS$. Indeed, if $A\subseteq \SS$ is a rejecting subset, a run over $\A'$ reading all the letters of $A$ eventually does not see any transition producing a colour in $B_i'$, for some $i$, so this run is contained in a strongly connected component of $G_i$, and $A$ is therefore a subset of $C_{i,j}$, for some $j$.

\end{proof}

	\section{Memory in games}\label{section : MemoryDefinitions}
	
	In this section, we introduce the definitions of "games", "memories" and "chromatic memories" for games, as well as the distinction between games where we allow $\ee$-transitions and "$\ee$-free" games. We show in Section~\ref{subset: epsilon-transitions} that the "memory requirements" for this two latter classes of games might differ.\\

	\subsection{Games}
	A \AP""game"" is a tuple $\G=(V=V_E\uplus V_A, E, v_0, \gg:E\rightarrow \GG\cup \{\varepsilon\}, \macc)$ where $(V,E)$ is a "directed graph" together with a partition of the vertices $V=V_E\uplus V_A$, $v_0$ is an initial vertex, $\gg$ is a colouring of the edges and $\macc$ is a winning condition defining a subset $\WW \subseteq \GG^\oo$. The letter $\varepsilon$ is a neutral letter, and we impose that there is no cycle in $\G$ labelled exclusively with $\varepsilon$.	
	 We say that vertices in $V_E$ belong to \emph{Eve} (also called the \emph{existential player}) and those in $V_A$ to \emph{Adam} (\emph{universal player}). We suppose that each vertex in $V$ has at least one outgoing edge. A game that uses a winning condition of type $X$ (as defined in Section~\ref{subsection : AutomataDefinitions}) is called an $X$-game.

	A \AP""play"" in $\G$ is an infinite path $\rr\in E^\oo$ produced by moving a token along edges starting in $v_0$: the player controlling the current vertex chooses what transition to take. Such a play produces a word $\gg(\rr)\in (\GG\cup \{\varepsilon\})^\oo$. Since no cycle in $\G$ consists exclusively of $\varepsilon$-colours, after removing the occurrences of $\varepsilon$  from $\gg(\rr)$ we obtain a word in $\GG^\oo$, that we will call the \intro(game){output} of the play and we will also denote $\gg(\rr)$ whenever no confusion arises. The play is \AP""winning"" for Eve if the output belongs to the set $\WW$ defined by the acceptance condition, and winning for  Adam otherwise. A \AP""strategy"" for Eve in $\G$ is a function prescribing how Eve should play. Formally, it is a function $\ss: E^* \rightarrow E$ that associates to each partial play ending in a vertex $v\in V_E$ some outgoing edge from $v$. A play $\rr\in E^\oo$ \AP""adheres"" to the strategy $\ss$ if for each partial play $\rr'\in E^*$ that is a prefix of $\rr$ and ends in some state of Eve, the next edge played coincides with $\ss(\rr')$.
	We say that Eve \AP""wins"" the game $\G$ if there is some strategy $\ss$ for her such that any "play" that adheres to $\ss$ produces a winning play for her (in this case we say that $\ss$ is a \emph{winning strategy)}. 
	
	We will also study games without $\varepsilon$-transitions. We say that a game $\G$ is \AP""$\varepsilon$-free"" if $\gg(e)\neq \varepsilon$ for all edges $e\in E$.
	
	
	\subsection{Memory structures}
	
	We give the definitions of the following notions from the point of view of the existential player, Eve. Symmetric definitions can be given for the universal player (Adam), and all results of Section~\ref{section : ChromaticMemory} can be dualised to apply to the universal player.
	
	A \AP""memory structure for the game $\G$"" is a tuple $\M_\G=(M, m_0, \mu)$ where $M$ is a set of states, $m_0\in M$ is an initial state and $\mu: M \times E \rightarrow M$ is an update function (where $E$ denotes the set of edges of the game).  We extend the function $\mu$ to $M\times E^*$ in the natural way. 
	 We can use such a memory structure to define a "strategy" for Eve using a function $\nextmove: V_E \times M \rightarrow E$, verifying that $\nextmove(v,m)$ is an outgoing edge from $v$. After each move of a "play" on $\G$, the state of the memory $\M_\G$ is updated using $\mu$; and when a partial play arrives to a vertex $v$ controlled by Eve she plays the edge indicated by the function $\nextmove(v,m)$, where $m$ is the current state of the memory. We say that the memory structure $\M_\G$ \AP""sets a winning strategy"" in $\G$ if there exists such a function $\nextmove$ defining a "winning strategy" for Eve.

	
	We say that $\M_\G$ is a \AP""chromatic memory"" if there is some function $\mu': M \times \GG \rightarrow M$ such that $\mu(m,e) = \mu'(m, \gg(e))$  for every edge $e\in E$ such that $\gg(e)\neq \varepsilon$, and $\mu(m,e)=m$ if $\gg(e)=\varepsilon$. That is, the update function of $\M_\G$ only depends on the colours of the edges of the game.
	
	
	Given a winning condition $\WW\subseteq \GG^\oo$, we say that $\M=(M, m_0, \mu)$ is an \AP""arena-independent memory"" for $\WW$ if for any $\WW$-game $\G$ won by Eve, there exists some function $\nextmove_\G: V_E \times M \rightarrow E$ "setting a winning strategy" in $\G$. We remark that such a memory is always chromatic.
	
	The \intro(memory){size} of a memory structure is its number of states.

	
	Given a \kl(condition){Muller condition} $\F$, we write $\AP""\mathfrak{mem}_{\mathit{gen}}(\F)""$ (resp. $\AP""\mathfrak{mem}_{\mathit{chrom}}(\F)""$) for the least number $n$ such that for any $\F$-game that is won by Eve, she can win it using a memory (resp. a chromatic memory) of size $n$. We call $\mathfrak{mem}_{\mathit{gen}}(\F)$ (resp. $\mathfrak{mem}_{\mathit{chrom}}(\F)$) the \emph{general memory requirements (resp. chromatic memory requirements) of $\F$}. We write $\AP""\mathfrak{mem}_{\mathit{ind}}(\F)""$ for the least number $n$ such that there exists an "arena-independent memory" for $\F$ of size $n$.
	
	We define respectively all these notions for "$\varepsilon$-free" $\F$-games. We write $\AP""\mathfrak{mem}_{\mathit{gen}}^{\ee\hyphen \mathrm{free}}(\F)""$, $\AP""\mathfrak{mem}_{\mathit{chrom}}^{\ee\hyphen \mathrm{free}}(\F)""$ and $\AP""\mathfrak{mem}_{\mathit{ind}}^{\ee\hyphen \mathrm{free}}(\F)""$ to denote, respectively, the minimal general memory requirements, minimal chromatic memory requirements and minimal size of an arena-independent memory for $\varepsilon$-free $\F$-games.
	
	\begin{remark}
		We remark that these quantities verify that $\mF \leq \cmF \leq  \indmF$ and that $\mathfrak{mem}_{\mathit{X}}^{\ee\hyphen \mathrm{free}}(\F)\leq \mathfrak{mem}_{\mathit{X}}(\F)$ for $X\in \{ \mathit{gen},  \mathit{chrom},  \mathit{ind}\}$.
	\end{remark}

	A family of games is \AP""half-positionally determined"" if  for every game in the family that is won by Eve, she can win using a "strategy" given by a "memory structure" of size $1$.
	
	\begin{lemma}[\cite{Klarlund94Determinacy, Zielonka1998infinite}]\label{lemma: RabinPositional}
		"Rabin"-games are half-positionally determined.
	\end{lemma}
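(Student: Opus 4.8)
The plan is to prove that Rabin games are half-positionally determined, i.e. that whenever Eve wins a Rabin game she can win with a positional (memoryless) strategy. This is a classical result, and the standard route I would follow is an induction on the number of Rabin pairs (equivalently, on the size of the arena, with the number of pairs as a secondary parameter), using the fact that a Rabin condition with $r$ pairs $\{(E_1,F_1),\dots,(E_r,F_r)\}$ can be analysed by "peeling off" one pair at a time.

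\textbf{Setup and base case.} First I would recall that by Rabin determinacy (already cited via Lemma~\ref{lemma: RabinPositional}'s references, but we only need one direction here), every vertex is winning for exactly one of the two players, and it suffices to produce, for each vertex in Eve's winning region $W_E$, a uniform positional strategy. I would restrict attention to the subgame induced by $W_E$: Eve can never be forced to leave $W_E$, and from Adam-vertices in $W_E$ all outgoing edges stay in $W_E$, so the restriction is a well-defined Rabin game won everywhere by Eve. The base case is $r=0$ (empty Rabin condition): then Eve wins nothing unless the game is trivial, or more usefully the case where the condition degenerates to a parity/B\"uchi-type condition handled positionally by the classical argument; alternatively one can take as base case a single Rabin pair $(E_1,F_1)$, which is essentially a "reach-$E_1$-infinitely-often while avoiding $F_1$-in-the-limit" condition and is positionally solved.

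\textbf{Inductive step.} Assuming the claim for Rabin conditions with fewer than $r$ pairs, I would consider the set $Z$ of vertices from which Eve has a positional strategy guaranteeing that along every play *either* some colour in $E_i$ is seen infinitely often and no colour of $F_i$ is seen infinitely often for a pair that becomes "eventually active" — the cleanest formulation: remove the colours of $F_r$ by looking at the sub-arena where Eve can avoid $F_r$ forever (a safety/attractor computation), on which the condition restricted to the remaining pairs together with "$E_r$ infinitely often" can be compared to a condition with $r-1$ pairs, and apply the induction hypothesis; on the complement, every play must see some $F_i$-colour infinitely often, and one argues that winning there reduces to a Rabin game where one pair has effectively been eliminated, again invoking the induction hypothesis. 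The two positional strategies are then glued along the partition of $W_E$ into the "eventually stays where pair $i$ can be satisfied" regions, using that a finite union of positional sub-strategies on a partition of the winning region, each of which traps plays in its own block, yields a global positional strategy — here one must be careful that plays do not oscillate between blocks forever, which is ensured by a well-founded ranking on the blocks (an attractor decomposition) so that a play either eventually stabilises in one block, where the local strategy wins, or descends infinitely in rank, which is impossible.

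\textbf{Main obstacle.} The delicate point — and the part I expect to require the most care — is the gluing: showing that combining the positional strategies obtained from the induction hypothesis on the several sub-arenas produces a single positional strategy that is winning from all of $W_E$, i.e. that the "attractor-based" decomposition of the winning region is well-founded so that no play can escape all local guarantees by bouncing between regions infinitely often. This is exactly where the argument of Klarlund and of Zielonka does the real work, via a careful ordering of the colours/pairs and a fixpoint characterisation of the winning region; I would structure that part as: (1) define the decomposition $W_E = \bigcup_j R_j$ via iterated attractors to "pair-$i$-favourable" traps, (2) prove each $R_j$ is a trap for Eve's local strategy, (3) prove the decomposition is well-founded, and (4) conclude that every play adhering to the glued strategy eventually remains in some $R_j$ and is therefore winning. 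Since this lemma is quoted from~\cite{Klarlund94Determinacy, Zielonka1998infinite}, in the paper itself I would simply cite it; the sketch above is the proof I would reconstruct if pressed.
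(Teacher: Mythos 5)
The paper does not prove this lemma at all: it is stated as a known result and attributed to Klarlund and Zielonka, so your decision to cite \cite{Klarlund94Determinacy, Zielonka1998infinite} rather than prove it is exactly what the paper does, and in that sense your proposal matches the paper's approach.

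Regarding the reconstruction you sketch in case a proof were demanded: it points in the right classical direction (induction on the Rabin pairs, attractor-based decomposition of Eve's winning region, gluing of positional sub-strategies), but as written it has two genuine gaps. First, the inductive step does not actually decrease the parameter you induct on: on the sub-arena where Eve avoids $F_r$ forever, the relevant objective is ``pairs $(E_1,F_1),\dots,(E_{r-1},F_{r-1})$ \emph{or} $E_r$ infinitely often'', which is still a Rabin condition with $r$ pairs (one of them with empty forbidden set), so the induction hypothesis as you state it does not apply; the published proofs get around this with a finer induction measure (simultaneous induction on pairs and arena size, or an inner attractor/fixpoint treating the Büchi pair separately), and that choice of measure is part of the real content. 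Second, the gluing step --- well-foundedness of the decomposition, so that a play following the combined positional strategy cannot bounce between blocks forever --- is precisely the heart of Klarlund's and Zielonka's arguments, and your sketch names it as the obstacle and outlines a plan (1)--(4) without carrying it out; in particular nothing in the sketch rules out that a play leaves a block $R_j$ towards a block of \emph{higher} rank, which is what the careful attractor ordering is designed to exclude. So as a stand-alone proof the proposal is incomplete, but as a pointer to the literature it is consistent with how the paper treats the lemma.
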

	
	If $\A$ is a "Rabin" "automaton" recognising a \kl(condition){Muller condition} $\F$, given an $\F$-game $\G$ we can perform a standard product construction $\G \ltimes \A$ to obtain an equivalent game using a Rabin condition that is therefore half-positionally determined. This allows us to use the automaton $\A$ as an "arena-independent memory" for $\F$.
	
	\begin{lemma}[Folklore]\label{lemma: RabinAutomataAsMemory}
		Let $\F$ be a \kl(condition){Muller condition}. We can use a "Rabin" automaton $\A$ recognising $"L_\F"$ as an "arena-independent memory" for $\F$.
	\end{lemma}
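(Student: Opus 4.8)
The plan is to use the standard product construction and reduce to the half‑positionality of Rabin games (Lemma~\ref{lemma: RabinPositional}). Fix a Rabin automaton $\A$ with $\L(\A)=L_\F$; since $L_\F\subseteq\GG^\oo$, its input alphabet is $\GG$, say $\A=(Q,\GG,q_0,\dd,\GG',R)$ with $R$ a Rabin condition over the output colours $\GG'$. I would first define, for an arbitrary $\F$-game $\G$ won by Eve, the product game $\G\ltimes\A$: its vertex set is $V\times Q$, a vertex $(v,q)$ belongs to Eve iff $v$ does, the initial vertex is $(v_0,q_0)$, and for every edge $e=(v,v')$ of $\G$ and every $q\in Q$ there is a product edge from $(v,q)$ to $(v',q')$, coloured $d\in\GG'$ when $\gg(e)\neq\ee$ and $\dd(q,\gg(e))=(q',d)$, and coloured $\ee$ with $q'=q$ when $\gg(e)=\ee$. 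This is a Rabin game for $R$, and it has no cycle coloured only with $\ee$, since projecting such a cycle to $\G$ would give one there.

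Second, I would establish the correspondence between $\G$ and $\G\ltimes\A$. Every play $\rr$ of $\G$ from $v_0$ lifts uniquely to a play $\hat\rr$ of $\G\ltimes\A$ from $(v_0,q_0)$ whose $Q$‑component follows the run of $\A$ on the $\GG$‑output $\gg(\rr)$ of $\rr$ (the $\ee$‑edges leaving the memory state untouched, exactly mirroring the deletion of the $\ee$'s from the output word). Because $\G$ has no all‑$\ee$ cycle, $\gg(\rr)$ is an infinite word of $\GG^\oo$, so this run — and its Rabin acceptance — is well defined, and the $\GG'$‑output of $\hat\rr$ is precisely the output of that run. Hence $\hat\rr$ is won by Eve (for $R$) iff the run of $\A$ on $\gg(\rr)$ is accepting, iff $\gg(\rr)\in\L(\A)=L_\F$, iff $\rr$ is won by Eve in $\G$. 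Since the projection $(v,q)\mapsto v$ is a bijection between outgoing edges of $(v,q)$ and of $v$ and preserves vertex ownership, strategies transfer both ways, so Eve wins $\G$ iff she wins $\G\ltimes\A$.

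Third, I would conclude. As Eve wins $\G$, she wins $\G\ltimes\A$, and by Lemma~\ref{lemma: RabinPositional} she has a positional winning strategy there, i.e. a function $s\colon V_E\times Q\to \hat E$ selecting an outgoing edge at each Eve vertex $(v,q)$; via the edge bijection this is exactly a map $\nextmove_\G\colon V_E\times Q\to E$ with $\nextmove_\G(v,q)$ outgoing from $v$. Now take the memory structure $\M=(Q,q_0,\mu)$ where $\mu(q,e)=q'$ if $\gg(e)\neq\ee$ and $\dd(q,\gg(e))=(q',d)$, and $\mu(q,e)=q$ if $\gg(e)=\ee$; note $\M$ is built from $\A$ alone, does not depend on $\G$, and is chromatic. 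Tracking $\mu$ along a play of $\G$ is the same as tracking the $Q$‑component of its lift, so the strategy that $(\M,\nextmove_\G)$ defines on $\G$ produces exactly those plays whose lifts adhere to $s$ — all of them winning. Thus $\M$ sets a winning strategy in $\G$, and since $\G$ was an arbitrary $\F$‑game won by Eve, $\M$ (of size $|\A|$) is an arena‑independent memory for $\F$.

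I do not expect a serious obstacle: this is the routine product argument. The only points needing care are the bookkeeping around $\ee$‑edges — that $\G\ltimes\A$ inherits the absence of all‑$\ee$ cycles, that the memory state does not move on $\ee$‑edges (consistent both with the definition of chromatic memory and with how $\A$ processes the $\ee$‑stripped output), and the remark that an infinite play of $\G$ necessarily has an infinite $\GG$‑output so that the automaton run and its Rabin acceptance are meaningful.
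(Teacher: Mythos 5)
Your proof is correct and follows exactly the route the paper intends: the paper only sketches this folklore lemma via the sentence preceding it (take the product game $\G\ltimes\A$, observe it is a Rabin game, and invoke the half-positionality of Rabin games, Lemma~\ref{lemma: RabinPositional}), and your write-up is precisely that argument carried out in detail, including the bookkeeping for $\ee$-edges. No gaps.
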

	
	\subsection{The general memory requirements of Muller conditions}
	
	The "Zielonka tree" (see Definition~\ref{def: ZielonkaTree}) was introduced by Zielonka in \cite{Zielonka1998infinite}, and in \cite{DJW1997memory} it was used to characterise the general memory requirements of Muller games as we show next.

	\begin{definition}
		Let $\F$ be a Muller condition and $\ZF=\langle \GG, \langle \ZFi{1}, \dots \ZFi{k}\rangle\rangle$ its Zielonka tree. We define the number $\AP""\mathfrak{m}_{\ZF}""$ recursively as follows:
		\begin{equation*}
		\mathfrak{m}_{\ZF} =	\begin{cases}
				1 & \text{ if } \ZF \text{ is a "leaf"}\\[1mm]
				\max \{ \mathfrak{m}_{\ZFi{1}}, \dots , \mathfrak{m}_{\ZFi{k}}\} & \text{ if } \GG\notin  \F \text{ and }  \ZF \text{ is not a "leaf" }\\[1mm]
				\sum\limits_{i=1}^k \mathfrak{m}_{\ZFi{i}} & \text{ if } \GG\in  \F \text{ and }  \ZF \text{ is not a "leaf" }.
			\end{cases}
		\end{equation*}
	\end{definition}

	\begin{proposition}[\cite{DJW1997memory}]\label{prop: optimalMemoryDJW}
		For every  Muller condition $\F$, $\mF = \mathfrak{m}_{\ZF}$. That is,
		\begin{enumerate}
			\item If Eve wins an $\F$-game, she can win it using a strategy given by a (general) "memory structure" of size at most $\mathfrak{m}_{\ZF}$. 
			\item There exists an $\F$-game (with $\ee$-transitions) won by Eve such that she cannot win it using a strategy given by a memory structure of size strictly smaller than $\mathfrak{m}_{\ZF}$.
		\end{enumerate}
	\end{proposition}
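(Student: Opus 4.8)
The plan is to establish the two inequalities making up $\mF = \mathfrak{m}_{\ZF}$, following \cite{DJW1997memory}, in both cases by induction on the structure of the Zielonka tree $\ZF = \langle \GG, \langle \ZFi{1}, \dots, \ZFi{k}\rangle\rangle$. Throughout, I write $A_i\subseteq\GG$ for the label of the root of $\ZFi{i}$, so that the $A_i$ are the maximal subsets of $\GG$ lying in $\F$ when $\GG\notin\F$, and the maximal subsets \emph{not} in $\F$ when $\GG\in\F$; I write $\F_i=\F\cap\P(A_i)$ for the condition of the $i$-th subtree, and recall that $\ZFi{i}$ is precisely the Zielonka tree of $\F_i$. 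Note that the parity-automaton route (building a parity automaton from $\ZF$ and using Lemma~\ref{lemma: RabinAutomataAsMemory}) will \emph{not} do, since the resulting memory has size the number of leaves of $\ZF$, which is the chromatic rather than the general memory requirement; the inductive argument below is needed.

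\emph{Upper bound.} For $\mF\le\mathfrak{m}_{\ZF}$ I would prove, by induction on the number of nodes of $\ZF$, the slightly stronger statement: in any $\F$-game, Eve can play optimally from every vertex of her winning region with a \emph{single} memory structure of size $\mathfrak{m}_{\ZF}$. The base case ($\ZF$ a leaf) is immediate: then either every play of the game is winning for Eve, or none is, so a memoryless strategy is optimal and $\mathfrak{m}_{\ZF}=1$. For the inductive step, let $\G$ be an $\F$-game and $W$ Eve's winning region. For each $i$ one first sets up, on $W$, a sub-game whose plays only see colours of $A_i$ infinitely often, whose condition restricts to $\F_i$, and which is still won by Eve; by the induction hypothesis she wins it with a memory $\M_i$ of size $\mathfrak{m}_{\ZFi{i}}$. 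If $\GG\notin\F$, one decomposes $W$ via attractors into sub-regions on each of which a single child's sub-game suffices, so that Eve can force the play to eventually remain in one such sub-game without spending memory on the choice of $i$; this is consistent because, $\GG$ being rejecting, any play whose set of colours seen infinitely often is not included in some $A_i$ is already rejecting. This yields a memory of size $\max_i\mathfrak{m}_{\ZFi{i}}$. If $\GG\in\F$, take $M=M_1\uplus\dots\uplus M_k$ and let Eve play in mode $i$ according to $\M_i$, switching cyclically to mode $i+1 \bmod k$ --- and resetting the new mode's state to its initial value --- whenever a colour outside the current $A_i$ is produced. On an infinite play with set $S$ of colours seen infinitely often: if the mode eventually stabilises at $j$, then eventually all colours lie in $A_j$, so $S\subseteq A_j$, and from the last reset on the play is consistent with $\M_j$'s winning strategy, hence $S\in\F_j\subseteq\F$; if the mode never stabilises, every index is visited infinitely often, which forces $S\not\subseteq A_i$ for all $i$ (otherwise the mode would get trapped in that $i$), and since the $A_i$ are the maximal rejecting sets this gives $S\in\F$. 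Either way Eve wins, with memory $\sum_i\mathfrak{m}_{\ZFi{i}}$.

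\emph{Lower bound.} For $\mathfrak{m}_{\ZF}\le\mF$ I would build, by recursion on $\ZF$, the Zielonka-tree game $\G_\F$ of \cite{DJW1997memory}: informally, a game played on the tree $\ZF$ in which, from a node $t$, the player who would lose were the play to stay around $t$ forever (Adam if $t$'s label is in $\F$, Eve otherwise) chooses a child of $t$ to descend into or an ancestor of $t$ to jump back up to, colours being emitted along the way according to the labels of the visited nodes, and the jumps being realised through $\ee$-transitions; the winning condition is $\F$. One checks that Eve wins $\G_\F$. Then, assuming Eve wins $\G_\F$ with a memory $\M$, I would show $|\M|\ge\mathfrak{m}_{\ZF}$ by induction: at a node whose label is not in $\F$ a dual argument gives $|\M|\ge\max_i\mathfrak{m}_{\ZFi{i}}$; at a node whose label is in $\F$, Adam's strategy forces the play to enter each of the $k$ children's sub-arenas, and one shows that the sets of memory states Eve can afford to be in while the play is inside distinct children's sub-arenas must be pairwise disjoint --- otherwise, from a state shared between sub-arenas $i$ and $j$, Adam teleports (via an $\ee$-transition) from $i$ into $j$ and replays there an Adam strategy defeating Eve started from that memory state, producing a losing play --- which, combined with the inductive bounds $\ge\mathfrak{m}_{\ZFi{i}}$ inside each sub-arena, gives $|\M|\ge\sum_i\mathfrak{m}_{\ZFi{i}}$.

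\emph{Main obstacle.} I expect the fiddliest points to be, on the upper-bound side, the precise definition of the children's sub-games (a vertex may have no outgoing edge coloured in $A_i$, so a controlled way out must be adjoined, and one must then recheck that Eve still wins the sub-game and that the glued strategy --- in particular the resets in the $\GG\in\F$ case --- never creates a losing play); and, on the lower-bound side, the disjointness-of-memory-states argument at nodes whose label lies in $\F$, which is exactly where the presence of $\ee$-transitions is indispensable (and, as the paper shows in Proposition~\ref{prop : epsilon-more-memory}, cannot be avoided).
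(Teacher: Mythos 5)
The paper itself does not prove this proposition: it is quoted from \cite{DJW1997memory}, so your proposal has to be measured against the original argument, whose architecture you do reproduce (induction on the Zielonka tree, attractor decomposition for the max case, round-robin composition for the sum case, a tree-based game with $\varepsilon$-transitions for the lower bound), together with the correct observation that the parity-automaton/Rabin-memory route only yields the number of leaves and hence cannot give $\mathfrak{m}_{\Z_\F}$. However, two steps are wrong or unjustified as written. In the sum case ($\GG\in\F$) you apply the induction hypothesis to a sub-game on $W$ whose plays \emph{only see colours of $A_i$} and whose condition is $\F_i$, and you claim Eve still wins it. This is false in general: for $\F=\{A\subseteq\{1,\dots,n\} : |A|\geq 2\}$ the maximal rejecting sets are the singletons $\{i\}$, and every infinite play of a game using only colour $i$ with condition $\F_i=\emptyset$ is lost by Eve. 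What the mode-switching argument actually needs is, for each $i$ and uniformly on $W$, a strategy of size $\mathfrak{m}_{\Z_{\F_i}}$ for the \emph{modified} objective in which producing a colour outside $A_i$ counts as a win; one must prove that Eve wins this modified game from all of $W$ (that part is easy) \emph{and} that the modified objective still falls under the induction hypothesis with memory $\mathfrak{m}_{\Z_{\F_i}}$, since its condition is no longer literally $\F_i$. That is the technical heart of the upper bound in \cite{DJW1997memory}; your obstacle paragraph only flags the completeness issue (a vertex with no $A_i$-coloured edge), which is not the real difficulty. The max case hides a similar amount of work (the attractor-layer induction is what justifies that no memory is spent on the choice of $i$), though as a sketch that part is closer to acceptable.

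The second gap is in the lower bound. Your disjointness argument assumes that when Adam teleports along an $\varepsilon$-edge from the sub-arena of child $i$ to that of child $j$, Eve's memory state is unchanged, so that a defeating continuation for that state can be replayed in $j$. But this proposition concerns \emph{general} memories, and in this paper (and in \cite{DJW1997memory}) the update function of a general memory reads every edge, including $\varepsilon$-edges; only chromatic memories are required to be inert on $\varepsilon$. After the teleport Eve is in state $\mu(m,e)$, not $m$, and no contradiction follows from a shared state. The sum lower bound therefore needs a finer argument (in \cite{DJW1997memory} it is an induction on strategies showing that a single memory state cannot simultaneously record Eve's progress in two distinct children when Adam interleaves visits to them), so this step should be reworked rather than patched. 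With these two points repaired, your plan is the standard one and can be completed along the lines of \cite{DJW1997memory}.
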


\subsection{Memory requirements of \texorpdfstring{$\varepsilon$}{e}-free games}\label{subset: epsilon-transitions}
In \cite{Zielonka1998infinite} and \cite{Kopczynski2006Half} it was noticed that there can be major differences regarding the memory requirements of winning conditions depending on the way the games are coloured. We can differentiate 4 classes of games, according to whether we colour vertices or edges, and whether we allow or not the neutral colour $\varepsilon$:
\begin{enumerate}[A)]
	\item State-coloured $\varepsilon$-free games.
	\item General state-coloured games.
	\item Transition-coloured "$\varepsilon$-free" games.
	\item General transition-coloured games.
\end{enumerate}
(In the previous sections we have only defined transition-coloured games). We remark that the "memory requirements" of any condition are the same for general state-coloured games and general transition-coloured games.

In \cite{Zielonka1998infinite}, Zielonka showed that there are Muller conditions that are half-positional over state-coloured $\ee$-free games, but they are not half-positional over general state-coloured games, and he exactly characterises half-positional Muller conditions in both cases.

However, when considering transition-coloured games, this ``bad behaviour'' does not appear: in both general games and "$\varepsilon$-free games", "half-positional" \kl(condition){Muller conditions} correspond exactly to "Rabin conditions" (Lemma~\ref{lemma : positionalRabin}). In particular, the characterisation of Zielonka of "half-positional" Muller conditions for state-coloured $\ee$-free games does not generalise to transition-coloured "$\varepsilon$-free" games.

Nevertheless, the matching upper bounds for the memory requirements of Muller conditions appearing in \cite{DJW1997memory} are given by transition-labelled games using $\varepsilon$-transitions. An interesting question is whether we can produce upper-bound examples using "$\varepsilon$-free" games.
In this section we answer this question negatively. We show in Proposition~\ref{prop : epsilon-more-memory} that, for every $n\geq 2$, there is a Muller condition $\F$ such that Eve can use "memories" of \kl(memory){size} $2$ to win "$\varepsilon$-free" $\F$-games where she can force a win, but that she might need $n$ memory states to win $\F$-games with $\varepsilon$ transitions. That is, $\mFfree < \mF$ and the gap can be arbitrarily large. In Section~\ref{subsection : ChromaticMemory-RabinAutomata} we will see that this is not the case for "chromatic memories": $\cmF=\cmFfree$ for any \kl(condition){Muller condition} $\F$.

\begin{lemma}\label{lemma : positionalRabin}
	For any Muller condition $\F\subseteq \P(\GG)$,  $\F$ is "half-positional" determined over transition-coloured "$\varepsilon$-free games" if and only if $\F$ is half-positional determined over general transition-coloured games. That is, $\mF=1$ if and only if $\mFfree = 1$.
\end{lemma}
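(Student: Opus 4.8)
The statement has two directions, but one is trivial: if $\F$ is half-positional over general transition-coloured games, then in particular it is half-positional over the subclass of $\varepsilon$-free games, so $\mFfree = 1$ follows from $\mF = 1$. The content is the converse: assuming $\mFfree = 1$, I want to conclude $\mF = 1$. The natural strategy is to reduce a general $\F$-game $\G$ (with $\varepsilon$-transitions) to an $\varepsilon$-free $\F$-game $\G'$ in a way that (i) preserves the winner, and (ii) lets a positional winning strategy for Eve in $\G'$ be pulled back to a positional winning strategy in $\G$.

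\textbf{Key steps.} First I would recall (from Proposition~\ref{prop: Rabin iff cycles}, or rather its game-theoretic consequences) the characterisation strategy: actually the cleanest route is probably to show directly that $\mFfree = 1$ implies that $\F$ is a "Rabin condition", and then invoke Lemma~\ref{lemma: RabinPositional} to get $\mF = 1$. So the plan splits as: \emph{Step 1}: show that if $\F$ is \emph{not} a Rabin condition, then $\F$ is not half-positional even over $\varepsilon$-free games, i.e. $\mFfree > 1$. \emph{Step 2}: conclude. For Step 1, by Proposition~\ref{prop: Rabin iff cycles} (contrapositive), if $\F$ is not Rabin, then there exist two cycles $\ell_1, \ell_2$ in some Muller automaton — in fact we may take the one-state "automaton" whose self-loops are labelled by all colours of $\GG$ — sharing a state, both rejecting, whose union is accepting. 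Concretely, this means: there are sets $C_1, C_2 \subseteq \GG$ with $C_1, C_2 \notin \F$ but $C_1 \cup C_2 \in \F$. From such a pair I would hand-build a small $\varepsilon$-free $\F$-game won by Eve in which no positional strategy wins: a gadget with an Adam vertex that lets Adam "test" which of the two cycles Eve commits to, forcing Eve to alternate between producing the colours of $C_1$ and the colours of $C_2$; any positional strategy would eventually loop producing only colours inside $C_1$ or only inside $C_2$ (against an appropriate Adam counter-strategy), hence lose, while a size-$2$ memory (alternating) wins. The delicate point is making this gadget genuinely $\varepsilon$-free — every Adam and Eve move must produce a real colour — which is possible because $C_1, C_2$ are nonempty (they are rejecting sets; and $\emptyset \in \F$ or $\emptyset \notin \F$ can be handled since if $\emptyset$ were the troublesome set we could not have $C_1 = \emptyset$ as then $C_1 \cup C_2 = C_2 \notin \F$).

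\textbf{Main obstacle.} The hard part is not the logical skeleton but the explicit gadget in Step 1: ensuring it is $\varepsilon$-free, that Eve genuinely wins it (with a $2$-state alternating memory), and that \emph{every} positional Eve strategy loses. The subtlety is that Eve's positional strategy need not loop on a cycle contained in $C_1$ or $C_2$ alone — Adam controls branching, so I must design the gadget so that Adam can, reacting to Eve's fixed positional choice, steer the play into a terminal loop whose colour set is a \emph{subset} of one of the rejecting $C_i$ (or some other rejecting set). This is exactly the kind of argument used in \cite{Zielonka1998infinite, DJW1997memory} for the lower-bound examples, and the point here is that their construction can be carried out without $\varepsilon$-edges — which is plausible precisely because in the transition-coloured setting Eve is forced to emit a colour on every move, so there is no "silent" escape. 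Once Step 1 is in place, Step 2 is immediate: $\mFfree = 1 \Rightarrow \F$ is Rabin $\Rightarrow \mF = 1$ by Lemma~\ref{lemma: RabinPositional}, and together with the trivial direction this gives the equivalence.
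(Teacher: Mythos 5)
Your proposal is correct, but it reaches the conclusion by a different key lemma than the paper. The paper argues the contrapositive directly via the characterisation of Dziembowski–Jurdziński–Walukiewicz (Proposition~\ref{prop: optimalMemoryDJW}): if $\F$ is not half-positional over general games, its Zielonka tree has an accepting node with two distinct children $B_1,B_2$; these are rejecting, incomparable, and (by maximality) their union is accepting, and the paper then builds exactly the gadget you describe — with no $\varepsilon$-edges — to show Eve needs memory $2$ even in the $\varepsilon$-free setting. You instead route through Rabin conditions: $\varepsilon$-free half-positionality forces rejecting sets to be closed under union (otherwise two rejecting sets $C_1,C_2$ with accepting union yield the same gadget), hence $\F$ is a Rabin condition, and Lemma~\ref{lemma: RabinPositional} gives half-positionality over all games. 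Both routes lean on one non-trivial external result (DJW for the paper, Klarlund/Zielonka half-positionality of Rabin conditions for you) plus the same elementary two-loop construction; your version has the mild bonus of making explicit the equivalence ``half-positional $\Leftrightarrow$ Rabin'' stated in the surrounding text, at the cost of also needing the folklore fact that closure of rejecting sets under union lets one express $\F$ itself as a Rabin condition over $\GG$ (your pull-back through the one-state automaton and Proposition~\ref{prop: Rabin iff cycles} does this correctly).

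One remark on your ``main obstacle'': it is self-inflicted and disappears with the simplest design. No Adam vertex is needed at all — take a single Eve vertex $v_0$ with two outgoing deterministic loops, one spelling out all colours of $C_1$ and returning to $v_0$, the other spelling out $C_2$ (both nonempty, as you note, so every edge carries a genuine colour). A positional strategy for Eve locks onto one loop, so the set of colours seen infinitely often is exactly $C_1$ or exactly $C_2$, which is rejecting, while alternating with two memory states produces $C_1\cup C_2\in\F$; there is no branching for an opponent to exploit and no risk of terminating in a smaller rejecting loop. This is precisely the paper's gadget.
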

\begin{proof}
	Since $\mFfree \leq \mF$ for all conditions, it suffices to see that if $\F$ is a Muller condition such that $1 < \mF$, then we can find an $\varepsilon$-free $\F$-game won by Eve where she cannot win positionally.
	
	By the characterisation of \cite{DJW1997memory} (\Cref{prop: optimalMemoryDJW}), we know that $\mF=1$ if and only if every node of the "Zielonka tree" of $\F$ labelled with an accepting set of colours has at most one child. Therefore, if $\F\subseteq \P(\GG)$ is a Muller condition verifying $1 < \mF$, we can find a node in $\ZF$ labelled with $A\in \F$ and with two different children, labelled respectively with $B_1, B_2\subseteq \GG$, verifying that $B_i \notin \F$ for $i\in \{1,2\}$ and such that $B_i \nsubseteq B_j$, for $i\neq j$. We consider the "$\varepsilon$-free game" consisting in one initial state $v_0$ controlled by Eve with two outgoing edges, one leading to a cycle that comes back to $v_0$ after producing the colours in $B_1$, and the other one leading to a cycle producing the colours in $B_2$. It is clear that Eve wins this game, but she needs at least $2$ memory states to do so.
\end{proof}

\begin{proposition}\label{prop : epsilon-more-memory}
	For any integer $n\geq 2$, there is a set of colours $\GG_n$ and a \kl(condition){Muller condition} $\F_n \subseteq \P(\GG_n)$ such that $\mFfree=2$ and $\mF=n$.
\end{proposition}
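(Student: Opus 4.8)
The plan is to construct, for each $n \geq 2$, a Muller condition $\F_n$ whose Zielonka tree has a specific shape: it should have a root labelled with an accepting set, and a long ``comb'' structure so that $\mathfrak{m}_{\ZFi{n}} = n$, while at the same time every individual ``bad'' set appearing in the tree is small enough that Eve can beat it with only $2$ memory states in an $\varepsilon$-free game. The asymmetry we want to exploit is exactly the one highlighted in \Cref{lemma : positionalRabin} and its proof: in an $\varepsilon$-free game, from a single vertex Eve can only commit to producing the colours of \emph{one} cycle, so large branching in the Zielonka tree forces large memory only if that branching can be ``spread out'' using $\varepsilon$-transitions. The idea is to pick $\F_n$ so that the $n$ children of the accepting root are pairwise ``entangled'' — any $\varepsilon$-free arena in which Eve must distinguish them can in fact be won with memory $2$ — but a carefully chosen $\varepsilon$-arena (essentially the DJW lower-bound arena of \Cref{prop: optimalMemoryDJW}) genuinely needs $n$ states.

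Concretely, I would take $\GG_n = \{a_1, b_1, a_2, b_2, \dots, a_n, b_n\}$ (or a similar set of $2n$ colours) and let $\F_n$ be the condition whose Zielonka tree is: root $\GG_n$, accepting; $n$ children, the $i$-th labelled $C_i = \GG_n \setminus \{b_i\}$, each rejecting; and under the $i$-th child a single accepting leaf labelled, say, $\{a_i\}$. One checks directly from \Cref{def: ZielonkaTree} that this is indeed the Zielonka tree of the Muller condition $\F_n = \{\, S \subseteq \GG_n : S \ni a_i \text{ for exactly one } i, \text{ and } S \subseteq C_i \text{ for that } i \,\} \cup \{\GG_n\}$ — i.e.\ $\F_n$ accepts $\GG_n$ itself, rejects every proper subset containing two or more of the $a_i$'s, and below that behaves like a singleton Büchi condition. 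Then $\mathfrak{m}_{\ZFi{n}} = \sum_{i=1}^n \mathfrak{m}_{\ZFi{n}, i} = \sum_{i=1}^n 1 = n$ because the root is accepting, so by \Cref{prop: optimalMemoryDJW} we get $\mF_n = n$ and in particular part (2) of that proposition already supplies an $\varepsilon$-game witnessing the lower bound $\mF_n \geq n$.

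It remains to prove $\mathfrak{mem}_{\mathit{gen}}^{\ee\hyphen \mathrm{free}}(\F_n) = 2$. For the lower bound $\geq 2$, since $1 < \mF_n$, \Cref{lemma : positionalRabin} gives immediately $\mFfree \geq 2$ (it asserts $\mFfree = 1 \iff \mF = 1$). The heart of the argument is the upper bound $\mFfree \leq 2$: in \emph{any} $\varepsilon$-free $\F_n$-game won by Eve, she can win with $2$ memory states. The strategy I would use: first apply the positional-determinacy machinery to reduce to understanding the winning region and the shape of winning plays. Fix a winning strategy $\ss$ for Eve and look at the graph it induces; in an ergodic component of a play consistent with $\ss$, the set of colours seen infinitely often must lie in $\F_n$, hence is either all of $\GG_n$ or is contained in some single $C_i$ and contains $a_i$. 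The $2$-state memory will be designed to track just enough: intuitively, one bit remembering ``have I recently seen a colour that would force me into the hard branch,'' allowing Eve to alternate between a ``stay in $\GG_n$'' behaviour and a ``commit to $a_i$'' behaviour. More carefully, I expect the right approach is structural: show that $\F_n$ restricted to $\varepsilon$-free arenas behaves like a Rabin-ish / chromatic-memory-$2$ condition by exhibiting an explicit $2$-state chromatic memory whose update function reacts to whether the last colour was some $b_i$ (a ``reset to top'' signal) versus an $a_j$, together with a \nextmove\ function, and verifying it sets a winning strategy in every $\varepsilon$-free game Eve wins.

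The main obstacle is precisely this last upper bound: proving that $2$ states suffice over \emph{all} $\varepsilon$-free arenas, not just the natural ones. The subtlety is that an adversary arena can force Eve through complicated finite detours before she can ``return to the top'', and a naive $2$-state memory may lose track; the key insight needed — and the part I'd spend the most effort on — is that in an $\varepsilon$-free game Eve never has to simultaneously keep open the option of more than one ``$a_i$-branch'', because committing to colour $a_i$ can be done from any vertex along a single cycle, so she only ever needs to remember ``am I currently trying to realise the $\GG_n$-cycle, or have I given up and am realising some $a_i$-cycle.'' Formalising that dichotomy — likely via an analysis of the Zielonka tree of $\F_n$ restricted to the $\varepsilon$-free product game, showing it has memory $2$ by the DJW characterisation applied \emph{arena-by-arena} — is the crux of the proof.
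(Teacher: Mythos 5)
Your construction does not hold up, before the memory analysis even starts. The tree you describe (root $\GG_n$ accepting, children $C_i=\GG_n\setminus\{b_i\}$ rejecting, and below the $i$-th child a single accepting leaf $\{a_i\}$) is not the Zielonka tree of \emph{any} Muller condition: since $\{a_j\}\subseteq C_i$ for all $i,j$, if each $\{a_i\}$ is accepting then $\{a_1\}$ is an accepting subset of $C_2$ not contained in $\{a_2\}$, so $C_2$ cannot have $\{a_2\}$ as its only child. Independently, for the explicit condition $\F_n$ you wrote down, the sets $\GG_n\setminus\{a_j\}$ are rejecting (for $n\ge 3$ they contain at least two of the $a_i$'s; for $n=2$ they contain an $a_i$ together with its $b_i$) and they contain every $b_i$, hence are contained in no $C_i$; they are therefore additional maximal rejecting children of the root. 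Either way, the claim $\mathfrak{m}_{\ZFi{n}}=n$, and with it $\mathfrak{mem}_{\mathit{gen}}(\F_n)=n$, is unverified. The paper sidesteps all of this with the much simpler choice $\F_n=\{A\subseteq\{1,\dots,n\}\,:\,|A|>1\}$, whose Zielonka tree is an accepting root with $n$ rejecting leaves, so \Cref{prop: optimalMemoryDJW} immediately gives $\mathfrak{mem}_{\mathit{gen}}(\F_n)=n$.

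The decisive gap is the upper bound $\mathfrak{mem}_{\mathit{gen}}^{\ee\hyphen \mathrm{free}}(\F_n)\le 2$, which you yourself leave as ``the crux''; moreover, the route you sketch for it cannot succeed. You propose a fixed $2$-state \emph{chromatic} memory (update reacting only to the last colour) that sets a winning strategy in every $\ee$-free game won by Eve. Such a structure would be an arena-independent memory of size $2$ for $\ee$-free $\F_n$-games, and by \Cref{theorem : EquivRabin-ChromaticMemory} (whose proof does not depend on this proposition) $\mathfrak{mem}_{\mathit{ind}}^{\ee\hyphen \mathrm{free}}(\F_n)=\mathfrak{mem}_{\mathit{chrom}}(\F_n)\ge \mathfrak{mem}_{\mathit{gen}}(\F_n)=n$, a contradiction as soon as $n\ge 3$; the same theorem even rules out game-dependent chromatic memories of size $2$, since $\mathfrak{mem}_{\mathit{chrom}}^{\ee\hyphen \mathrm{free}}(\F_n)\ge n$ as well. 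So the $2$-state memory must be genuinely non-chromatic, and that is exactly what the paper builds: in an $\ee$-free game every Eve-vertex $v$ has an outgoing edge carrying a real colour $c(v)$; one memory state plays such an edge, the other follows a uniformly positional strategy for the reachability objective ``see a colour different from $c(v)$'', and the update function switches between the two modes by inspecting the endpoints of the edge (the partition $V_E=V_1\uplus\dots\uplus V_n$ induced by $c$), not the colours alone. Neither this vertex-dependent machinery nor an argument that the two alternating modes force at least two colours infinitely often appears in your sketch, so the half of the proposition that actually requires work is missing.
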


\begin{proof}
	Let us consider the set of colours $\GG_n=\{1, \dots, n\}$ and the Muller condition \[\F_n=\{ A \subseteq \GG_n \: : \: |A| > 1\}.\]
	The "Zielonka tree" of $\F_n$ is pictured in Figure~\ref{Fig : ZielonkaTreeSimple}, where round nodes represent nodes whose label is an accepting set, and rectangular ones, nodes whose label is a rejecting set.
	
	\begin{figure}[ht]
		\centering
		\begin{tikzpicture}[square/.style={regular polygon,regular polygon sides=4}, align=center,node distance=2cm,inner sep=3pt]
		
		\node at (0,2.6) [draw, ellipse, minimum height=0.9cm, minimum width = 1.5cm] (R) {$1,2, \dots, n$};
		
		\node at (-2.5,1) [draw, rectangle ,minimum height=0.7cm, minimum width = 1cm] (0) {1};
		\node at (-1,1) [draw, rectangle, ,minimum height=0.7cm, minimum width = 1cm] (1) {2};
		\node at (0.6,1) (dots) {$\cdots$};
		\node at (2.5,1) [draw, rectangle, ,minimum height=0.7cm, minimum width = 1cm] (2) {n};
		
		\draw   
		(R) edge (0)
		(R) edge (1)
		(R) edge (2);
		
		\end{tikzpicture}
		\caption{Zielonka tree $\ZFi{n}$.}
		\label{Fig : ZielonkaTreeSimple}
	\end{figure}
	
	By the characterisation of \cite{DJW1997memory} (\Cref{prop: optimalMemoryDJW}), we know that $\mF=n$, that is, there is a game with $\varepsilon$-transitions won by Eve where she needs at least $n$ memory states to force a win. We are going to prove that if Eve wins an "$\varepsilon$-free" $\F$-game, then she can win it using only $2$ memory states.
	
	Let $\G=(V= V_E \uplus V_A, E, v_0, \gg, \F)$ be a an $\varepsilon$-free $\F$-game. First, we can suppose that Eve wins the game where we change the initial vertex $v_0$ to any other vertex of $V$. Indeed, since $\F$ is a prefix-independent condition, we can restrict ourselves to the winning region of Eve (and any strategy has to ensure that she remains there). From any vertex $v\in V_E$, there is one outgoing transition coloured with some colour in $\GG$, since the game is $\varepsilon$-free. We associate to each vertex $v\in V_E$ one such colour, via a mapping $c:V_E \rightarrow \GG_n$, obtaining a partition $V_E= V_1 \uplus \dots \uplus V_n$ verifying that $v\in V_x$ implies that there is some transition labelled with $x$ leaving $v$, for $x\in \GG_n$. We denote $\ss_0:V_E\rightarrow E$ one application that maps each vertex $v\in V_E$ to one outgoing edge labelled with $c(v)$.
	
	Moreover, for any $v\in V$, since Eve can win from $v$, there is some strategy in $\G$ forcing to see some colour $y\in \GG_n$, $y\neq c(v)$. For each colour $x\in\GG_n$, we consider the game $\G_x= (V= V_E \uplus V_A, E, v, \gg, \mathit{Reach}(\GG_n\setminus \{x\}))$, where the underlying graph is the same as in $\G$, $v\in V$ is an arbitrarily vertex and the winning condition consists in reaching some colour different from $x$. Eve can win this game starting from any vertex $v\in V$, and we can fix a positional winning strategy for her that does not depend on the initial vertex (since reachability games are uniformly positional determined \cite{GradelThomasWilke2002AutLogGames}). We denote $\ss_x:V_E\rightarrow E$ the choice of edges defining this strategy. Moreover, we can pick $\ss_x$ such that it coincides with $\ss_0$ outside $V_x$. 
	
	We define a "memory structure" $\M_\G= (M=\{m_0,m_1\}, m_0, \mu)$ for $\G$ and a $\nextmove$ function describing the following strategy: the state $m_0$ will be used to remember that we have to see the colour $x$ corresponding to the component $V_x$ that we are in. As soon as we arrive to a vertex controlled by Eve, we use the next transition to accomplish this and we can change to state $m_1$ in $\M_\G$. The state $m_1$ will serve to follow the positional strategy $\ss_x$ reaching one colour different from $x$. We will change to state $m_0$ if we arrive to some state in $V_E$ not in $V_x$ (this will ensure that we will see one colour different from $x$), or if Eve produces a colour different from $x$ staying in the component $V_x$. If she does not produce this colour and we do not go to a vertex in $V_E \setminus V_x$, that means that (since $\ss_x$ ensures that we will see a colour different from $x$), Adam will take some transition coloured with some colour different from $x$. 
	This "memory structure" is formally defined as follows:
	
	\begin{itemize}
		\item The set of states is $M=\{m_0, m_1\}$, being $m_0$ the initial state.
		\item The update function $\mu: M\times E \rightarrow M$ is defined as:
		\[\begin{cases}
		\mu(m_i, (v,v'))=m_i & \text{ if } v\in V_A, \text{ for } i\in \{1,2\},\\
		\mu(m_0, (v,v'))=m_1 & \text{ if } v\in V_E,\\
		\mu(m_1, (v,v'))=m_0 & \text{ if } v,v'\in V_E \text{ and } c(v)\neq c(v'),\\
		\mu(m_1, (v,v'))=m_0 & \text{ if } v\in V_E \text{ and } c(v)\neq \gg((v,v')),\\
		\mu(m_1, (v,v'))=m_1 & \text{ in any other case.}
		\end{cases} \]
		\item The function $\nextmove: M\times V_E \rightarrow E$ is defined as:
		\[\begin{cases}
		\nextmove(m_0, v)=\ss_0(v),\\
		\nextmove(m_1, v)=\ss_{c(v)}(v).
		\end{cases}\qedhere	 \]	
	\end{itemize} 
\end{proof}

\begin{remark}
	The condition of the previous proof also provides an example of a condition that is half-positional over "$\ee$-free" state-coloured arenas, but for which we might need memory $n$ in general state-coloured arenas (similar examples can be found in \cite{Zielonka1998infinite, Kopczynski2006Half}).
\end{remark}
However, the question raised in \cite{Kopczynski2006Half} of whether there can be conditions (that cannot be Muller ones) that are "half-positional" only over "$\ee$-free" games remains open.

	\section{The chromatic memory requirements of Muller conditions}\label{section : ChromaticMemory}
	In this section we present the main contributions concerning the "chromatic memory requirements" of \kl(condition){Muller conditions}. In Section~\ref{subsection : ChromaticMemory-RabinAutomata}, we prove that the "chromatic memory requirements" of a Muller condition (even for "$\ee$-free" games) coincide with the size of a minimal Rabin automaton recognising the Muller condition (Theorem~\ref{theorem : EquivRabin-ChromaticMemory}). In Section~\ref{subsection : complexity ChromaticMemory} we deduce that determining the "chromatic memory requirements" of a \kl(condition){Muller condition} is $\NPc$, for different representations of the condition. Finally, this results allow us to answer in Section~\ref{subset: KopczynskiConjecture} the question appearing in \cite{Kopczynski2006Half,Kopczynski2008PhD} of whether the "chromatic memory requirements" coincide with the "general memory requirements" of winning conditions.

\subsection{Chromatic memory and Rabin automata}\label{subsection : ChromaticMemory-RabinAutomata}
	In this section we prove Theorem~\ref{theorem : EquivRabin-ChromaticMemory}, establishing the equivalence between the "chromatic memory requirements" of a \kl(condition){Muller condition} (also for "$\ee$-free" games) and the size of a minimal "Rabin" automaton recognising the Muller condition.
	
	Lemma~\ref{lemma : Chromatic=Independent} appears in Kopczyński’s PhD thesis~\cite[Proposition 8.9]{Kopczynski2008PhD} (unpublished). We present a similar proof here.
	
	\begin{lemma}[\cite{Kopczynski2008PhD}]\label{lemma : Chromatic=Independent}
		Let $\F$ be a Muller condition. Then, $\cmF = \indmF.$ That is, there is an $\F$-game $\G$ won by Eve such that any "chromatic memory" for $\G$ "setting a winning strategy" has \kl(memory){size} at least $\indmF$, where $\indmF$ is the minimal size of an "arena-independent memory" for $\F$.
		
		The same result holds for "$\ee$-free" games: $\cmFfree=\indmFfree$.
	\end{lemma}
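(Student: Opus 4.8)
The plan is to establish the two inequalities $\cmF \le \indmF$ and $\indmF \le \cmF$ separately (and likewise for the $\ee$-free variants). The first is immediate from the definitions: an arena-independent memory for $\F$ is, in particular, a chromatic memory that sets a winning strategy in every $\F$-game won by Eve, so it witnesses $\cmF \le \indmF$, and the same argument restricted to $\ee$-free games gives $\cmFfree \le \indmFfree$. So the real content is the reverse inequality: assembling the per-game chromatic memories into a single arena-independent one.

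For this, the key point is that a chromatic memory of size at most $n := \cmF$ is just a deterministic transition structure over the \emph{finite} alphabet $\GG$ together with a designated initial state, so that there are --- up to isomorphism --- only finitely many of them; moreover, whether such a structure is arena-independent, and whether it sets a winning strategy in a fixed game, depends only on its isomorphism type. Fix representatives $\M_1, \dots, \M_N$ of all chromatic memory structures of size $\le n$, and argue by contradiction: if none of the $\M_j$ is an arena-independent memory for $\F$, then for each $j$ there is an $\F$-game $\G_j$ won by Eve in which no $\nextmove$ function turns $\M_j$ into a winning strategy. Now form a game $\G^\ast$ by adding a fresh initial vertex $v_0^\ast$ controlled by Adam, with one edge --- coloured by a fixed $c \in \GG$ --- to the initial vertex of each $\G_j$ (taking disjoint copies). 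Since every $\G_j$ is won by Eve and $\F$ is prefix-independent (it is a Muller condition), $\G^\ast$ is won by Eve; and if all the $\G_j$ are $\ee$-free, so is $\G^\ast$, so exactly the same construction proves the $\ee$-free statement.

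Applying the definition of $\cmF$ (resp.\ $\cmFfree$), there is a chromatic memory $\M = (M, m_0, \mu)$ of size $\le n$, with colour-update function $\mu'$, that sets a winning strategy in $\G^\ast$ via some function $\nextmove_{\G^\ast}$. Let $m_0' = \mu'(m_0, c)$ and let $\M'$ denote $\M$ with initial state changed to $m_0'$. Restricting $\nextmove_{\G^\ast}$ to the vertices of a given $\G_j$, and using prefix-independence of $\F$ to discard the forced initial $c$-edge, shows that $\M'$ sets a winning strategy in $\G_j$. Picking an index $j$ with $\M_j$ isomorphic to $\M'$ --- which exists because $\M'$ has size $\le n$ and, by the contradiction hypothesis, every such structure occurs among the $\M_j$ --- we conclude that $\M_j$ sets a winning strategy in $\G_j$, contradicting the choice of $\G_j$. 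Hence some $\M_j$ is an arena-independent memory for $\F$ (resp.\ for $\ee$-free $\F$-games) of size $\le n$, which gives $\indmF \le \cmF$ (resp.\ $\indmFfree \le \cmFfree$); the reformulation in the statement then follows, since $\cmF \ge \indmF$ means that chromatic memory of size $\indmF - 1$ cannot suffice in all $\F$-games won by Eve.

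The step I expect to require the most care is the bookkeeping of initial memory states: a chromatic memory winning $\G^\ast$ enters each $\G_j$ not in $m_0$ but in the state $m_0' = \mu'(m_0, c)$ reached after the forced first move, so the contradiction only closes if the ``bad games'' are indexed by memory structures \emph{with a chosen initial state}, not merely by transition structures --- this is what lets Adam steer the memory towards its own weakness, and what keeps $\G^\ast$ a legitimate game of the relevant class (finite, resp.\ $\ee$-free). Checking these membership points, and that isomorphic memory structures behave identically in every game, is routine but should be spelled out.
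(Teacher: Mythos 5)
Your proposal is correct and follows essentially the same route as the paper's proof: enumerate (up to isomorphism) the finitely many small chromatic memories, take the disjoint union of the corresponding ``bad'' games behind a fresh Adam-controlled initial vertex emitting a fixed colour, and use the shifted initial memory state $\mu(m_0,c)$ to let Adam steer into the game defeating that memory. Your extra care about indexing the bad games by memories \emph{with} a chosen initial state, and the observation that the construction adds no $\ee$-transitions (so the $\ee$-free case is identical), match the paper's argument exactly.
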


	\begin{proof}
		We present the proof for $\cmF = \indmF$, the proof for the "$\varepsilon$-free" case being identical, since we do not add any $\varepsilon$-transition to the games we consider.

		It is clear that $\cmF \leq \indmF$, since any "arena-independent memory" for $\F$-games has to be "chromatic". We will prove that it is not the case that $\cmF < \indmF$. 
		Let $\M_1, \cdots , \M_n$ be an enumeration of all "chromatic memory structures" of size strictly less than $\indmF$. 
		By definition of $\indmF$, for any of the memories $\M_j$ there is some $\F$-game $\G_j=(V_j, E_j, v_{0_j}, \gg_j)$ won by Eve such that no function $\nextmove_{\G_j}: M_j\times V_{j}\rightarrow E_j$ "setting a winning strategy" in $\G_j$ exists. 
		We define the disjoint union of these games, $\G= \biguplus\limits_{i=1}^{n} \G_i$, as the game with an initial vertex $v_0$ controlled by Adam, from which he can choose to go to the initial vertex of any of the games $\G_i$ producing the letter $a\in \GG$ (for some $a\in \GG$ fixed arbitrarily), and such the rest of vertices and transitions of $\G$ is just the disjoint union of those of the games $\G_i$.  Eve can win this game, since no matter the choice of Adam we arrive to some game where she can win. However, we show that she cannot win using a "chromatic memory" strictly smaller than $\indmF$. Suppose by contradiction that she wins using a chromatic memory $\M=(M,m_0, \mu)$, $|\M|< \indmF$. We let $m_0' = \mu(m_0,a)$, and we consider the memory structure $\M' = (M, m_0' , \mu)$. Since $|\M'|< \indmF$, $\M' = \M_i$ for some $i\in \{1, \dots, n\}$, and therefore Adam can choose to take the transition leading to $\G_i$, where Eve cannot win using this memory structure. This contradicts the fact that Eve wins $\G$ using $\M$.
	\end{proof}

	\begin{theorem}\label{theorem : EquivRabin-ChromaticMemory}
		Let $\F \subseteq \P(\GG)$ be a Muller condition. The following quantities coincide:
		\begin{enumerate}
			\item The \kl(automata){size} of a minimal deterministic "Rabin" automaton recognising $"L_\F"$, $"\mr(L_\F)"$.
			\item The \kl(memory){size} of a minimal "arena-independent memory" for $\F$, $\indmF$.
			\item The \kl(memory){size} of a minimal "arena-independent memory" for "$\varepsilon$-free" $\F$-games, $\indmFfree$.
			\item The "chromatic memory requirements" of $\F$, $\cmF$.
			\item The "chromatic memory requirements" of $\F$ for "$\varepsilon$-free" games, $\cmFfree$.
		\end{enumerate}
	\end{theorem}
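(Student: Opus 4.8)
The plan is to close the cycle of inequalities
\[
  \mr(L_\F)\ \geq\ \indmF\ \geq\ \indmFfree\ \geq\ \mr(L_\F),
\]
and then, via Lemma~\ref{lemma : Chromatic=Independent} in its general and its $\varepsilon$-free forms, to add the equalities $\cmF = \indmF$ and $\cmFfree = \indmFfree$; together these give the five-fold equality claimed. Two of the three displayed inequalities come for free: $\indmF \geq \indmFfree$ holds because every $\varepsilon$-free $\F$-game is an $\F$-game, so an arena-independent memory for all $\F$-games is in particular one for $\varepsilon$-free $\F$-games; and $\mr(L_\F) \geq \indmF$ is precisely Lemma~\ref{lemma: RabinAutomataAsMemory}. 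Hence the whole theorem reduces to the single inequality $\indmFfree \geq \mr(L_\F)$, that is, to the statement that from any arena-independent memory for $\varepsilon$-free $\F$-games one can extract a Rabin automaton of the same size recognising $L_\F$.

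To prove this, let $\M = (M, m_0, \mu)$ be an arena-independent memory for $\varepsilon$-free $\F$-games; after removing unreachable states (which affects neither arena-independence nor size) we may assume $\M$ is reachable, and since $\M$ is chromatic we view $\mu$ as a total map $M \times \GG \to M$. I would turn $\M$ into the deterministic and complete automaton $\A_\M$ over input alphabet $\GG$ with state set $M$, initial state $m_0$, transition function $(m,c)\mapsto(\mu(m,c),c)$, and Muller acceptance condition $\F$. Since $\A_\M$ echoes its input, it recognises $L_\F$. By Proposition~\ref{prop: Rabin iff cycles} it then suffices to show that $\A_\M$ admits a Rabin condition on top of it, for this yields a Rabin automaton of size $|M|$ recognising $L_\F$, whence $\mr(L_\F) \leq |M| = \indmFfree$. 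Concretely, I must check that any two rejecting cycles $\ell_1,\ell_2$ of $\A_\M$ sharing a state have a rejecting union (recall that a cycle $\ell$ is rejecting exactly when $\gg(\ell)\notin\F$, and that in $\A_\M$ the set $\gg(\ell)$ is simply the set of $\GG$-letters occurring along $\ell$).

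This last point is the crux, and I expect it to be the main obstacle; I would argue it by contradiction, by building a bad $\varepsilon$-free game. Assume there are rejecting cycles $\ell_1,\ell_2$ through a common state $m$ with $\ell_1\cup\ell_2$ accepting. Pick words $w_1,w_2\in\GG^*$ labelling closed walks through $m$ that traverse exactly the edge sets $\ell_1$ and $\ell_2$; then $\mu(m,w_i)=m$, the set of letters of $w_i$ equals $\gg(\ell_i)\notin\F$, and their union is $\gg(\ell_1\cup\ell_2)\in\F$. Pick also $u\in\GG^*$ with $\mu(m_0,u)=m$. Let $\G$ be the $\varepsilon$-free $\F$-game consisting of an initial path spelling out $u$ and ending at a vertex $v$ controlled by Eve, together with two petals at $v$, one spelling $w_1$ and returning to $v$ and the other spelling $w_2$ and returning to $v$, where every internal vertex has a single outgoing edge. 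Then Eve wins $\G$, for instance by alternating the two petals: the set of colours seen infinitely often is then $\gg(\ell_1)\cup\gg(\ell_2)=\gg(\ell_1\cup\ell_2)\in\F$. On the other hand, along any play following a strategy driven by $\M$, the memory state is $\mu(m_0,u)=m$ the first time the play reaches $v$, and since reading $w_i$ from $m$ brings the memory back to $\mu(m,w_i)=m$, the memory equals $m$ at every visit to $v$; hence Eve always plays the same petal $\nextmove(v,m)$ and the colours seen infinitely often are $\gg(\ell_i)\notin\F$, so she loses. This contradicts the fact that $\M$ is an arena-independent memory for $\varepsilon$-free $\F$-games.

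The only genuinely delicate part of this argument is the design of $\G$: without the preamble spelling $u$, a chromatic memory could alternate the two petals on its own and win, so the game would fail to be a counterexample; the preamble is what synchronises the memory to the meeting state $m$, and the identities $\mu(m,w_i)=m$ then pin it there. With the cycle condition of Proposition~\ref{prop: Rabin iff cycles} established, $\A_\M$ admits a Rabin condition on top of it, giving $\mr(L_\F)\leq|M|=\indmFfree$ and closing the cycle of inequalities. The remaining ingredients — the reductions among the five quantities, the fact that $\A_\M$ recognises $L_\F$, the verification that $\G$ is a legitimate $\varepsilon$-free game, and the bookkeeping in Proposition~\ref{prop: Rabin iff cycles} — are routine.
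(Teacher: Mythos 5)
Your proposal is correct and follows essentially the same route as the paper: the same chain of easy inequalities reducing everything to $\indmFfree \geq \mr(L_\F)$, the same construction of a Muller automaton on top of the memory $\M$, and the same contradiction game (preamble synchronising the memory to the shared state, then two petals spelling the rejecting cycles) to verify the cycle condition of Proposition~\ref{prop: Rabin iff cycles}. No gaps to report.
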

	
	\begin{proof}
		The previous Lemma~\ref{lemma : Chromatic=Independent}, together with Lemma~\ref{lemma: RabinAutomataAsMemory}, prove that
		\[ \indmFfree=\cmFfree \leq \cmF=\indmF \leq "\mr(L_\F)".\] 
		In order to prove that $"\mr(L_\F)"\leq \indmFfree$, we are going to show that we can put a Rabin condition on top of any "arena-independent memory" for $\varepsilon$-free $\F$-games $\M$, obtaining a Rabin automaton recognising $"L_\F"$ and having the same size than $\M$.
		
		Let $\M = (M, m_0, \mu:M\times \GG \rightarrow M)$ be an "arena-independent memory" for "$\varepsilon$-free" $\F$-games. First, we remark that we can suppose that every state of $\M$ is accessible from $m_0$ by some sequence of transitions. 
		 We define a Muller "automaton" $\A_\M$ using the underlying structure of $\M$: $\A_\M= (M, \GG, m_0, \dd , \GG, \F)$, where the transition function $\dd$ is defined as $\dd(m,a)=(\mu(m,a),a)$, for $a\in \GG$. Since the \kl(automata){output} produced by any word $w\in \GG^\oo$ is $w$ itself and the accepting condition is $\F$, this automaton trivially accepts the language $"L_\F"$. We are going to show that the Muller automaton $\A_\M$ satisfies the second property in Proposition \ref{prop: Rabin iff cycles}, that is, that for any pair of "cycles" in $\A_\M$ with some "state in common", if both are \kl(cycle){rejecting} then their union is also \kl(cycle){rejecting}. This will prove that we can put a "Rabin" condition "on top of" $\A_\M$.
		
		 Let $\ell_1$ and $\ell_2$ be two rejecting "cycles" in $\A_\M$  such that $m\in M$ is contained in both $\ell_1$ and $\ell_2$. We suppose by contradiction that their union $\ell_1 \cup \ell_2$ is an \kl(cycle){accepting} "cycle". We will build an "$\varepsilon$-free" $\F$-game that is won by Eve, but where she cannot win using the "memory" $\M$, leading to a contradiction. Let $a_0a_1\dots a_k\in \GG^*$ be a word labelling a path to $m$ from $m_0$ in $\M$, that is, $\mu(m_0, a_0\dots a_k)= m$. We define the $\varepsilon$-free $\F$-game $\G= (V=V_E, E, v_0, \gg:E\rightarrow \GG, \F)$ 
		 as the game where there is a sequence of transitions labelled with $a_0\dots a_k$ from $v_0$ to one vertex $v_m$ controlled by Eve (the only vertex in the game where some player has to make a choice). From $v_m$, Eve can choose to see all the transitions of $\ell_1$ before coming back to $m$ (producing the corresponding colours), or to see all the transitions of $\ell_2$ before coming back to $m$.
		 
		 First, we notice that Eve can win the game $\G$: since $\ell_1\cup \ell_2$ is accepting, she only has to alternate between the two choices in the state $v_m$. However, there is no function $\nextmove : M\times V_E \rightarrow E$ setting up a winning strategy for Eve. Indeed, for every partial play ending in $v_m$ and labelled with $a_0a_1\dots a_s$, it is clear that $\mu(m_0,a_0\dots a_s)=m$ (the memory is at state $m$). If $\nextmove(m,v_m)$ is the edge leading to the cycle corresponding to $\ell_1$, no matter the value $\nextmove$ takes at the other pairs, all plays will stay in $\ell_1$, so the set of colours produced infinitely often would be $\gg(\ell_1)$ which is loosing for Eve. The result is symmetric if $\nextmove(m,v_m)$ is the edge leading to the other cycle. We conclude that $\M$ cannot be used as a memory structure for $\G$, a contradiction.
	\end{proof}

	\subsection{The complexity of determining the chromatic memory requirements of a Muller condition}\label{subsection : complexity ChromaticMemory}
	As shown in \cite{DJW1997memory}, the "Zielonka tree" of a \kl(condition){Muller condition} directly gives its "general memory requirements". In this section, we see that it follows from the previous results that determining the "chromatic memory requirements" of a Muller condition is $\NPc$, even if it is represented by its Zielonka tree. 
	
	\begin{proposition}
		Given the "Zielonka tree" $\ZF$ of a \kl(condition){Muller condition} $\F$, we can compute in $\O(|\ZF|)$ the "memory requirements" for $\F$-games, $\mF$.
	\end{proposition}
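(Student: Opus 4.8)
The plan is to reduce the problem to evaluating, from $\ZF$ alone, the recursively-defined quantity $\mathfrak{m}_{\ZF}$. By Proposition~\ref{prop: optimalMemoryDJW} we have $\mF = \mathfrak{m}_{\ZF}$, and the definition of $\mathfrak{m}_{\ZF}$ is a purely structural recursion on the tree $\ZF$: a "leaf" contributes $1$; an internal node whose label is \emph{rejecting} (that is, not in $\F$) contributes the maximum of the values of its children; and an internal node whose label is \emph{accepting} (in $\F$) contributes the sum of the values of its children. So it suffices to run a single bottom-up (post-order) traversal of $\ZF$ computing this value, and to return the value obtained at the root.

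Concretely, I would perform a depth-first traversal that, on reaching a node $t$ after the value $v(t')$ of each child $t'$ of $t$ has been computed, sets $v(t) = 1$ if $t$ is a "leaf", $v(t) = \max_{t'} v(t')$ if $t$ is internal with a rejecting label, and $v(t) = \sum_{t'} v(t')$ if $t$ is internal with an accepting label; the algorithm then outputs $v(\ZF)$. The only information about $\F$ that is consulted is, for each node, whether its label lies in $\F$. This accepting/rejecting (round/square) annotation is part of the standard presentation of a Zielonka tree; and since by the construction in Definition~\ref{def: ZielonkaTree} the status of a node is always opposite to that of each of its children, it is in fact determined by the single bit ``$\GG \in \F$'' together with the depth parity, so no genuine access to $\F$ is required.

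Correctness is an immediate induction on the structure of $\ZF$: the case analysis defining $v(t)$ matches line by line the three cases defining $\mathfrak{m}_{\ZF}$, so $v(\ZF) = \mathfrak{m}_{\ZF} = \mF$. For the running time, each node of $\ZF$ is visited exactly once, and the work done at a node is constant if it is a leaf and proportional to its number of children otherwise (a single pass over the children to take a maximum or a sum); since the number of parent--child pairs in a tree is one less than the number of nodes, the total cost is $\O(|\ZF|)$. The only delicate point --- and the closest thing to an obstacle --- is to check that the intermediate values do not blow up: an easy induction (both $\max$ and $+$ preserve the bound) shows that $v(t)$ is at most the number of leaves of the subtree rooted at $t$, hence $v(t) \le |\ZF|$ for every $t$; thus all the numbers manipulated have $\O(\log|\ZF|)$ bits and every arithmetic operation is $\O(1)$ in the standard RAM model, which is precisely what is needed for the claimed linear bound (one would otherwise lose a logarithmic factor in a strict bit-cost model).
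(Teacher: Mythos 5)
Your proof is correct and follows exactly the route the paper intends: the proposition is an immediate consequence of Proposition~\ref{prop: optimalMemoryDJW}, and the linear-time bottom-up evaluation of $\mathfrak{m}_{\ZF}$ (with the alternation of accepting/rejecting labels between parent and child guaranteed by Definition~\ref{def: ZielonkaTree}) is precisely the computation the paper has in mind. Your additional remarks on the depth-parity determination of node status and on the bit-size of intermediate values are sound and only strengthen the argument.
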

	Using Algorithm~\ref{algo: ZielonkaTree} we can compute the "Zielonka tree" of a Muller condition from a parity automaton recognising that condition. We deduce the following result.

	\begin{corollary}
		Given a parity automaton $\P$ recognising a \kl(condition){Muller condition} $\F$, we can compute in polynomial time in $|\P|$ the "memory requirements" for $\F$-games, $\mF$.
	\end{corollary}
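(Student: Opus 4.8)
The plan is to chain the two polynomial-time procedures already available: first compute the "Zielonka tree" $\ZF$ of $\F$ from $\P$, and then read off $\mathfrak{m}_{\ZF}$ from that tree.

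Concretely, I would run Algorithm~\ref{algo: ZielonkaTree} on the graph of $\P$. Since $\P$ recognises the \kl(language){Muller language} $"L_\F"$, the correctness argument in the proof of Proposition~\ref{prop : MinimisationParityPoly} shows that the procedure returns exactly $\ZF$, after the preprocessing carried out there: restrict to an "ergodic" "strongly connected component" of the graph of $\P$ (which still recognises $"L_\F"$) and dualise if $\SS \notin \F$. The complexity analysis in that proof bounds the running time by $\O(d^2 c^2 n^4)$ with $n=|\P|$, $d$ the number of priorities of $\P$ and $c=|\SS|$, hence polynomial in the size of $\P$; in particular the output satisfies that $|\ZF|$ is polynomial in $|\P|$.

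Then I would invoke the previous Proposition: one bottom-up traversal of $\ZF$ evaluates $\mathfrak{m}_{\ZF}$ in time $\O(|\ZF|)$, following the recursive definition ($1$ at a "leaf", the maximum of the children's values at a rejecting node, the sum at an accepting one). By Proposition~\ref{prop: optimalMemoryDJW} (\cite{DJW1997memory}) this value equals $\mF$. Summing the two running times gives a bound polynomial in $|\P|$, as required.

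There is no genuine obstacle; the only point worth a line of justification is that the intermediate object $\ZF$ stays polynomial-sized. Besides following from the time bound above, this is immediate: by Proposition~\ref{prop: minimalParityZielonka} the number of leaves of $\ZF$ equals the size of a minimal parity automaton recognising $"L_\F"$, hence is at most $|\P|$; and since the label of every internal node strictly contains the label of each of its children, every root-to-leaf branch of $\ZF$ has length at most $|\GG|$, so $|\ZF| = \O(|\P|\cdot|\GG|)$.
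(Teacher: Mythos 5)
Your proposal is correct and matches the paper's (implicit) argument exactly: run Algorithm~\ref{algo: ZielonkaTree} on $\P$ to obtain $\ZF$ in polynomial time (as established in the proof of Proposition~\ref{prop : MinimisationParityPoly}), then evaluate $\mathfrak{m}_{\ZF}$ by the linear-time recursion and invoke Proposition~\ref{prop: optimalMemoryDJW}. The extra remark bounding $|\ZF|$ by $\O(|\P|\cdot|\GG|)$ is a sound and welcome sanity check, but not a departure from the paper's route.
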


	\begin{theorem}\label{theorem : ChromaticMemoryNP-hard}
	Given a positive integer $k>0$ and a Muller condition $\F$ represented as either: 
		\begin{enumerate}[a)]
			\item The Zielonka tree $\ZF$.
			\item A parity automaton recognising $"L_\F"$.
			\item A Rabin automaton recognising $"L_\F"$.
		\end{enumerate}
	The problem of deciding whether $\cmF\geq k$ (or equivalently, $\indmF \geq k$) is $\NPc$.
	\end{theorem}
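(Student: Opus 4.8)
\emph{Proof plan.} The idea is to deduce the statement from two results already in hand: Theorem~\ref{theorem : EquivRabin-ChromaticMemory}, which gives $\cmF = \indmF = \mr(L_\F)$ for every \kl(condition){Muller condition} $\F$, and the $\NP$-completeness of the minimisation of "Rabin" automata (Theorem~\ref{theorem: MinimisingRabinNP-comp}). I read the decision problem as: \emph{given an encoding of $\F$ and $k\in\NN$, decide whether $\cmF\le k$} (equivalently, $\indmF\le k$). By Theorem~\ref{theorem : EquivRabin-ChromaticMemory} this is exactly the problem of deciding whether $\mr(L_\F)\le k$, i.e.\ whether $"L_\F"$ is recognised by some "Rabin" automaton with at most $k$ states, and the whole proof reduces to checking that each of the three encodings of $\F$ is, in polynomial time, inter-translatable with such a "Rabin" automaton.

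\emph{Membership in $\NP$.} First I would normalise the input to a "Rabin" automaton. A "parity" automaton is already a "Rabin" automaton, since parity conditions are a special case of Rabin conditions; and from a "Zielonka tree" $\ZF$ one builds, in linear time and with exactly the set of "leaves" of $\ZF$ as state set, a "parity" automaton recognising $"L_\F"$ by Proposition~\ref{prop: minimalParityZielonka}. Given a "Rabin" automaton $\A$ with $\L(\A)="L_\F"$, a nondeterministic machine then guesses a "Rabin" automaton $\A'$ with at most $k$ states and checks $\L(\A')=\L(\A)$ in polynomial time via Proposition~\ref{Prop: equivalence Rabin in P}; by Theorem~\ref{theorem : EquivRabin-ChromaticMemory} such an $\A'$ exists if and only if $\cmF\le k$. (If $k$ is so large as to exceed the obvious bound on $\mr(L_\F)$ coming from the input, the answer is trivially yes.) Hence the problem lies in $\NP$ for all three encodings.

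\emph{$\NP$-hardness.} Here I would reuse the reduction from $\CN$ of Section~\ref{subsec: MinimisationRabin}. Given a "simple" "undirected" "graph" $G=(V,E)$ with $|V|\ge 3$ (the cases $|V|\le 2$ being trivial) and a positive integer $k$, let $\F_G\subseteq\P(V)$ be the \kl(condition){Muller condition} whose accepting sets are exactly the two-element sets $\{v,u\}$ with $(v,u)\in E$, so that the Muller language it defines is $"L_G"$. By Lemma~\ref{lemma : RabinForLG = ChromaticNumberG} and Theorem~\ref{theorem : EquivRabin-ChromaticMemory} we get $\mathfrak{mem}_{\mathit{chrom}}(\F_G)=\rlg="\chi(G)"$, so $"\chi(G)"\le k$ iff $\mathfrak{mem}_{\mathit{chrom}}(\F_G)\le k$. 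It then suffices to build, in polynomial time from $G$, each of the three encodings of $\F_G$: \emph{(c)} the "Rabin" automaton $\A_G$ of \kl(automata){size} $|V|$, which is constructed in time $\O(|E|\,|V|^2)$ earlier in Section~\ref{subsec: MinimisationRabin}; \emph{(a)} the "Zielonka tree" of $\F_G$, which is the tree whose root is labelled $V$ (a rejecting set, as $|V|\ge 3$), with one child per edge $(v,u)\in E$ labelled $\{v,u\}$ (an accepting set), and with the two "leaves" $\{v\}$ and $\{u\}$ below each such child --- a tree with $1+3|E|$ nodes, plainly computable in polynomial time; and \emph{(b)} a "parity" automaton for $"L_G"$, obtained by feeding the tree of \emph{(a)} to Proposition~\ref{prop: minimalParityZielonka}. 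Since $\CN$ remains $\NPc$ on graphs with at least three vertices, this yields $\NP$-hardness for each of the three encodings, and together with membership in $\NP$, $\NP$-completeness in each case.

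\emph{Where the difficulty lies.} Essentially all the mathematical content is packed into the already-established Theorems~\ref{theorem : EquivRabin-ChromaticMemory} and~\ref{theorem: MinimisingRabinNP-comp}; the only point that genuinely has to be verified is that the translations between the three encodings and a "Rabin" automaton for $"L_\F"$ are polynomial in \emph{both} directions --- in particular, that in the hardness reduction the "Zielonka tree" of $\F_G$ has size polynomial in $|G|$, which is exactly what the explicit description above provides. I do not anticipate any further obstacle.
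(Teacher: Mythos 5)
Your proposal is correct and follows essentially the same route as the paper's own proof: membership in $\NP$ by normalising any of the three inputs to a "Rabin" automaton (via Proposition~\ref{prop: minimalParityZielonka} for the Zielonka-tree case) and guessing an equivalent automaton checked with Proposition~\ref{Prop: equivalence Rabin in P}, and hardness by reusing the chromatic-number reduction of Section~\ref{subsec: MinimisationRabin} together with the explicit polynomial-size Zielonka tree of $\F_G$ and Lemma~\ref{lemma : RabinForLG = ChromaticNumberG} plus Theorem~\ref{theorem : EquivRabin-ChromaticMemory}. Your reading of the decision problem as ``$\cmF\le k$'' is also the one the paper's proof actually establishes, so there is no substantive divergence.
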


	\begin{proof}
		We first remark that by Theorem~\ref{theorem : EquivRabin-ChromaticMemory}, the size of a minimal "arena-independent memory" for $\F$ coincides with  the size of a minimal Rabin automaton for $L_\F$.
		
		We show that for the three representations the problem is in $\NP$. By Proposition~\ref{Prop: equivalence Rabin in P} we can find in non-deterministic polynomial time a minimal Rabin automaton recognising $"L_\F"$, if we are given as input a Rabin automaton for $"L_\F"$, (therefore, also if we are given a parity automaton as input). Since the Zielonka tree $\ZF$ allows us to produce in polynomial time a parity automaton for $L_\F$,  $\indmF$ can be computed in non-deterministic polynomial time in $|\ZF|$.
		
		The $\NP$-hardness part has been proven in Theorem~\ref{theorem: MinimisingRabinNP-comp} if the input is a Rabin automaton. If the input is the "Zielonka tree" $\ZF$ or a parity automaton, we show that the reduction presented in Section~\ref{subsec: MinimisationRabin} can also be applied heree. Given a "simple" "undirected" "graph" $G=(V,E)$, we consider the \kl(condition){Muller condition} over $V$ given by $\F_G = \{ \{v,u\}\subseteq V \; : \; (v,u)\in E \}$.
		It is verified that $"L_G" = L_{\F_G}$, and by Lemma~\ref{lemma : RabinForLG = ChromaticNumberG} the size of a minimal Rabin automaton for $"L_G"$ (and therefore the size of a minimal "arena-independent memory" for $\F$) coincides with the "chromatic number" of $G$. It is enough to show that we can build the "Zielonka tree" $\Z_{\F_G}$ and a parity automaton for $L_{\F_G}$ in polynomial time in the size of $G$. The Zielonka tree of  $\F_G$ is the following one: for each pair of vertices $\{v,u\} \subseteq V$ such that $(v,u)\in E$, consider the "tree" $T_{\{v,u\}} = \Big\langle \{v,u\}, \big\langle \langle \{v\} , \langle \emptyset \rangle  \rangle, \langle \{u\}, \langle \emptyset\rangle\rangle \big\rangle \Big\rangle $ (of height $2$). Then, the Zielonka tree of $\F_G$ is given by:
		\[ \Z_{\F_G} = \langle V , \langle T_{e_1}, \dots, T_{e_k}\rangle \rangle, \]
		where $e_1, \dots e_k$ is an enumeration of the unordered pairs of vertices forming an edge. We can build this tree in $\O(|V| + |E|)$.	
		This Zielonka tree gives us a parity automaton for $L_G$ in linear time (Proposition~\ref{prop: minimalParityZielonka}).	
	\end{proof}

\subsection{Chromatic memories require more states than general ones}\label{subset: KopczynskiConjecture}
In his PhD Thesis \cite{Kopczynski2006Half,Kopczynski2008PhD}, Kopczyński raised the question of whether for every winning condition its "general memory requirements" coincide with its "chromatic memory requirements". In this section we provide an example of a game that Eve can win, but she can use a strictly smaller "memory structure" to do so if she is allowed to employ a general memory (non-chromatic). 

\begin{proposition}\label{prop : chromatic_strictly-greater}
	For each integer $n\geq 2$, there exists a set of colours $\GG_n$ and a \kl(condition){Muller condition} $\F_n$ over $\GG_n$ such that for any $\F_n$-game won by Eve, she can win it using a "memory" of \kl(memory){size} $2$, but there is an $\F_n$-game $\G$
	where Eve needs a "chromatic memory" of size $n$ to win.
	Moreover, the game $\G$ can be chosen to be "$\varepsilon$-free".
\end{proposition}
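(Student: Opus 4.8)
The plan is to instantiate $\F_n$ as the \kl(condition){Muller condition} over $\GG_n=\{1,\dots ,n\}$ associated with the complete graph $K_n$, namely
\[ \F_n=\{A\subseteq \GG_n \: : \: |A|=2\}, \]
so that $L_{\F_n}$ is exactly the language $L_{K_n}$ from Section~\ref{subsec: MinimisationRabin} (in both cases a word $w$ belongs to the language precisely when $\minf(w)$ has two elements). With this choice, the equality $\mathfrak{mem}_{\mathit{gen}}(\F_n)=2$ and the equality $\mathfrak{mem}_{\mathit{chrom}}(\F_n)=\mathfrak{mem}_{\mathit{chrom}}^{\ee\hyphen\mathrm{free}}(\F_n)=n$ will each be obtained by reading off one side of a result already established.

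First I would compute the "Zielonka tree" $\Z_{\F_n}$. For $n\geq 3$ its root $\GG_n$ is rejecting and its children are the subtrees $T_{\{v,u\}}=\langle \{v,u\},\langle \langle \{v\},\langle \emptyset\rangle\rangle,\langle \{u\},\langle \emptyset\rangle\rangle\rangle\rangle$ of height $2$, one for each of the $\binom{n}{2}$ pairs; for $n=2$ the tree is $T_{\{1,2\}}$ itself. Applying the recursive definition of $\mathfrak{m}_{\Z_{\F_n}}$ — a rejecting non-leaf yields the maximum over its children, an accepting non-leaf the sum — each $T_{\{v,u\}}$ contributes $1+1=2$, so $\mathfrak{m}_{\Z_{\F_n}}=2$. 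By Proposition~\ref{prop: optimalMemoryDJW} this means $\mathfrak{mem}_{\mathit{gen}}(\F_n)=2$: every $\F_n$-game won by Eve can be won with a general memory of size $2$ (in particular over $\ee$-free games as well).

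Next I would invoke Lemma~\ref{lemma : RabinForLG = ChromaticNumberG} with $G=K_n$: the size of a minimal "Rabin" automaton recognising $L_{K_n}=L_{\F_n}$ equals $\chi(K_n)=n$. By Theorem~\ref{theorem : EquivRabin-ChromaticMemory}, this common value also equals $\mathfrak{mem}_{\mathit{chrom}}(\F_n)$ and $\mathfrak{mem}_{\mathit{chrom}}^{\ee\hyphen\mathrm{free}}(\F_n)$. Since $\mathfrak{mem}_{\mathit{chrom}}^{\ee\hyphen\mathrm{free}}(\F_n)=n$ and, by Lemma~\ref{lemma : Chromatic=Independent}, this quantity is witnessed by a single $\ee$-free $\F_n$-game — the construction in that lemma adds no $\ee$-transitions — there is an $\ee$-free $\F_n$-game $\G$ won by Eve on which no chromatic memory of size smaller than $n$ sets a winning strategy; equivalently, Eve needs a chromatic memory of size $n$ to win $\G$. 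Together with the previous paragraph, this proves the proposition.

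There is no genuine obstacle in this argument: it merely repackages Theorem~\ref{theorem : EquivRabin-ChromaticMemory}, Lemma~\ref{lemma : RabinForLG = ChromaticNumberG} and the characterisation of \cite{DJW1997memory}. The only points needing care are the (routine) Zielonka-tree computation yielding $\mathfrak{m}_{\Z_{\F_n}}=2$ and the bookkeeping of the degenerate case $n=2$ (where both quantities equal $2$). If one preferred an explicit $\ee$-free witness game in place of the abstract one provided by Lemma~\ref{lemma : Chromatic=Independent}, the slightly delicate point would be that a single Eve-controlled gadget cannot force chromatic memory $n$ — Adam must be able to probe which colour Eve's memory is currently prepared to avoid — and it is exactly the correspondence in the proof of Theorem~\ref{theorem : EquivRabin-ChromaticMemory} that one would unwind to build such a game.
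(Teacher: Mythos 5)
Your proposal is correct and follows essentially the same route as the paper: the same condition $\F_n=\{A\subseteq\GG_n : |A|=2\}$, the Zielonka-tree/\cite{DJW1997memory} computation giving $\mathfrak{mem}_{\mathit{gen}}(\F_n)=2$, and Lemma~\ref{lemma : RabinForLG = ChromaticNumberG} applied to the clique $K_n$ combined with Theorem~\ref{theorem : EquivRabin-ChromaticMemory} to get chromatic memory $n$, including over $\ee$-free games. The appeal to Lemma~\ref{lemma : Chromatic=Independent} for a single witness game is harmless but not needed, since $\mathfrak{mem}_{\mathit{chrom}}^{\ee\hyphen \mathrm{free}}(\F_n)=n$ already yields, by definition, an $\ee$-free game where no chromatic memory of size $n-1$ suffices.
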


\begin{proof}
	Let $\GG_n=\{1,2, \dots, n\}$ be a set of $n$ colours, and let us define the Muller condition $\F_n$ as:
	\[ \F_n = \{ A\subseteq \GG_n \; : \; |A| = 2\}. \]
	
	The "Zielonka tree" of $\F_n$ is depicted in Figure \ref{Fig : ZielonkaTree}.
	
	\begin{figure}[ht]
		\centering
		\begin{tikzpicture}[square/.style={regular polygon,regular polygon sides=4}, align=center,node distance=2cm,inner sep=3pt]
		
		\node at (0,2.6) [draw, rectangle, minimum height=0.9cm, minimum width = 1.5cm] (R) {$1,2, \dots, n$};
		
		\node at (-3,1.3) [draw, ellipse,minimum height=0.8cm, minimum width = 1.3cm] (0) {$1,2$};
		\node at (-1,1.3) [draw, ellipse, ,minimum height=0.8cm, minimum width = 1.3cm] (1) {$1, 3$};
		\node at (0.6,1.3)  (dots1) {$\dots$};
		\node at (3,1.3) [draw, ellipse, ,minimum height=0.8cm, minimum width = 1.3cm] (2) {$n-1, n$};

		\node at (-3.5,0) [draw, square,minimum width=0.9cm] (00) {$1$};
		\node at (-2.5,0) [draw, square,minimum width=0.9cm] (01) {$2$};
		\node at (-1.5,0) [draw, square,minimum width=0.9cm] (10) {$1$};
		\node at (-0.5,0) [draw, square,minimum width=0.9cm] (11) {$3$};
		\node at (0.9,0)  (dots2) {$\dots$};
		\node at (2.5,0) [draw, rectangle,minimum width=0.9cm, minimum height=0.64cm] (20) {$n - 1$};
		\node at (3.7,0) [draw, square,minimum width=0.9cm] (21) {$n$};
		
		\draw   
		(R) edge (0)
		(R) edge (1)
		(R) edge (2)
		
		(0) edge (00)
		(0) edge (01)
		(1) edge (10)
		(1) edge (11)
		(2) edge (20)
		(2) edge (21);
		
		\end{tikzpicture}
		\caption{Zielonka tree for the condition $\F_{n}=\{ A\subseteq \{1,2,\dots,n\} \: : \: |A|=2\}$. Square nodes are associated with rejecting sets ($A\notin \F_n)$ and round nodes with accepting ones ($A\in \F_n)$.}
		\label{Fig : ZielonkaTree}
	\end{figure}
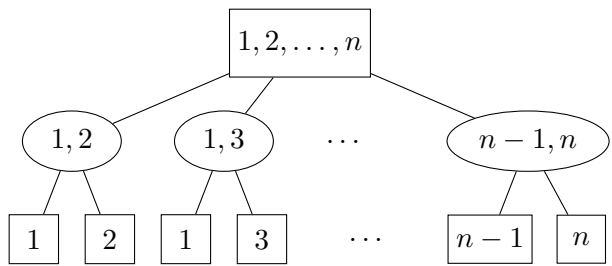
	
	The characterisation of the "memory requirements" of Muller conditions from Proposition~\ref{prop: optimalMemoryDJW} gives that $\mFn = 2$.
	
	On the other hand, the language $"L_{\F_n}"$ associated to this condition coincides with the language $"L_G"$ (defined in Section~\ref{subsec: MinimisationRabin}) associated to a "graph" $G$ that is a clique of size $n$. By Lemma~\ref{lemma : RabinForLG = ChromaticNumberG}, the size of a minimal Rabin automaton recognising $L_{\F_n}$ (and therefore, by Theorem~\ref{theorem : EquivRabin-ChromaticMemory}, the "chromatic memory requirements" of $\F_n$) coincides with the "chromatic number" of $G$. Since $G$ is a clique of size $n$, its "chromatic number" is $n$.
\end{proof}
		
	We give next a concrete example of a game where $\mF < \cmF$.
	\begin{example}\label{example:gameConjKopczynski}
	We consider the set of colours $\GG= \{a,b,c\}$ and the Muller condition from the previous proof: $ \F = \{ A\subseteq \GG \; : \; |A| = 2\}.$ As we have discussed, $\mF = 2$.
	
	 We are going to define a game $\G$ won by Eve for which a minimal "chromatic memory" providing a "winning strategy" cannot have less than $3$ states.
	
	Let $\G$ 
	 be the $\F$-game shown in Figure~\ref{Fig : GameG}. Round vertices represent vertices controlled by Eve, and square ones are controlled by Adam.

	\begin{figure}[ht]
		\centering 
		\begin{tikzpicture}[square/.style={regular polygon,regular polygon sides=4}, state/.style={circle,draw, minimum size=0.7cm}, align=center,node distance=2cm,inner sep=0pt]
		
		\node at (0,0) [state, initial, square, minimum size=0.9cm] (0) {$v_0$};
		\node at (1.75,2) [state] (1) {$v_1$};
		\node at (1.75,0) [state] (2) {$v_2$};
		\node at (1.75,-2) [state] (3) {$v_3$};
		
		\node at (3.5,2.6) [state] (a1) {};
		\node at (3.5,1.4) [state] (b1) {};
		
		\node at (3.5,0.6) [state] (b2) {};
		\node at (3.5,-0.6) [state] (c2) {};
		
		\node at (3.5,-1.4) [state] (a3) {};
		\node at (3.5,-2.6) [state] (c3) {};

		\path[->] 
		(0)  edge [] 	node[above] {$a$ }   (1)
		(0)  edge [] 	node[above] {$a$ }   (2)
		(0)  edge []  node[left] {$a$ }   (3)
		
		(1)  edge [in=170,out=50]  node[above, pos=0.4] {$a$ }   (a1)
		(a1)  edge [in=10,out=210]  node[above, pos=0.4] {$a$ }   (1)
		(1)  edge [in=180,out=-50] 	node[below] {$b$ }   (b1)
		(b1)  edge [in=-10,out=150] 	node[above, pos =0.3] {$b$ }   (1)
		
		(2)  edge [in=170,out=50]  node[above, pos=0.4] {$b$ }   (b2)
		(b2)  edge [in=10,out=210]  node[above, pos=0.4] {$b$ }   (2)
		(2)  edge [in=180,out=-50] 	node[below] {$c$ }   (c2)
		(c2)  edge [in=-10,out=150] 	node[above, pos =0.3] {$c$ }   (2)
		
		(3)  edge [in=170,out=50]  node[above, pos=0.4] {$a$ }   (a3)
		(a3)  edge [in=10,out=210]  node[above, pos=0.4] {$a$ }   (3)
		(3)  edge [in=180,out=-50] 	node[below] {$c$ }   (c3)
		(c3)  edge [in=-10,out=150] 	node[above, pos =0.3] {$c$ }   (3);				
		
		\end{tikzpicture}
		\caption{The game $\G$.}
		\label{Fig : GameG}
	\end{figure}
	
	Suppose that there is a "chromatic memory structure" for $\G$ of size 2, $\M_\G=(M=\{m_0, m_1\}, m_0, \mu)$, together with a function $\nextmove:M\times V_E \rightarrow E$ setting a "winning strategy" for Eve. For winning this game, the number of different colours produced infinitely often has to be exactly two,
	so the transitions of the memory must ensure that, whenever one outgoing edge from $v_i$ is chosen, when we come back  to $v_i$ (after reading two colours in $\GG$) we have changed of state in the memory. Therefore, for each colour $x\in \{a,b,c\}$ we must have that $\mu(m_0,x)=\mu(m_{1},x)$, what implies that there are two different colours (we will suppose for simplicity that they are $a$ and $b$) such that $\mu(m_j,a)=\mu(m_{j},b)=m$, for $j\in \{1,2\}$ and a state $m\in \{m_0, m_1\}$. However, this implies that whenever a partial play ends up in $v_1$, the memory will be in the state $m$, and Eve will always choose to play the edge given by $\nextmove(m,v_1)$, producing a loosing play.		
\end{example}

	\section{Conclusions and open questions}
	In this work, we have fully characterised the "chromatic memory requirements" of \kl(condition){Muller conditions}, proving that "arena-independent" memory structures for a given Muller condition correspond to "Rabin" automata recognising that condition. We have also answered several open questions concerning the "memory requirements" of Muller conditions when restricting ourselves to "chromatic memories" or to "$\varepsilon$-free" games.
We have proven the $\NP$-completeness of the minimisation of transition-based Rabin automata and that we can minimise parity automata recognising \kl(language){Muller languages} in polynomial time, advancing in our understanding on the complexity of decision problems related to transition-based automata.

The question of whether we can minimise transition-based parity or Büchi automata in polynomial time remains open. The contrast between the results of Abu Radi and Kupferman \cite{AbuRadiKupferman19Minimizing, AbuRadiKupferman20Canonicity}, showing that we can minimise GFG transition-based co-Büchi automata in polynomial time and those of Schewe \cite{Schewe20MinimisingGFG}, showing that minimising GFG state-based co-Büchi automata is $\NPc$; as well as the contrast between \Cref{theorem: MinimisingRabinNP-comp} and \Cref{prop : MinimisationParityPoly}, make of this question a very intriguing one.

Regarding the memory requirements of games, we have shown that forbidding $\varepsilon$-transitions might cause a reduction in the memory requirements of Muller conditions. However, the question raised by Kopczyński in \cite{Kopczynski2006Half} remains open: are there prefix-independent winning conditions that are half-positional when restricted to $\varepsilon$-free games, but not when allowing $\varepsilon$-transitions?

\subsection*{Acknowledgements}
I would like to thank Alexandre Blanché for pointing me to the chromatic number problem. 
I also want to thank Bader Abu Radi, Thomas Colcombet, Nathanaël Fijalkow, Orna Kupferman, Karoliina Lehtinen and Nir Piterman for interesting discussions on the minimisation of transition-based automata, a problem introduced to us by Orna Kupferman.
Finally, I warmly thank Thomas Colcombet, Nathanaël Fijalkow and Igor Walukiewicz for their help in the preparation of this paper, their insightful comments and for introducing me to the different problems concerning the memory requirements studied in this paper.		
	
\bibliographystyle{alpha} 
\bibliography{bibMinimisation}

\newcommand{\etalchar}[1]{$^{#1}$}
\begin{thebibliography}{BRO{\etalchar{+}}20}

\bibitem[AK19]{AbuRadiKupferman19Minimizing}
Bader {Abu Radi} and Orna Kupferman.
\newblock Minimizing {GFG} transition-based automata.
\newblock In {\em ICALP}, volume 132, pages 100:1--100:16, 2019.

\bibitem[AK20]{AbuRadiKupferman20Canonicity}
Bader {Abu Radi} and Orna Kupferman.
\newblock Canonicity in {GFG} and transition-based automata.
\newblock In {\em GandALF}, volume 326, pages 199--215, 2020.

\bibitem[BRO{\etalchar{+}}20]{BRORV20FiniteMemory}
Patricia Bouyer, St{\'{e}}phane~Le Roux, Youssouf Oualhadj, Mickael Randour,
  and Pierre Vandenhove.
\newblock Games where you can play optimally with arena-independent finite
  memory.
\newblock In {\em CONCUR}, volume 171 of {\em LIPIcs}, pages 24:1--24:22, 2020.

\bibitem[CCF21]{CCF21Optimal}
Antonio Casares, Thomas Colcombet, and Nathana\"{e}l Fijalkow.
\newblock Optimal transformations of games and automata using {M}uller
  conditions.
\newblock In {\em ICALP}, volume 198, pages 123:1--123:14, 2021.

\bibitem[CDK93]{ClarkeDK93Unified}
Edmund~M. Clarke, I.~A. Draghicescu, and Robert~P. Kurshan.
\newblock A unified approch for showing language inclusion and equivalence
  between various types of omega-automata.
\newblock {\em Inf. Process. Lett.}, 46(6):301--308, 1993.

\bibitem[CFH14]{ColcombetFH14PlayingSafe}
Thomas Colcombet, Nathana{\"{e}}l Fijalkow, and Florian Horn.
\newblock Playing safe.
\newblock In {\em FSTTCS}, volume~29, pages 379--390, 2014.

\bibitem[CHVB18]{PitermanPnueli2018Handbook}
Edmund~M. Clarke, Thomas~A. Henzinger, Helmut Veith, and Roderick Bloem,
  editors.
\newblock {\em Handbook of Model Checking}.
\newblock Springer, Cham, 2018.

\bibitem[CM99]{CartonMaceiras99RabinIndex}
Olivier Carton and Ram{\'{o}}n Maceiras.
\newblock Computing the {R}abin index of a parity automaton.
\newblock {\em RAIRO}, pages 495--506, 1999.

\bibitem[CN06]{ColcombetN2006PositionalEdge}
Thomas Colcombet and Damian Niwiński.
\newblock On the positional determinacy of edge-labeled games.
\newblock {\em Theoretical Computer Science}, 352(1):190--196, 2006.

\bibitem[CZ09]{Colcombetz2009tight}
Thomas Colcombet and Konrad Zdanowski.
\newblock A tight lower bound for determinization of transition labeled
  {B}{\"u}chi automata.
\newblock In {\em ICALP}, pages 151--162, 2009.

\bibitem[DJW97]{DJW1997memory}
Stefan Dziembowski, Marcin Jurdzi{ń}ski, and Igor Walukiewicz.
\newblock How much memory is needed to win infinite games?
\newblock In {\em LICS}, pages 99--110, 1997.

\bibitem[GL02]{GiannakopoulouLerda2002Transitions}
Dimitra Giannakopoulou and Flavio Lerda.
\newblock From states to transitions: Improving translation of {LTL} formulae
  to {B}{\"u}chi automata.
\newblock In {\em FORTE}, pages 308--326, 2002.

\bibitem[GTW02]{GradelThomasWilke2002AutLogGames}
Erich Grädel, Wolfgang Thomas, and Thomas Wilke, editors.
\newblock {\em Automata Logics, and Infinite Games}.
\newblock Springer, Berlin, Heidelberg, 2002.

\bibitem[GZ05]{GimbertZielonka2005Memory}
Hugo Gimbert and Wieslaw Zielonka.
\newblock Games where you can play optimally without any memory.
\newblock In {\em CONCUR}, volume 3653 of {\em Lecture Notes in Computer
  Science}, pages 428--442, 2005.

\bibitem[Hop71]{Hopcroft71nLognAlgorithm}
John~E. Hopcroft.
\newblock An n log n algorithm for minimizing states in a finite automaton.
\newblock Technical report, Stanford University, 1971.

\bibitem[Kar72]{Karp72Reducibility}
Richard~M. Karp.
\newblock {\em Reducibility among Combinatorial Problems}, pages 85--103.
\newblock The {IBM} Research Symposia Series. Springer US, 1972.

\bibitem[Kla94]{Klarlund94Determinacy}
Nils Klarlund.
\newblock Progress measures, immediate determinacy, and a subset construction
  for tree automata.
\newblock {\em Annals of Pure and Applied Logic}, 69(2):243--268, 1994.

\bibitem[Kop06]{Kopczynski2006Half}
Eryk Kopczy{\'n}ski.
\newblock Half-positional determinacy of infinite games.
\newblock In {\em ICALP}, pages 336--347, 2006.

\bibitem[Kop08]{Kopczynski2008PhD}
Eryk Kopczy{\'n}ski.
\newblock Half-positional determinacy of infite games. {P}h{D} {T}hesis, 2008.

\bibitem[L{\"{o}}d01]{Loding01WeakMinimisation}
Christof L{\"{o}}ding.
\newblock Efficient minimization of deterministic weak omega-automata.
\newblock {\em Inf. Process. Lett.}, 79(3):105--109, 2001.

\bibitem[MS17]{MullerSickert17LTLtoDeterministic}
David M{\"{u}}ller and Salomon Sickert.
\newblock {LTL} to deterministic {E}merson-{L}ei automata.
\newblock In {\em GandALF}, pages 180--194, 2017.

\bibitem[MS21]{MeyerSickert21OptimalPractical}
Philipp Meyer and Salomon Sickert.
\newblock On the optimal and practical conversion of {E}merson-{L}ei automata
  into parity automata.
\newblock {\em Personal Communication}, 2021.

\bibitem[PR89]{PR89Synthesis}
Amir Pnueli and Roni Rosner.
\newblock On the synthesis of a reactive module.
\newblock In {\em POPL}, page 179–190, 1989.

\bibitem[Saf88]{Safra1988onthecomplexity}
Schmuel Safra.
\newblock On the complexity of $\omega$-automata.
\newblock In {\em FOCS}, page 319–327, 1988.

\bibitem[Sch09]{Schewe2009tighter}
Sven Schewe.
\newblock Tighter bounds for the determinisation of {B}{\"u}chi automata.
\newblock In {\em FoSSaCS}, pages 167--181, 2009.

\bibitem[Sch10]{Schewe10MinimisingNPComplete}
Sven Schewe.
\newblock Beyond hyper-minimisation---minimising {DBA}s and {DPA}s is
  {NP}-complete.
\newblock In {\em FSTTCS}, volume~8 of {\em LIPIcs}, pages 400--411, 2010.

\bibitem[Sch20]{Schewe20MinimisingGFG}
Sven Schewe.
\newblock Minimising {G}ood-{F}or-{G}ames automata is {NP}-complete.
\newblock In {\em FSTTCS}, volume 182, pages 56:1--56:13, 2020.

\bibitem[Sha81]{Sharir81StrongConnectivity}
Micha Sharir.
\newblock A strong-connectivity algorithm and its applications in data flow
  analysis.
\newblock {\em Computers and Mathematics with Applications}, 7(1):67--72, 1981.

\bibitem[Tar72]{Tarjan72DepthFirst}
Robert Tarjan.
\newblock Depth first search and linear graph algorithms.
\newblock {\em Siam Journal On Computing}, 1(2), 1972.

\bibitem[Wag79]{Wagner1979omega}
Klaus Wagner.
\newblock On $\omega$-regular sets.
\newblock {\em Information and control}, 43(2):123--177, 1979.

\bibitem[Zie98]{Zielonka1998infinite}
Wies{\l}aw Zielonka.
\newblock Infinite games on finitely coloured graphs with applications to
  automata on infinite trees.
\newblock {\em Theoretical Computer Science}, 200(1-2):135--183, 1998.

\end{thebibliography}
	
\end{document}